\newtheorem{lemma}{Lemma}[section]
\newtheorem{corollary}[lemma]{Corollary}
\newtheorem{theorem}[lemma]{Theorem}
\newtheorem{prop}[lemma]{Proposition}
\theoremstyle{definition} 
\newtheorem{definition}[lemma]{Definition}
\theoremstyle{remark}
\newtheorem{example}[lemma]{Example}
\newcommand{\<}{\langle}
\renewcommand{\>}{\rangle}
\newcommand{\Ker}{\operatorname{Ker}}
\renewcommand{\Im}{\operatorname{Im}}
\begin{document}

\title{Spectral order unit spaces and JB-algebras}

    \author{Anna Jen\v cov\'a and
 Sylvia Pulmannov\'{a}{\footnote{ Mathematical Institute, Slovak Academy of
Sciences, \v Stef\'anikova 49, SK-814 73 Bratislava, Slovakia;
jenca@mat.savba.sk, pulmann@mat.savba.sk. }}}

\date{}

\maketitle

\begin{abstract}
Order unit spaces with comparability and spectrality properties as introduced by Foulis are studied.
We define continuous functional calculus for order unit spaces with the comparability
property and Borel functional calculus for spectral order unit spaces. Applying the
conditions of Alfsen and Schultz, we characterize order unit spaces with comparability
property
that are JB-algebras. We also prove a  characterization of Rickart JB-algebras as those
JB-algebras
for which every maximal associative subalgebra is monotone $\sigma$-complete, extending an
analogous 
result of Sait\^o and Wright for C*-algebras.

\end{abstract}
\medskip

\noindent\textbf{Keywords:} order unit space, spectrality, functional calculus,
JB-algebra, generalized spin factor

\section{Introduction} \label{sc:Intro}

We continue the study of spectrality of  order unit spaces in the sense
introduced by Foulis (\cite{ Fcomgroup, Fcompog, Fgc, Funig, Forc, FPmonot, FP}), started
in \cite{JP1}, where  we compared
this approach with the well known notion of spectral duality of Alfsen and Schultz
\cite{AS1, AS2}. It was proved that the purely algebraic approach of Foulis is strictly
more general than that of Alfsen and Schultz which is based on the geometry of dual pairs of order
unit and base normed spaces. This was illustrated on the examples of JB-algebras and of
order unit spaces constructed from Banach spaces. Moreover, the structure of spectral
order unit spaces was described in detail.

Spectrality in the sense of Foulis requires the existence of a distinguished family of
positive  idempotent mappings,  called a compression base. A compression base is
introduced as a generalization of the set of compressions on a von Neumann algebra
$\mathcal A$, defined by a projection $p\in \mathcal A$ as
\[
a\mapsto pap,\qquad a\in\mathcal A.
\]
Spectrality of the order unit space is then characterized by two properties of the chosen
compression base: the
comparability property (that can be interpreted as existence of a unique orthogonal
decomposition for  each element into its positive and negative parts) and the projection
cover property (generalizing the existence of supports of self-adjoint elements in
$\mathcal A$).

In the present paper, we show that for an order unit space $A$  with the comparability
property, we can define a continuous functional calculus, which can be extended to Borel
functional calculus if $A$ is spectral. Further, we characterize the order unit spaces
with comparability property that are JB-algebras. For order unit spaces in spectral
duality, Alfsen and Shultz proved several equivalent conditions for being JB-algebras.
We show that those conditions, formulated in terms of compressions and by functional
calculus, can be used also in our more general
situation. This result is applied to the example of order unit spaces obtained from
Banach spaces (which we here call generalized spin factors) and it is proved that such an
order unit space is a JB-algebra if and only if it is a spin factor (that is, the underlying Banach space is a Hilbert
space).

In the last section, we concentrate on spectral JB-algebras. It was proved in \cite{JP1}
that a JB-algebra is (Foulis) spectral if and only if it is Rickart (recall that a
JB-algebra is Alfsen-Schultz spectral if and only if it is a JBW-algebra). Rickart
JB-algebras were introduced as a generalization of Rickart C*-algebras \cite{AyArz, Arz}. Using recent
results in \cite{Wet2}, we prove for JB-algebras an
analogue of the following characterization by Sait\^o and Wright \cite{SW}: A C*-algebra
is Rickart if and only if every  maximal commutative self-adjoint subalgebra is monotone
$\sigma$-complete.

\section{Spectrality in order unit spaces}

Recall that an \emph{order unit space} is an archimedean partially ordered real vector space with a distinguished order unit. An order unit space will be denoted by a triple
$(A, A^+,1)$, where $A^+:=\{ a\in A: 0\leq a\}$ is the positive cone and $1\in A^+$ denotes the order unit. The unit interval in $A$ will be denoted by $E:=\{ e\in A: 0\leq e\leq 1\}$.
Notice that $E$ is an effect algebra in the sense of \cite{FoBe}.
The \emph{order unit norm}  $\|. \|_1$ on $A$ is defined by
\[
\| a\|_1:= \inf \{ \lambda \in {\mathbb R}^+: -\lambda \leq a \leq \lambda \}, \  a\in A.
\]

The theory of spectral duality in order unit spaces was developed by Alfsen and Shultz, \cite{AS1, AS2}. In this paper, we will deal with
another approach to spectrality developed in \cite{FP}, based on the works  by Foulis \cite{Fcomgroup, Fcompog, Fgc, Funig, FPmonot, Forc}.
A comparison of these two approaches can be found in \cite{JP1}. In the sequel we briefly describe some details of the latter approach.

\begin{definition}\label{de:comprous} A positive linear mapping $J: A\to A$ is a \emph{compression with focus } $p$ on $A$ if for all $e\in E$,
\begin{enumerate}
\item[(F1)] $J(1)=p\in E$ (that is, $J$ is normalized),
\item[(F2)] $e\leq p \ \implies \ J(e)=e$,
\item[(F3))] $J(e)=0\ \implies \ e\leq 1-p$.
\end{enumerate}
If $J$ satisfies (F1) and (F2) but not necessarily  (F3), we say that $J$ is a  \emph{retraction}.
\end{definition}

Note that the maps defined above were called \emph{F-compressions} in \cite{JP1}, in order to
distinguish them from the stronger notion of compressions by Alfsen and Schultz. Since we
will deal with only one type of maps, we will call them compressions throughout this work.

Let us denote
\begin{equation}
\Ker^+(J):=\{ a\in A^+: J(a)=0\},\quad  \Im^+(J):=\{ a\in A^+: J(a)=a\}.
\end{equation}

The compressions $J$ and $J'$ are \emph{complementary} if $\Ker^+(J)=\Im^+(J')$ and $\Ker^+(J')=\Im^+(J)$.
Let $J(1)=p$ and $J'(1)=p'$. It turns out that $J$ and $J'$ are complementary iff $p'=1-p$, \cite[Lemma 3.6]{JP1}.

Recall that an element  $p\in E$ is \emph{sharp} if $p\wedge(1-p)=0$ (i.e., if $x\leq p, x\leq 1-p$ then $x=0$), and an element $p\in E$ is \emph{principal} if $e,f\leq p$ and
$e+f\leq 1$, then $e+f\leq p$. It is easy to see that a principal element is sharp.

\begin{lemma}\label{le:foc} {\rm  \cite[Lemma 3.7]{JP1}}. Let $p$ be the focus of a retraction $J$. Then $p$ is principal.
\end{lemma}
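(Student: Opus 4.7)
The plan is to verify the definition of principal directly using only the two defining properties (F1) and (F2) together with positivity and linearity of $J$. Suppose $e,f\in E$ satisfy $e\leq p$, $f\leq p$, and $e+f\leq 1$, so that in particular $e+f\in E$.

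First, apply (F2) separately to $e$ and $f$ to get $J(e)=e$ and $J(f)=f$, and use linearity to conclude $J(e+f)=e+f$. Next, exploit the assumption $e+f\leq 1$: since $1-(e+f)\geq 0$ and $J$ is positive, we obtain $J(e+f)\leq J(1)=p$ by (F1). Chaining these two facts gives $e+f=J(e+f)\leq p$, which is exactly the principal property.

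The argument is essentially routine once one notices that (F3) plays no role whatsoever: only the fixed-point condition (F2) for elements below the focus, the normalization $J(1)=p$, and the standing assumption that $J$ is a positive linear map are needed. There is no real obstacle to identify; the statement is a clean consequence of the definition of retraction, so the entire proof fits in a few lines.
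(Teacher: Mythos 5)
Your proof is correct and is the natural argument: $J(e)=e$, $J(f)=f$ by (F2), linearity gives $J(e+f)=e+f$, and positivity with (F1) gives $J(e+f)\leq J(1)=p$; this is the same routine verification as in the cited source (the paper itself only quotes \cite[Lemma 3.7]{JP1} without reproducing the proof). Your observation that (F3) is not needed is accurate, which is precisely why the lemma is stated for retractions rather than only for compressions.
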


Elements $a,b$ in an effect algebra $E$ are \emph{Mackey compatible} (denoted $a\leftrightarrow b$) in $E$, if there are elements $c,a_1,b_1\in E$ such that $c+a_1+b_1\in E$ and $a=c+a_1, b=c+b_1$.
If $F$ is a subset of $E$, we say that $a,b\in F$ are Mackey compatible in $F$, if they are compatible in $E$, and  elements $c,a_1,b_1$ exist in $F$.

Let $E$ be an effect algebra and let $P$ be a subalgebra of $E$. Then $P$ is a \emph{normal subalgebra} if for all $d,e,f\in E$ such that
$d+e+f\leq 1$ and $d+e, d+f\in P$ we have $d\in P$. Notice that if $P$ is normal, then elements $p,q$ in $P$ are Mackey compatible in $E$ if and only if they are Mackey compatible in $P$.

\begin{definition}\label{de:comprba} A \emph{compression base} for $A$ is a family $(J_p)_{p\in P}$ of compressions on $A$ , indexed by their own foci,
such that $P$ is a normal subalgebra of $E$ and whenever $p,q,r\in P$ and $p+q+r\leq 1$, then $J_{p+r}\circ J_{q+r}=J_r$. Elements of $P$ will be called \emph{projections}.
\end{definition}

\begin{example}\label{ex:cstar}
An important example is the order unit space $A=B_{sa}(\mathcal H)$ of bounded self-adjoint operators on a Hilbert space
$\mathcal H$, or, more generally, the self-adjoint part $A=\mathcal A_{sa}$ of a
C*-algebra $\mathcal A$, with the set of
maps $U_p: a\mapsto pap$ for a projection $p\in P(\mathcal H)$. It can be seen that  each
$U_p$ is a compression with focus $p$ and $(U_p)_{p\in
P(\mathcal H)}$ is a compression base. Moreover, any compression on $A$ has this form
for some projection $p$, \cite{Fcompog}.

\end{example}

In the sequel, we fix a compression base $(J_p)_{p\in P}$ for $A$. Notice that since $P$ is a subalgebra, any $J_p$ has a (fixed) complementary compression $J_{1-p}$.
Using the fact that all projections are principal elements, it is easily seen, that if $p,q\in P$ and $p+q\leq 1$, then $p+q=p\vee q$, so that $P$ with the orthocomplementation
$p\mapsto 1-p$ is an \emph{orthomodular poset} (OMP) \cite{PtPu, Harding}. It follows that if $p,q\in P$ are Mackey compatible, then $p\wedge q$ and $p\vee q$ exist in $P$.
It was shown that $P$ is moreover a {\emph regular} OMP, which means that any pairwise Mackey compatible subset in $P$ is contained in a Boolean subalgebra of $P$ \cite{Harding}. A maximal subset of pairwise
Mackey compatible elements is a maximal Boolean subalgebra in $P$ and is called a \emph{block} of $P$. The set $P$ is covered by its blocks.

The following notion for $p\in P$ and $a\in A$ was introduced in  \cite[Def. 1.5]{FP}.

\begin{definition}\label{decompat} For $p\in P$ and $a\in A$, we say that $a$ \emph{ commutes} with $p$ if
\[
a=J_p(a)+J_{1-p}(a).
\]
\end{definition}

The set of all $a$ commuting with $p\in P$ will be denoted by $C(p)$ and called the \emph{commutant} of $p$.
For a subset $Q\subseteq P$, we denote $C(Q):=\bigcap_{p\in Q} C(p)$.
If $B\subseteq P$ is a block of $P$, the set $C(B)$ is called a \emph{C-block} of $A$.

The following facts are easily checked (see also \cite[Lemma 1.3]{FP}).

\begin{lemma}\label{le:commut} Let $a\in A$, $p\in P$. Then
\begin{enumerate}
\item[{\rm(i)}] If $J_p(a)\leq a$, then $a\in C(p)$.
\item[{\rm(ii)}] If $a\in A^+$, then $a\in C(p)$ if and only if $J_p(a)\leq a$.
\item[{\rm(iii)}] If $a\in E$, then $a\in C(p)$ if and only if $a$ and $p$ are Mackey
compatible. In this case, $J_p(a)=a\wedge p$.
\item[{\rm(iv)}] If $q\in P$, then $q\in C(p)$ if and only if $p\in C(q)$ if and only if
\[
J_pJ_q=J_qJ_p=J_{p\wedge q}
\]
\end{enumerate}
\end{lemma}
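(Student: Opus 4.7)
\medskip

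\noindent\textbf{Proof plan.} My plan is to first establish three preliminary facts valid for every $p\in P$, and then to derive (i)--(iv) sequentially from them. The three facts are: (a) $J_p^2=J_p$; (b) $J_{1-p}(p)=0$; and (c) $J_{1-p}\circ J_p=0$ on all of $A$. For (a), if $a\in E$ then $J_p(a)\le J_p(1)=p$, so (F2) gives $J_p(J_p(a))=J_p(a)$, and linearity together with scaling extends this to $A$. Fact (b) is the one-line computation $J_{1-p}(p)=J_{1-p}(1)-J_{1-p}(1-p)=(1-p)-(1-p)=0$ from (F1) and (F2). For (c), idempotency puts $J_p(a)\in\Im^+(J_p)=\Ker^+(J_{1-p})$ for $a\in A^+$ by complementarity, and since $A=A^+-A^+$ in an order unit space, this extends by linearity.

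\medskip

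For (i), I would set $b:=a-J_p(a)\ge 0$. Fact (a) gives $J_p(b)=0$, so scaling $b$ into $E$ and applying (F3) of $J_p$ followed by (F2) of $J_{1-p}$ yields $J_{1-p}(b)=b$. Combining with fact (c) applied to $a$ gives $J_{1-p}(a)=b=a-J_p(a)$, i.e., $a\in C(p)$. Part (ii) is then immediate: the forward direction uses positivity of $J_{1-p}(a)$ when $a\in A^+$, and the reverse is (i).

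\medskip

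For (iii), I would prove both directions by exhibiting an explicit Mackey decomposition. If $a\in C(p)\cap E$, then $c:=J_p(a)$, $a_1:=J_{1-p}(a)$, $b_1:=p-c$ lie in $E$, satisfy $a=c+a_1$, $p=c+b_1$, and $c+a_1+b_1=J_{1-p}(a)+p\le 1$. Conversely, from any such decomposition one has $c\le p$ and $a_1\le 1-p$, so (F2) and complementarity yield $J_p(c)=c$, $J_{1-p}(c)=0$, $J_{1-p}(a_1)=a_1$, $J_p(a_1)=0$; applying $J_p$ and $J_{1-p}$ to $a=c+a_1$ gives $a\in C(p)$ with $J_p(a)=c$. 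For the meet claim, any $d\in E$ below both $a$ and $p$ satisfies $d=J_p(d)\le J_p(a)=c$ by (F2) and monotonicity, so $c=a\wedge p$.

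\medskip

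For (iv), the equivalence $q\in C(p)\Leftrightarrow p\in C(q)$ is immediate from (iii) and the symmetry of Mackey compatibility. For the second equivalence, I would use normality of $P$ to lift the $E$-level decomposition to $p=c+p_1$, $q=c+q_1$ in $P$ with $c+p_1+q_1\le 1$, and then apply the compression-base axiom $J_{p_1+c}\circ J_{q_1+c}=J_c$ (and its swap) directly to obtain $J_p\circ J_q=J_q\circ J_p=J_c$. To identify $c=p\wedge q$ in $P$, I would note that for any $d\in P$ below both $p$ and $q$, (F2) gives $d=J_pJ_q(d)=J_c(d)\le c$. The converse is a one-liner: $J_p(q)=J_pJ_q(1)=p\wedge q\le q$, so (ii) gives $q\in C(p)$. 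The subtlest point I anticipate is in (iv), where normality of $P$ is essential for descending the Mackey decomposition from $E$ into $P$ and the common piece $c$ must be separately identified with the OMP meet $p\wedge q$; everything else reduces to routine applications of (F1)--(F3) and the three preliminary facts.
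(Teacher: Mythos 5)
Your proof is correct; the paper offers no argument for this lemma (it is dismissed as ``easily checked'', with a pointer to \cite[Lemma 1.3]{FP}), and your derivation from (F1)--(F3) together with the idempotence of $J_p$, the complementarity of $J_p$ and $J_{1-p}$, and the normality of $P$ (to push the Mackey decomposition of $p,q$ down into $P$ before invoking the compression-base identity $J_{p_1+c}\circ J_{q_1+c}=J_c$) is exactly the standard verification that the cited source supplies. The only point worth flagging is that the meets in (iii) and (iv) are meets taken in the unit interval $E$ (respectively in $P$), which is what your lower-bound argument establishes and what the statement intends.
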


We denote the set of all projections commuting with $a\in A$ by $PC(a)$ and for $B\subseteq A$, $PC(B):=\bigcap_{a\in B}PC(a)$.

The following subset was called a \emph{bicommutant} of $a$ in $P$  in \cite{JP1}. To
avoid confusion with other notions to be introduced below,  we prefer here to call it  a \emph{P-bicommutant} of $a$:
\[
P(a)=PC(PC(a)\cup \{a\}).
\]

\begin{definition}\label{de:projcov} We say that the compression base $(J_p)_{p\in P}$  has the  \emph{  projection cover property}
if for every effect $e\in E$ there is an element $p\in P$ such that for $q\in P$, we have
$e\leq q$ iff $p\leq q$. Such an element is unique and is called
the \emph{ projection cover} of $e$, denoted by $e^o$.
\end{definition}

It turns out that for every $e\in E$, $e^o\in P(e)$, \cite[Lemma 3.16]{JP1}. If $(J_p)_{p\in P}$ has the projection cover property, then $P$ is an orthomodular lattice (OML),
\cite[Theorem 6.4]{Fgc}, \cite[Theorem 3.17]{JP1}.

\begin{definition}\label{de;compar} {\rm \cite[Def. 1.6]{FP}} We say that the compression base $(J_p)_{p\in P}$ in $A$ has the \newline \emph{ comparability property} if, for every $a\in A$,
$P^{\pm}(a) \neq \emptyset$, where
\[
P^{\pm}(a):=\{ p\in P(a) \ \mbox{and}\ J_{1-p}(a)\leq 0\leq J_p(a)\}.
\]
\end{definition}

 Notice that if a compression base $(J_p)_{p\in P}$ has the comparability property,
then $p\in P$ iff $p$ is sharp \cite[Lemma 3.20]{JP1}.

Let $p\in P^\pm(a)$ and put $b:=J_p(a)$, $c=-J_{1-p}(a)$, then we have
\begin{equation}\label{eq:pm}
a=b-c,\ b,c\in A^+, \ J_p(b)=b, J_p(c)=0.
\end{equation}
Any decomposition of $a$ of the form (\ref{eq:pm}) for some $p\in P$ is called a \emph{$P$-orthogonal decomposition} of $a$.

By \cite[Proposition 3.19]{JP1}, if $A$ has the comparability property, then every element $a\in A$ admits a unique orthogonal decomposition denoted by $a=a^+- a^-$,
$a^+=J_p(a)$, $a^-=- J_{1-p}(a)$. Moreover, we have $a^+, a^-, |a|:= a^++a^- \in C(PC(a))$.

\begin{theorem}\label{th:cor3.26} {\rm \cite[Corollary 3.26]{JP1}} Assume that $(J_p)_{p\in P}$ has the comparability property.
\begin{enumerate}
\item For any $a\in A$ and $p\in P$, $a\in C(p)$ if and only if $P(a)\subseteq C(p)$.

\item For any $a\in A$, there is some block $B\subseteq P$ such that $a\in C(B)$.

\item  For any block $B$, $C(B)=\overline{\mathrm{span}}(B)$.
\end{enumerate}

\end{theorem}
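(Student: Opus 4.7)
The plan is to proceed in three parts, using (1) as a key lemma for (2) and both for (3).

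For part (1), the forward direction $a \in C(p) \Rightarrow P(a) \subseteq C(p)$ follows immediately from definitions: if $a \in C(p)$ then $p \in PC(a)$, so every $q \in P(a) = PC(PC(a) \cup \{a\})$ commutes with $p$, which by Lemma \ref{le:commut}(iv) is equivalent to $q \in C(p)$. The converse is the substantive direction. Starting from the comparability decomposition $a = a^+ - a^-$ with supporting projection $q \in P^{\pm}(a) \subseteq P(a)$, I would iterate this construction on translates $a - \lambda 1$ for $\lambda \in \reals$ to obtain a spectral family of projections $\{e_\lambda\}_\lambda \subseteq P(a)$ whose linear combinations norm-approximate $a$. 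By hypothesis, all $e_\lambda$ lie in $C(p)$; since $C(p) = \{x \in A : x = J_p(x) + J_{1-p}(x)\}$ is a closed linear subspace (the kernel of the continuous linear map $x \mapsto x - J_p(x) - J_{1-p}(x)$), it then contains $a$. The main obstacle here is constructing and controlling this spectral resolution using only the comparability property.

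For part (2), I would apply Zorn's lemma to the family of pairwise Mackey compatible subsets of $PC(a)$ and extract a maximal element $B$; regularity of $P$ ensures $B$ is a Boolean subalgebra. To see that $B$ is in fact a block of $P$, suppose $q \in P$ is pairwise compatible with every element of $B$. Observe first that $P(a)$ itself is pairwise compatible (its elements commute with every projection commuting with $a$, hence with each other) and is pairwise compatible with $B \subseteq PC(a)$, so $P(a) \cup B$ is pairwise compatible in $PC(a)$; maximality of $B$ forces $P(a) \subseteq B$. Therefore $q$ commutes with every element of $P(a)$, so $P(a) \subseteq C(q)$, and part (1) gives $a \in C(q)$, i.e., $q \in PC(a)$. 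Maximality of $B$ in $PC(a)$ now yields $q \in B$, so $B$ is a block.

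For part (3), the inclusion $\overline{\mathrm{span}}(B) \subseteq C(B)$ is immediate: elements of a Boolean subalgebra are pairwise Mackey compatible, hence commute by Lemma \ref{le:commut}(iv), so $B \subseteq C(B)$; and $C(B) = \bigcap_{p \in B} C(p)$ is a closed linear subspace by the same reasoning as in part (1). For the reverse inclusion, given $a \in C(B)$, we have $B \subseteq PC(a)$. Since $B$ is maximal pairwise compatible in $P$ and sits inside $PC(a)$, it is also maximal pairwise compatible in $PC(a)$, and the argument of part (2) then yields $P(a) \subseteq B$. The spectral resolution of $a$ from part (1) consists of projections in $P(a) \subseteq B$, and the norm-approximation of $a$ by linear combinations of these projections places $a$ in $\overline{\mathrm{span}}(B)$. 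As in part (1), the main subtlety lies in making the spectral construction rigorous from the comparability property alone.
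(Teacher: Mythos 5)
Your reductions are sound: the forward direction of (i), the Zorn-type construction in (ii), the verification that the resulting maximal compatible set is a block, and the inclusion $\overline{\mathrm{span}}(B)\subseteq C(B)$ together with the reduction of the reverse inclusion to the claim $a\in\overline{\mathrm{span}}(P(a))$ are all correct, and they mirror how this statement sits in the framework of \cite{JP1} (note the paper itself does not reprove it; it imports it as Corollary 3.26 there, which in turn is a consequence of the density theorem \cite[Thm.~3.22]{JP1} and the C-block representation).

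The genuine gap is exactly the step you flag as ``the main obstacle'' and then leave unresolved: that comparability alone yields projections in $P(a)$ whose linear combinations norm-approximate $a$. This claim carries essentially all the analytic content of the theorem --- the converse of (i) and the whole of (iii) are nothing but this density statement --- so deferring it means the substance of the result is unproved. Moreover, your sketch cannot be carried out as written. The Riemann--Stieltjes representation $a=\int\lambda\,dp_{a,\lambda}$ quoted in the paper (from \cite[Rem.~3.1]{FP}) is established for \emph{spectral} compression bases: the spectral family $p_{a,\lambda}=((a-\lambda)^+)^*$ is defined via the Rickart mapping, which exists only when the projection cover property holds in addition to comparability. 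Under comparability alone, $P^\pm(a-\lambda 1)$ is in general not a singleton, there is no canonical choice of $e_\lambda$, and arbitrary choices need not be monotone in $\lambda$, so ``iterating the construction on translates'' does not by itself produce a spectral family, and the convergence estimates for the telescoping sums have to be re-derived (e.g.\ by replacing the chosen $p_{\lambda_i}\in P^\pm(a-\lambda_i 1)\subseteq P(a)$ by suitable meets inside the Boolean algebra $P(a)$ and controlling $J$-decompositions of $a$ there). That argument is precisely \cite[Thm.~3.22]{JP1} (going back to Foulis's general-comparability theory), which the present paper invokes, for instance, in the proof of Proposition \ref{prop:bicomutant}(iii). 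To complete your proof you must either cite that theorem or supply this construction in full; everything else in your proposal then goes through.
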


The statement (ii) above shows that if $A$ has the comparability property, then it is
covered by C-blocks. We next describe the structure of the C-blocks in more detail.

\begin{lemma}\label{le:C(B)}{ \rm \cite[Lemma 3.24]{JP1}} Assume that $(J_p)_{p\in P}$ has the comparability property and let $B$ be a block of $P$. Then $C(B)$ is an order unit space
and $(\bar{J_p})_{p\in B}$ where ${\bar J_p}:=J_p|C(B)$, is a compression base in $C(B)$ with the comparability property.
\end{lemma}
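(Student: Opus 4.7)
The plan is to verify in turn that $C(B)$ is an order unit space, that $(\bar{J_p})_{p\in B}$ is a compression base on it, and that this base has the comparability property. The order unit space structure is routine: each $C(p)$ is a linear subspace of $A$ containing $1$ (by linearity of $J_p$ and $J_{1-p}$), so $C(B) = \bigcap_{p\in B} C(p)$ is a linear subspace containing $1$; taking positive cone $C(B) \cap A^+$ and order unit $1$, both the archimedean property and the property of $1$ being an order unit are inherited from $A$.

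For the compression base structure, I would check four things. First, $\bar{J_p}$ maps $C(B)$ into itself: given $a \in C(B)$ and $q \in B$, the pairs $p,q$ and $p,1-q$ are Mackey compatible (since $B$ is Boolean, $1-q \in B$), so by Lemma \ref{le:commut}(iv), $J_p$ commutes with both $J_q$ and $J_{1-q}$, whence $J_p(a) = J_qJ_p(a) + J_{1-q}J_p(a)$, showing $J_p(a) \in C(q)$. Second, each $\bar{J_p}$ is a compression with focus $p$ on $C(B)$: the axioms (F1)--(F3) are direct restrictions of those for $J_p$ on $A$, and $p \in B \subseteq C(B)$ since $B$ is pairwise Mackey compatible. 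Third (the main subtlety), $B$ is a normal subalgebra of $C(B) \cap E$: if $d,e,f \in C(B) \cap E$ satisfy $d+e+f \leq 1$ and $d+e,\, d+f \in B \subseteq P$, normality of $P$ in $E$ gives $d \in P$; since $d \in C(B)$, Lemma \ref{le:commut}(iii) shows $d$ is Mackey compatible in $E$ with every $b \in B$, hence in $P$ as well (using normality of $P$ once more); maximality of the block $B$ then forces $d \in B$. Finally, the identity $\bar{J_{p+r}} \circ \bar{J_{q+r}} = \bar{J_r}$ for $p,q,r \in B$ with $p+q+r \leq 1$ is the restriction to $C(B)$ of the corresponding identity for $(J_p)_{p \in P}$.

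For the comparability property, given $a \in C(B)$ I would pick some $p \in P^\pm(a)$ guaranteed by comparability of $A$. The key observation is that $p \in B$: since $a \in C(B)$, we have $B \subseteq PC(a)$, and $p \in P(a) = PC(PC(a) \cup \{a\})$ implies that $p$ commutes with every element of $B$, so by maximality of $B$, $p \in B$. A short computation shows moreover that the $P$-bicommutant of $a$ formed within $C(B)$ equals $B$ (because $PC_{C(B)}(a) = B \cap PC(a) = B$), so $p$ lies in it; and since $\bar{J_{1-p}}(a) = J_{1-p}(a) \leq 0 \leq J_p(a) = \bar{J_p}(a)$, this $p$ witnesses comparability in $C(B)$. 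The principal obstacle in the whole argument is the normality of $B$ within the effect algebra of $C(B)$, which is handled by combining maximality of the block with normality of $P$ in $E$.
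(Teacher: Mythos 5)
Your proof is correct: the three points that genuinely need checking --- invariance of $C(B)$ under each $J_p$ via Lemma \ref{le:commut}(iv), normality of $B$ in the unit interval of $C(B)$ via normality of $P$ in $E$ combined with maximality of the block, and the fact that any $p\in P^{\pm}(a)$ is compatible with all of $B$ and hence lies in $B$, which coincides with the bicommutant of $a$ computed inside $C(B)$ --- are all handled soundly. The paper itself gives no proof, importing the statement from \cite[Lemma 3.24]{JP1}, and your verification follows the natural route one would expect there, so there is nothing to flag beyond the (immediate) remark that $B$ is a sub-effect-algebra of $C(B)\cap E$ because it is a Boolean subalgebra of $P$.
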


\begin{theorem}\label{th:block} {\rm \cite[Theorem 3.25]{JP1}} Let $(J_p)_{p\in P}$ be a compression base in $A$ with the comparability property. Let $B\subseteq P$ be a block.
Then there is a totally disconnected compact Hausdorff space $X$  such that
\begin{enumerate}
\item $B$ is isomorphic (as a Boolean algebra) to the Boolean algebra ${\mathcal P}(X)$ of all clopen subsets in $X$.
\item $C(B)$ is isomorphic (as an order unit space) to a norm-dense order unit subspace in $C(X,{\mathbb R})$.
\item If $A$ is norm-complete, then $C(B)\simeq C(X,{\mathbb R})$ (as an order unit space).
\end{enumerate}
\end{theorem}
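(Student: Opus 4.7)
Statement (i) is just Stone's representation theorem applied to the Boolean algebra $B$: let $X$ be the Stone space of $B$ (ultrafilters with the hull-kernel topology, compact, Hausdorff, totally disconnected), and send $p\in B$ to the clopen set $U_p$ of ultrafilters containing $p$. The bulk of the work is to extend this Boolean isomorphism to an order-unit-space embedding $C(B)\hookrightarrow C(X,\reals)$.

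I would define a linear map $\phi_0 : \mathrm{span}(B) \to C(X,\reals)$ by sending $\sum_i\lambda_i p_i$ to $\sum_i\lambda_i\chi_{U_{p_i}}$. Since any finite family in $B$ can be refined to a finite $B$-partition of $1$, every element of $\mathrm{span}(B)$ has a canonical form $a=\sum_j\mu_j q_j$ with $\{q_j\}\subseteq B$ an orthogonal partition of the unit. The central computation is then the identity
\[
\Bigl\|\sum_j\mu_j q_j\Bigr\|_1 \;=\; \max_j |\mu_j|,
\]
which I would prove by applying $J_{q_k}$ to $\lambda\cdot 1 - a$: because the $q_j$ all lie in one block, Lemma~\ref{le:commut}(iv) gives $J_{q_k}(q_j)=\delta_{jk}q_k$, so $J_{q_k}(\lambda\cdot 1 - a) = (\lambda-\mu_k)q_k$, and positivity of the compression forces $\lambda\ge\mu_k$; the reverse bound $\lambda\cdot 1 - a\ge 0$ whenever $\lambda\ge\max_j\mu_j$ is a non-negative combination of projections. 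This matches the sup-norm of $\sum_j\mu_j\chi_{U_{q_j}}$ on $X$, so $\phi_0$ is simultaneously well-defined, unital, order-preserving and isometric onto the space of $\mathcal{P}(X)$-simple functions.

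By Theorem~\ref{th:cor3.26}(iii), $C(B) = \overline{\mathrm{span}(B)}$, hence $\phi_0$ extends by continuity to an isometric unital linear embedding $\phi : C(B)\to C(X,\reals)$; order preservation in both directions then follows from the purely metric characterization of positivity in an archimedean order unit space ($a\ge 0$ iff $\|a-r\cdot 1\|_1\le r$ for some $r\ge 0$), which refers only to norm and unit. The image $\phi_0(\mathrm{span}(B))$ is the space of locally constant real functions on $X$, which is uniformly dense in $C(X,\reals)$ by approximating a continuous $f$ via clopen preimages of an $\varepsilon$-partition of $f(X)$; this is (ii). For (iii), observe that $C(B) = \bigcap_{p\in B}\ker(\mathrm{id}-J_p-J_{1-p})$ is an intersection of kernels of bounded operators, hence norm-closed in $A$, so $A$ complete implies $C(B)$ complete, whence $\phi(C(B))$ is a complete dense subspace of $C(X,\reals)$ and must equal it. The main obstacle is the norm identity for canonical-form elements, since that is the single step where the block structure and the comparability property (through Lemma~\ref{le:commut}(iv)) are genuinely used; what remains is Stone duality combined with standard uniform approximation on totally disconnected compacta.
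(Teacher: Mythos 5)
Your argument is correct and follows essentially the same route as the proof this paper imports from \cite[Theorem 3.25]{JP1} (the theorem is cited here without a reproduced proof): Stone duality for the Boolean algebra $B$, the norm identity $\bigl\|\sum_j\mu_j q_j\bigr\|_1=\max_j|\mu_j|$ on $\mathrm{span}(B)$ obtained by applying the compressions $J_{q_k}$ to $\lambda\cdot 1\pm a$, and extension by continuity using $C(B)=\overline{\mathrm{span}}(B)$ from Theorem \ref{th:cor3.26}(iii), with closedness of $C(B)$ in a complete $A$ giving (iii). One small repair: the preimages of an $\varepsilon$-partition of $f(X)$ need not be clopen, so the density of the $\mathcal P(X)$-simple (locally constant) functions in $C(X,\mathbb R)$ should instead be justified by Stone--Weierstrass, or by refining a finite cover of $X$ by clopen sets on which $f$ oscillates by less than $\varepsilon$.
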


In the comparability case, we can extend the notion of commutativity to all pairs of elements in $A$.

\begin{definition}\label{de:extcompat} {\rm \cite[Definition 3.26]{JP1}} Let $(J_p)_{p\in P}$ be a compression base in $A$ with the comparability property. We say that $a,b\in A$
\emph{commute}, in notation $aCb$,  if $p$ and $q$ are compatible for all $p\in P(a)$ and $q\in P(b)$.
\end{definition}

By \cite[Proposition 3.27]{JP1}, two elements in $A$ commute if and only if they are in the same C-block.  In addition, the C-blocks of $A$ are precisely the maximal sets of mutually commuting elements.

Now we introduce the notion of spectrality of order unit spaces in the sense of \cite{FP}.

\begin{definition}\label{de:spect} {\rm \cite[Definition 1.7]{FP}} The compression base $(J_p)_{p\in P}$ in an order unit space is \emph{ spectral} if it has both the projection cover
and the comparability property. 
\end{definition}

\begin{theorem}\label{th:equiv} {\rm \cite[Theorem 3.33]{JP1}} Assume that $A$ is norm complete and let $(J_p)_{p\in P}$ be a compression base with the  comparability
property. The following are equivalent:
\begin{enumerate}
\item $(J_p)_{p\in P}$ is spectral.

\item For any block $B$ of $P$, $({\bar J_p}:=J_p|C(B))_{p\in B}$ is a spectral compression base in $C(B)$.

\item  Any C-block in $A$ is monotone $\sigma$-complete.

\item  Any C-block in $A$ is isomorphic to $C(X,{\mathbb R})$ for some basically disconnected compact Hausdorff space $X$.

\item  $P$ is monotone $\sigma$-complete.
\end{enumerate}

\end{theorem}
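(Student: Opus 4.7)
The plan is to use Theorem~\ref{th:block} to reduce each of the conditions to the structure of a single C-block $C(B)$, which under norm-completeness is identified with $C(X,\mathbb R)$ for a totally disconnected compact Hausdorff space $X$ (with $B\cong\mathcal P(X)$, the clopen algebra). The conditions (ii), (iii), (iv), (v) then all translate, via this identification, into standard topological characterizations of basic disconnectedness of $X$.

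I would first establish the block-level equivalences (ii)$\Leftrightarrow$(iii)$\Leftrightarrow$(iv) by invoking the classical facts: for a totally disconnected compact Hausdorff $X$, the space $X$ is basically disconnected iff $C(X,\mathbb R)$ is monotone $\sigma$-complete iff $\mathcal P(X)$ is a $\sigma$-complete Boolean algebra iff $\overline{\mathrm{coz}(f)}$ is clopen for every $f\in C(X,[0,1])$. The identification $C(B)\cong C(X,\mathbb R)$ together with Lemma~\ref{le:C(B)} (comparability on $C(B)$) makes the last condition equivalent to the restricted compression base $(\bar J_p)_{p\in B}$ being spectral. The equivalence (ii)$\Leftrightarrow$(v) follows similarly: every ascending sequence in $P$ is pairwise Mackey compatible, hence lies inside some block, and any upper bound commutes with each term by (F2) and Lemma~\ref{le:commut}(ii), so monotone $\sigma$-completeness of $P$ amounts to $\sigma$-completeness of every block.

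The remaining equivalence (i)$\Leftrightarrow$(ii) reduces spectrality of the whole base to spectrality on C-blocks. For (i)$\Rightarrow$(ii): given $e\in C(B)\cap E$, the projection cover $e^o\in P$ lies in $P(e)=PC(PC(e)\cup\{e\})$; since $e\in C(B)$ forces $B\subseteq PC(e)$, we have $e^o\in PC(B)$, and maximality of $B$ as a pairwise compatible subset of $P$ places $e^o$ in $B$, where it serves as the projection cover for the restricted base. For (ii)$\Rightarrow$(i): choose a block $B$ with $e\in C(B)$ (Theorem~\ref{th:cor3.26}(ii)) and let $e^o_B\in B$ be its projection cover in the restricted base. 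For $q\in P$ with $e\leq q$, (F2) gives $J_q(e)=e$, so $q\in PC(e)$ by Lemma~\ref{le:commut}(ii), and $e^o_B\in P(e)\subseteq PC(q)$; hence $q$ and $e^o_B$ are Mackey compatible, and the meet $e^o_B\wedge q$ is a projection with $e\leq e^o_B\wedge q\leq e^o_B$. A final argument that $e^o_B\wedge q\in B$, combined with minimality of $e^o_B$ in $B$, forces $e^o_B\wedge q=e^o_B$ and thus $e^o_B\leq q$.

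The main obstacle is the last step of (ii)$\Rightarrow$(i): promoting a projection cover valid only inside a fixed block $B$ to one valid against every $q\in P$. The sensitive point is showing $e^o_B\wedge q\in B$, for which the leverage is that $e^o_B$ lies in the P-bicommutant $P(e)\subseteq B$, so the compatibility of $q$ with $e$ channels through $e^o_B$ to yield compatibility of $e^o_B\wedge q$ with every element of $B$, after which maximality of $B$ finishes the argument.
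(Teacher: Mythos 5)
The theorem itself is quoted from \cite{JP1}, so there is no in-paper proof to compare against; judging your argument on its own terms, the block-level part is sound but the local-to-global passages have genuine gaps. Your reduction of (ii)$\Leftrightarrow$(iii)$\Leftrightarrow$(iv) via Theorem~\ref{th:block} and Stone duality (basically disconnected $\Leftrightarrow$ $C(X)$ monotone $\sigma$-complete $\Leftrightarrow$ clopen algebra $\sigma$-complete $\Leftrightarrow$ closures of cozero sets clopen) is correct, and so is (i)$\Rightarrow$(ii): the global cover $e^{o}$ lies in $P(e)$, and $B\subseteq PC(e)$ together with maximality of $B$ forces $P(e)\subseteq B$. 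The trouble starts exactly where you flag it, and also one step earlier. In (ii)$\Rightarrow$(i) you use $e^{o}_B\in P(e)$ to get $e^{o}_B\leftrightarrow q$; but the analogue of \cite[Lemma 3.16]{JP1} applied to the restricted base only places $e^{o}_B$ in the bicommutant computed \emph{relative to} $C(B)$, which is all of $B$ (every projection in $B$ commutes with every element of $C(B)$), so it says nothing about compatibility of $e^{o}_B$ with a projection $q\notin B$. Showing that the blockwise cover is compatible with every $q\in PC(e)$ (equivalently, lies in the global $P(e)$, equivalently, is independent of the chosen block) is essentially the whole content of (ii)$\Rightarrow$(i), and it is not established. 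The second gap you name, $e^{o}_B\wedge q\in B$, is also real and your proposed leverage does not close it: for arbitrary $s\in B$ you only know $s\leftrightarrow e^{o}_B$, while $q$ itself need not be compatible with $s$, so there is no route from ``compatibility channels through $e^{o}_B$'' to $s\leftrightarrow (e^{o}_B\wedge q)$ in a mere regular OMP.

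The same local-versus-global issue undermines your one-sentence treatment of (ii)$\Leftrightarrow$(v). For (ii)$\Rightarrow$(v): an increasing sequence $(p_n)$ lies in some block $B$ and has a supremum $p$ there, and an upper bound $q\in P$ is compatible with each $p_n$; but this does not yet give $p\le q$ --- you would need either that $q$ is compatible with $p$, or that the supremum computed in a block is block-independent, and neither follows from Lemma~\ref{le:commut}(ii). (Note the paper's own warning after Theorem~\ref{thm:borel_fc}: suprema formed inside $CC(a)$, hence inside $C(B)$, need not be suprema in $A$; the same caution applies to suprema in $B$ versus $P$.) For (v)$\Rightarrow$(ii) you likewise need the $P$-supremum of an increasing sequence from $B$ to be compatible with every element of $B$ so that it lands in $B$ and is the $B$-supremum; compatibility is not known to be preserved under existing countable joins here, since under comparability alone $P$ is only a regular OMP, not an OML (that requires the projection cover property you are trying to prove). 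So the skeleton (i)$\Rightarrow$(ii)$\Rightarrow$(iii)$\Leftrightarrow$(iv) is fine, but the directions that recover the global statements (i) and (v) from blockwise data need a genuinely new ingredient --- for instance, once (i) is available one can get (v) cheaply by taking $b=\sum_n 2^{-n}p_n$ and checking that for $q\in P$ one has $b\le q$ iff $p_n\le q$ for all $n$ (apply $J_{1-q}$ and (F3)), so $b^{o}=\bigvee_n p_n$ in $P$; but producing the global projection cover from block-level spectrality is the hard step and remains open in your sketch.
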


If $(J_p)_{p\in P}$ is a spectral compression base in $A$, then there is a unique mapping $*:A\to P$, called the \emph{Rickart mapping}, such that for all $a\in A$ and $p\in P$,
\begin{equation}\label{eq:rickart}
p\leq a^* \ \Leftrightarrow \ a\in C(p
) \ \mbox{and}\ J_p(a)=0.
\end{equation}
If a compression base  has the  comparability property, then the projection cover property
is equivalent to the existence of the Rickart mapping \cite[Thm 2.1]{FP}. In particular, for $a\in E$ we have $a^*=1-a^o$.
Moreover, by \cite[Proposition 3.35]{JP1}, for any $a\in A$, $(a^+)^{**}$ is the least element in $P^{\pm}(a)$.

We may now define the \emph{ spectral resolution} of $a$ as the family  $(p_{a,\lambda})_{\lambda \in {\mathbb R}}\subseteq P$ defined by \cite[Def. 2]{FP}
\[
p_{a,\lambda}:=((a-\lambda)^+)^*.
\]
We have the following characterization of compatibility via spectral resolution:

\begin{lemma}\label{le:spr} {\rm \cite[Cor 3.36]{JP1}} If $p\in P$, then $a\in C(p) \Leftrightarrow p_{a,\lambda}\in C(p)$ for all $\lambda \in {\mathbb R}$.
\end{lemma}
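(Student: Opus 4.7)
I would treat the two implications separately and take the compression base to be spectral throughout, since $p_{a,\lambda}$ involves the Rickart mapping.

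\emph{Forward direction.} Assume $a\in C(p)$. By Theorem~\ref{th:cor3.26}(i) this is equivalent to $P(a)\subseteq C(p)$, so it suffices to show $p_{a,\lambda}\in P(a)=PC(PC(a)\cup\{a\})$ for every $\lambda\in\reals$. For a projection $q\in PC(a)$, we have $a-\lambda\cdot 1\in C(q)$ (since $1\in C(q)$), so by the fact $b^\pm\in C(PC(b))$ recalled after Definition~\ref{de;compar} also $(a-\lambda)^+\in C(q)$. The Rickart mapping preserves commutants of projections---$b\in C(q)\Rightarrow b^*\in C(q)$, which one checks from \eqref{eq:rickart} together with Lemma~\ref{le:commut}(iv)---hence $p_{a,\lambda}=((a-\lambda)^+)^*\in C(q)$, equivalently $q\in C(p_{a,\lambda})$ by Lemma~\ref{le:commut}(iv). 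That $a\in C(p_{a,\lambda})$ follows since $(a-\lambda)^+\in C(p_{a,\lambda})$ is built into the definition of $*$, $(a-\lambda)^-\in C(PC((a-\lambda)^+))\subseteq C(p_{a,\lambda})$, and $\lambda\cdot 1\in C(p_{a,\lambda})$.

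\emph{Backward direction.} Assume $p_{a,\lambda}\in C(p)$ for every $\lambda\in\reals$. Pick a block $B\subseteq P$ with $a\in C(B)$, available from Theorem~\ref{th:cor3.26}(ii); the spectral projections $p_{a,\lambda}$ then automatically lie in $B$. In the representation $C(B)\hookrightarrow C(X,\reals)$ supplied by Theorem~\ref{th:block} (with $X$ totally disconnected, and basically disconnected in the norm-complete spectral case by Theorem~\ref{th:equiv}(iv)), the element $a$ corresponds to a continuous function whose spectral family coincides with the characteristic functions of its clopen level sets. A standard Riemann--Stieltjes step-function approximation expresses $a$ as a norm limit of finite real linear combinations of the $p_{a,\lambda}$'s. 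Each such linear combination is in $C(p)$ by hypothesis, and since $C(p)$ is norm-closed (it is the image of the bounded idempotent $J_p+J_{1-p}$), so is the limit $a$.

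\emph{Main obstacle.} The delicate point is the approximation step in the backward direction: one must identify $p_{a,\lambda}$ with the correct clopen characteristic functions of the function representing $a$ and argue that the Riemann sums converge to $a$ in order-unit norm while remaining inside $C(B)$. In the norm-complete spectral case this is a routine application of Theorem~\ref{th:equiv}(iv); without completeness some extra bookkeeping is needed to keep the approximants inside the norm-dense subalgebra of Theorem~\ref{th:block}(ii) rather than merely in its completion.
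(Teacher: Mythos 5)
The paper itself gives no proof of this lemma (it is quoted from \cite[Cor.~3.36]{JP1}), so your argument stands on its own; its overall strategy -- show $p_{a,\lambda}\in P(a)$ and invoke Theorem~\ref{th:cor3.26}(i) for one direction, approximate $a$ by linear combinations of its spectral projections for the other -- is the right one. However, the forward direction has a genuine flaw: the step ``$(a-\lambda)^-\in C(PC((a-\lambda)^+))$'' is false in general. The facts recalled after Definition~\ref{de;compar} give $b^{\pm}\in C(PC(b))$ for the \emph{same} element $b$, not that the negative part lies in the commutant of everything commuting with the positive part; e.g.\ in $B(\complex^3)_{sa}$ with $a=\mathrm{diag}(1,-1,-2)$ and $\lambda=0$, the projection onto $\mathrm{span}(e_2+e_3)$ commutes with $a^+$ but not with $a^-$. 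Since this was your justification that $a\in C(p_{a,\lambda})$ (one half of $p_{a,\lambda}\in P(a)$), that part of the argument does not stand as written. Likewise, the assertion that $b\in C(q)\Rightarrow b^*\in C(q)$ is true (for the positive elements you need) but is not a routine check from \eqref{eq:rickart} and Lemma~\ref{le:commut}(iv) alone; it essentially requires $e^{\mathrm o}\in P(e)$, i.e.\ \cite[Lemma 3.16]{JP1}.

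Both defects are repaired by facts already quoted in the paper. For a projection $r$ one has $r^*=1-r$, so $p_{a,\lambda}=((a-\lambda)^+)^*=1-((a-\lambda)^+)^{**}$; by \cite[Prop.~3.35]{JP1} (quoted right after \eqref{eq:rickart}), $((a-\lambda)^+)^{**}$ is the least element of $P^{\pm}(a-\lambda)\subseteq P(a-\lambda)=P(a)$ (the last equality because each $C(q)$ is a subspace containing $1$), and since $P(a)$ is a Boolean subalgebra, $p_{a,\lambda}\in P(a)$; Theorem~\ref{th:cor3.26}(i) then gives the forward implication in one line. Your backward direction is correct but can be streamlined: there is no need for the block $B$, the $C(X)$ representation, or completeness, because the paper already quotes from \cite[Rem.~3.1]{FP} that the Riemann--Stieltjes sums $\sum_i\lambda_i(p_{a,\lambda_i}-p_{a,\lambda_{i-1}})$ converge to $a$ in norm in $A$ itself; each sum lies in $C(p)$ by hypothesis, and $C(p)=\Ker(I-J_p-J_{1-p})$ is norm-closed since $J_p$, $J_{1-p}$ are bounded. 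Thus the ``main obstacle'' you describe dissolves. (If you do keep the block, note that $p_{a,\lambda}\in B$ is not ``automatic'': it follows because every element of $P(a)$ is compatible with every element of $B\subseteq PC(a)$, so maximality of the block forces $P(a)\subseteq B$.)
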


Moreover,  every element in $A$ can be written as a norm-convergent integral of
Riemann-Stieltjes type \cite[Rem. 3.1]{FP}
\[
a=\int^{U_a}_{L_a-0}\lambda dp(\lambda)
\]
where the \emph{spectral lower and upper bounds} for $a$ are defined by $L_a:=\sup\{ \lambda \in {\mathbb R}: \lambda \leq a\}$ and
$U_a:=\inf \{ \lambda \in {\mathbb R}: a\leq \lambda \}$, respectively.

\begin{definition} We say that an order unit space is spectral (has the comparability
property) if there exists a compression base on $A$ that is spectral (has the comparability
property).

\end{definition}

  Note that it follows from \cite[Prop. 4.7 and Thm. 5.10]{JP2} that if $A$ is spectral (has the
comparability property), then any compression base
such that $P$ is the set of all sharp elements must be spectral (have the comparability
property). This can be seen from the fact that both the comparability and the projection
cover property are obtained from the behaviour of the restrictions of the compression base
to C-blocks. The C-blocks in $A$ only depend on the blocks in
$P$ (Theorem \ref{th:cor3.26}), moreover, any compression with focus $p\in B$ for a block
$B\subseteq P$ restricts to the unique compression with focus $p$ on $C(B)$, since $C(B)$
is isomorphic to a dense subspace $C(X)$ (Theorem \ref{th:block}) which is the self-adjoint part of an  abelian C*-algebra (cf.
Example \ref{ex:cstar}).
 By the results of \cite{JP2}, the
spectral resolution of an element  $a\in A$ coincides with the (unique) spectral
resolution  in any C-block containing $a$, so that the spectral resolution does not depend
on the choice of the spectral compression base.

\section{Functional calculus in order unit spaces} \label{sec:fc}

In this section, we introduce functional calculus in an order unit Banach space with the
comparability or spectrality property.

\subsection{Commutants}

Assume that the compression base $(J_p)_{p\in P}$ in $A$ has the comparability property.
 Let us denote
(cf. Definition \ref{de:extcompat})
\[
C(a):=\{ b\in A: aCb\}.
\]
The set $C(a)$ will be called the \emph{commutant} of $A$. For a subset $A_0\subset A$, we denote $C(A_0)=\bigcap_{a\in A_0}C(a)$.
Note that for $A_0\subseteq  P$,  we get the same notion of a commutant as before.

\begin{prop}\label{prop:commutant} Let the compression base $(J_p)_{p\in P}$ in $A$ have the comparability property. Let $a\in A$ and let $A_0\subseteq A$. Then
\begin{enumerate}
\item $C(a)=C(P(a))$.
\item $a\in C(a)$.
\item For any $b\in C(A_0)$, we have $P(b)\subseteq C(A_0)$.
\item  $C(A_0)$ is a norm-closed subspace  containing 1
 and $\{J_p|_{C(A_0)}\}_{p\in P\cap C(A_0)}$ is a compression base in $C(A_0)$ with the comparability property.
\end{enumerate}
\end{prop}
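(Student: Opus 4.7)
The plan is to settle (i), (ii), (iii) directly from the definitions of the extended commutant, using only Lemma \ref{le:commut}(iv) and Theorem \ref{th:cor3.26}(i), and then to build (iv) on top of them with one nontrivial algebraic step at the center.

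For (i), I would just unwind the definitions: $b\in C(a)$ means every $p\in P(a)$ and $q\in P(b)$ are Mackey compatible in $P$, which by Lemma \ref{le:commut}(iv) is the same as $q\in C(p)$ for all such $p,q$, which by Theorem \ref{th:cor3.26}(i) is equivalent to $b\in C(p)$ for every $p\in P(a)$, i.e. $b\in C(P(a))$. For (ii), observe that $P(a)=PC(PC(a)\cup\{a\})\subseteq PC(a)$, so any $p,q\in P(a)$ both lie in $PC(a)$ while simultaneously commuting with every element of $PC(a)$; in particular $p$ is compatible with $q$, so $P(a)$ is pairwise compatible and $aCa$. For (iii), fix $a\in A_0$ and $b\in C(a)$; part (i) together with Theorem \ref{th:cor3.26}(i) gives $P(b)\subseteq C(p)$ for every $p\in P(a)$, hence $P(b)\subseteq C(P(a))=C(a)$, and intersecting over $a\in A_0$ yields $P(b)\subseteq C(A_0)$.

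For (iv), I would first rewrite $C(A_0)=\bigcap_{a\in A_0}C(a)=\bigcap_{p\in Q}C(p)$ with $Q:=\bigcup_{a\in A_0}P(a)\subseteq P$, using (i). Each $C(p)=\Ker(\mathrm{id}-J_p-J_{1-p})$ is a norm-closed subspace containing $1$ (positive linear maps on an order unit Banach space are automatically bounded), so $C(A_0)$ inherits these properties. Setting $P_0:=P\cap C(A_0)$, closure of $P_0$ under $p\mapsto 1-p$ and under compatible orthogonal sums is immediate since $P$ is a subalgebra of $E$ and $C(A_0)$ is a subspace containing $1$, and normality of $P_0$ in $E\cap C(A_0)$ descends from normality of $P$ in $E$.

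The main obstacle, and the only place a small calculation is needed, is showing $J_p(C(A_0))\subseteq C(A_0)$ for every $p\in P_0$. This reduces to the claim that whenever $p\in P$ and $b\in A$ both lie in $C(q)$ for some $q\in P$, then $J_p(b)\in C(q)$. Since $p\in C(q)$, Lemma \ref{le:commut}(iv) also gives $p\in C(1-q)$ together with $J_pJ_q=J_{p\wedge q}$ and $J_pJ_{1-q}=J_{p\wedge(1-q)}$; combining with $b=J_q(b)+J_{1-q}(b)$ one computes
\[
J_q(J_p(b))+J_{1-q}(J_p(b))=J_{p\wedge q}(b)+J_{p\wedge(1-q)}(b)=J_p(b),
\]
which is exactly $J_p(b)\in C(q)$. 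Once this is in hand, restriction trivially preserves (F1)--(F3) and the compression-base identity $J_{p+r}\circ J_{q+r}=J_r$, so $(J_p|_{C(A_0)})_{p\in P_0}$ is a compression base on $C(A_0)$. For comparability, given $a\in C(A_0)$ I would pick $p\in P^{\pm}(a)\subseteq P(a)$ in the original base; by (iii), $p\in C(A_0)$, so $p\in P_0$. Because the set of projections in $P_0$ commuting with $a$ is contained in $PC(a)$ and $p$ already commutes with all of $PC(a)\cup\{a\}$, $p$ still lies in the $P$-bicommutant of $a$ computed in the restricted base, while the inequalities $J_{1-p}(a)\leq 0\leq J_p(a)$ survive restriction; thus $p$ witnesses comparability in $C(A_0)$.
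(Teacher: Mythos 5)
Your proposal is correct and follows essentially the same route as the paper: parts (i)--(iii) by unwinding the definition with Lemma \ref{le:commut}(iv) and Theorem \ref{th:cor3.26}(i), and part (iv) by writing $C(A_0)$ as an intersection of commutants of projections, descending normality from $P$, and proving invariance of $C(A_0)$ under $J_p$ via the same commuting-compressions computation, with comparability inherited through (iii). The only cosmetic differences (proving (ii) directly from $P(a)\subseteq PC(a)$ instead of via (i), and spelling out the restricted bicommutant membership) do not change the argument.
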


\begin{proof} (i)  By definition and Theorem \ref{th:cor3.26},
\[
C(a)=\{b:\ aCb\}=\{b:\ P(b)\subseteq C(P(a))\}=\{b:\ b\in C(P(a))\}=C(P(a)).
\]
This also implies (ii)  since  by the definition of $P(a)$, we have $a\in C(P(a))$.
The statement (iii) follows from the definition of the commutant: let $a\in A_0$, then $bCa$ means that $a\in C(P(b))$ so that $P(b)\subseteq C(a)$.
To prove (iv), note that for $p\in P$, $C(p)$ is clearly a norm-closed subspace and $1\in C(p)$. Since $C(a)=\cap_{p\in P(a)} C(p)$ by (i), the same holds true for $C(a)$ and clearly also for $C(A_0)$.
Let $F$ be the interval $[0,1]$ in $C(A_0)$, we show that $Q:=C(A_0)\cap P$ is a normal subalgebra in $F$. So let $d,e,f\in F$ such that $d+e+f\le 1$ and $d+e,d+f\in Q$. This implies that $d\in P\cap C(A_0)=Q$, since $P$ is a normal subalgebra in $E=A[0,1]$. Let $q\in Q$ and $b\in C(A_0)$. Then for any $a\in A_0$ and $p\in P(a)$ we have
$q,b\in C(p)$ and therefore
\[
J_q(b)=J_q(J_p(b)+J_{1-p}(b))= J_p(J_q(b))+J_{1-p}(J_q(b))
\]
so that $J_q(b)\in C(p)$ for all $p\in P(a)$, hence $J_q(b)\in C(P(a))=C(a)$. This implies that all the compressions $J_q$, $q\in Q$ preserve $C(A_0)$, so they define
compressions on $C(A_0)$. It is clear that $\{J_q\}_{q\in Q}$ is a compression base. The comparability property follows by comparability in $A$ and the fact that by (iii),
$P^\pm(b)\subseteq P(b)\subseteq Q$ for any $b\in C(A_0)$.

\end{proof}

We will also denote by $CC(A_0)=C(C(A_0))$ the \emph{bicommutant} of  $A_0\subset A$.

\begin{prop}\label{prop:bicomutant}
\begin{enumerate}
\item If $A_0\subseteq A$ is commutative then $CC(A_0)$ is commutative.
\item For $a\in A$, $P(a)=CC(a)\cap P$.
\item For $a\in A$, $CC(a)=\overline{\mathrm{span}}(P(a))$ (the norm-closed linear span of $P(a)$).
\end{enumerate}

\end{prop}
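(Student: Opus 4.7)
The plan is as follows. For part (i), I would rely on two elementary formal properties of the operator $C$: we always have $X\subseteq CC(X)$ (immediate from the definition), and $X\subseteq Y$ implies $C(Y)\subseteq C(X)$. Commutativity of $A_0$ is exactly the statement $A_0\subseteq C(A_0)$; applying $C$ yields $CC(A_0)\subseteq C(A_0)$. Then for any two $b_1,b_2\in CC(A_0)$, we have $b_2\in C(A_0)$, while $b_1\in CC(A_0)=C(C(A_0))$ by definition commutes with every element of $C(A_0)$; hence $b_1Cb_2$.

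For part (ii), the inclusion $P(a)\subseteq CC(a)\cap P$ uses Proposition~\ref{prop:commutant}(i), which rewrites $C(a)=C(P(a))$: given $p\in P(a)$, every $b\in C(a)$ commutes with all of $P(a)$ and in particular with $p$, so $p\in CC(a)$. For the converse inclusion, given $p\in CC(a)\cap P$, I note that $a\in C(a)$ by Proposition~\ref{prop:commutant}(ii) and that $PC(a)\subseteq C(a)$ by definition. Hence $p$ commutes with $a$ and with every $q\in PC(a)$, placing $p$ in $PC(PC(a)\cup\{a\})=P(a)$.

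For part (iii), the inclusion $\overline{\mathrm{span}}(P(a))\subseteq CC(a)$ is immediate from (ii), together with the fact that $CC(a)$ is a norm-closed subspace (Proposition~\ref{prop:commutant}(iv)). For the reverse inclusion, I would apply Proposition~\ref{prop:commutant}(iv) to $A_0=C(a)$: this realizes $CC(a)$ as an order unit space with compression base $(J_p|_{CC(a)})_{p\in P\cap CC(a)}$ enjoying the comparability property, and by (ii) the projections of this restricted base are precisely $P(a)$. By (i), $CC(a)$ is commutative, so any two elements of $P(a)$ are Mackey compatible (Lemma~\ref{le:commut}(iv)), whence $P(a)$ is the unique block of the restricted structure. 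Commutativity of $CC(a)$ also shows that every $b\in CC(a)$ commutes with every $p\in P(a)$ in the restricted sense, so the commutant of $P(a)$ inside $CC(a)$ is all of $CC(a)$. Applying Theorem~\ref{th:cor3.26}(iii) internally to $CC(a)$ then yields $CC(a)=\overline{\mathrm{span}}(P(a))$.

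The main obstacle I anticipate is in (iii): one must carefully verify that Theorem~\ref{th:cor3.26}(iii) is applicable to the \emph{internal} compression-base structure on $CC(a)$. This rests on (a) identifying $P(a)$ as a block of the restricted structure, for which commutativity of $CC(a)$ (from (i)) together with Lemma~\ref{le:commut} is decisive, and (b) observing that the restricted commutant of $P(a)$ inside $CC(a)$ coincides with its ambient commutant on elements of $CC(a)$, which is automatic because the restricted compressions are literally the restrictions $J_p|_{CC(a)}$. Once these are pinned down, the collapse of the internal block structure to a single Boolean block makes the norm-density assertion fall out of the comparability machinery already established.
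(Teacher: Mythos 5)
Your proofs of (i) and (ii) are correct and essentially the paper's own: (i) is the Galois-connection argument $A_0\subseteq C(A_0)\Rightarrow CC(A_0)\subseteq C(A_0)=C(CC(A_0))$, and (ii) uses $C(a)=C(P(a))$ for one inclusion and $a\in C(a)$ together with $PC(a)=C(a)\cap P$ for the other. In (iii) you take a genuinely different route on the nontrivial inclusion $CC(a)\subseteq\overline{\mathrm{span}}(P(a))$. The paper argues elementwise: by \cite[Thm.~3.22]{JP1} every $b\in A$ lies in $\overline{\mathrm{span}}(P(b))$, and Proposition \ref{prop:commutant}(iii) combined with (ii) gives $P(b)\subseteq CC(a)\cap P=P(a)$ for $b\in CC(a)$, so each such $b$ is already in $\overline{\mathrm{span}}(P(a))$. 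You instead argue globally: Proposition \ref{prop:commutant}(iv) applied to $A_0=C(a)$ turns $CC(a)$ into an order unit space with a compression base indexed by $P\cap CC(a)=P(a)$ (by (ii)) and having comparability; by (i) (applied to $\{a\}$, commutative since $aCa$) the space $CC(a)$ is commutative, so $P(a)$ is pairwise Mackey compatible, is the unique block of the restricted structure, and its internal commutant is all of $CC(a)$ (restricted and ambient commutation with $p\in P(a)$ coincide because the restricted compressions are the restrictions of $J_p$, and $bCp$ for $p\in P$ is ambient commutation); Theorem \ref{th:cor3.26}(iii) applied inside $CC(a)$ then yields $CC(a)=\overline{\mathrm{span}}(P(a))$, the closure in $CC(a)$ agreeing with that in $A$ because $CC(a)$ is norm-closed with the induced order and norm. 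Both routes are sound; the paper's is shorter but imports the density statement $b\in\overline{\mathrm{span}}(P(b))$ from \cite{JP1}, which is not restated here, while yours stays within the results quoted in this paper at the cost of the internal-versus-ambient bookkeeping (block, commutant, closure) that you correctly flag and verify — in fact your argument runs parallel to the way the paper itself obtains Theorem \ref{thm:bicommutant} by restricting the structure to $CC(a)$.
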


\begin{proof} Since $A_0$ is commutative, we have $A_0\subseteq C(A_0)$ so that $CC(A_0)\subseteq C(A_0)=C(CC(A_0))$, which implies that $CC(A_0)$ is commutative.
For (ii), we clearly have $P(a)\subseteq CC(P(a))=CC(a)$ by Prop. \ref{prop:commutant} (i). Conversely, let $p\in CC(a)\cap P$, then $a\in C(p)$ since $a\in C(a)$. Moreover, we have
$PC(a)=C(a)\cap P$, so that $p\in PC(PC(a)\cup \{a\})=P(a)$.
For (iii), note that by Theorem \cite[Thm. 3.22]{JP1},   we have for any $b\in A$ that $b\in \overline{\mathrm{span}}(P(b))$. Further, we obtain from (ii) and
Prop. \ref{prop:commutant} (iii) that for $b\in CC(a)$
\[
P(b)\subseteq CC(a)\cap P=P(a),
\]
so that  $CC(a)\subseteq \overline{\mathrm{span}}(P(a))$. The other inclusion follows from (ii) and Prop. \ref{prop:commutant} (iv).

\end{proof}

The following theorem is similar to \cite[Thm. 3.25]{JP1} ( Theorem \ref{th:block}), the proof is exactly the same, using the fact that $P(a)$ is a Boolean algebra.

\begin{theorem}\label{thm:bicommutant} Let $A$ have the comparability property and let  $a\in A$. There is a totally disconnected compact Hausdorff space $X$ such that
\begin{enumerate}
\item $P(a)$ is isomorphic (as a Boolean algebra) to the Boolean algebra $\mathcal P(X)$ of all clopen subsets in $X$.
\item $CC(a)$ is isomorphic (as an order unit space) to a norm-dense order unit subspace in $C(X,\mathbb R)$.
\item If $A$ is norm-complete, then $CC(a)\cong C(X,\mathbb R)$ (as an order unit space).
\end{enumerate}

\end{theorem}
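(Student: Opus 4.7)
The plan is to follow the blueprint of Theorem \ref{th:block} verbatim, replacing the role of a block $B\subseteq P$ by the Boolean subalgebra $P(a)$. The first step is therefore to verify that $P(a)$ really is a Boolean subalgebra of $P$. By Proposition \ref{prop:bicomutant}(i), $CC(\{a\})$ is commutative, and by Proposition \ref{prop:bicomutant}(ii) we have $P(a)=CC(a)\cap P$; hence any two projections in $P(a)$ commute in the sense of Definition \ref{de:extcompat}, which means they are Mackey compatible in $P$. Since $P$ is a regular OML, a set of pairwise Mackey compatible projections lies in a common Boolean subalgebra, and by construction $P(a)$ is closed under the orthocomplementation $p\mapsto 1-p$ and under the finite lattice operations performed inside that Boolean subalgebra. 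Thus $P(a)$ is a Boolean subalgebra of $P$.

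Once $P(a)$ is Boolean, Stone duality produces a totally disconnected compact Hausdorff space $X$ together with a Boolean isomorphism $\phi\colon P(a)\to\mathcal P(X)$, which is (i). For (ii), I will extend $\phi$ by linearity to a map $\Phi$ from the linear span $\mathrm{span}(P(a))\subseteq CC(a)$ into the linear span of clopen characteristic functions in $C(X,\reals)$. To check that $\Phi$ is well-defined, unital, positive, and an order-unit-norm isometry, I use the fact that because $P(a)$ is Boolean every element of $\mathrm{span}(P(a))$ admits a representation $\sum_i\lambda_i p_i$ with pairwise orthogonal $p_i\in P(a)$, for which $\|\sum_i\lambda_ip_i\|_1=\max_i|\lambda_i|$ matches exactly the sup-norm of the corresponding step function on $X$; these are precisely the computations carried out in the proof of Theorem \ref{th:block}.

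By Proposition \ref{prop:bicomutant}(iii), $\mathrm{span}(P(a))$ is norm-dense in $CC(a)$, and by the Stone--Weierstrass theorem the span of clopen characteristic functions is norm-dense in $C(X,\reals)$. Hence $\Phi$ extends by uniform continuity to an order-unit-space isomorphism from $CC(a)$ onto a norm-dense order-unit subspace of $C(X,\reals)$, giving (ii). For (iii), if $A$ is norm-complete then $CC(a)$ is a norm-closed subspace of $A$ (being the commutant of $C(a)$; cf.\ Proposition \ref{prop:commutant}(iv)), hence complete; its image under the isometry $\Phi$ is then both closed and dense in $C(X,\reals)$, so $\Phi$ is surjective.

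I do not expect any genuine obstacle, since the content is fully parallel to Theorem \ref{th:block}; the only points that require care are establishing that $P(a)$ is Boolean and that $CC(a)$ equals the closed linear span of $P(a)$, both of which are already supplied by Proposition \ref{prop:bicomutant}. The rest is the same Stone-theoretic representation, applied to $P(a)$ instead of to a block.
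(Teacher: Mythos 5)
Your proposal is correct and follows essentially the same route as the paper: the paper's proof simply says that the argument of Theorem \ref{th:block} applies verbatim, using the fact that $P(a)$ is a Boolean algebra, and that is precisely what you do (Stone duality for $P(a)$, extension of the isomorphism over $\mathrm{span}(P(a))$, density via Proposition \ref{prop:bicomutant}(iii), and completeness of $CC(a)$ for part (iii)). Your preliminary verification that $P(a)$ is Boolean is exactly the auxiliary fact the paper invokes, so there is nothing substantively different to compare.
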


\subsection{Functional calculus}

We can now define a continuous functional calculus  in the case when $A$ has the comparability property and is norm-complete, so it is an order unit Banach space.
In what follows, for a (compact Hausdorff) topological space $X$  we will use the notation $C(X)=C(X,\mathbb R)$, since we will work with real functions only.

For $a\in A$, let $\Phi:CC(a)\to C(X)$ be the isomorphism as in Theorem \ref{thm:bicommutant} (iii) and let $f=\Phi(a)\in C(X)$ be the element corresponding to $a$.
 Since $C(X)$ is a commutative C*-algebra, there is a continuous functions calculus in
 $C(X)$: let $sp(a):=sp(f)$ be the spectrum of $f$ (that is, the range of $f$ in $\mathbb
 R$). Then $sp(a)$ is a compact subset of $\mathbb R$ and the subalgebra $A(f)$ of $C(X)$
 generated by $f$ and $1$ is isomorphic to $C(sp(a))$. In this isomorphism, $f$ is
 associated with the identity function $t\mapsto t$ and any element in $A(f)$ is of the
 form $g\circ f$ for some $g\in C(sp(a))$. The set $sp(a)$ will be called the spectrum of
 $a$ and the subalgebra $A(f)$ is isomorphic to a subspace in $CC(a)\subseteq A$,
 isomorphic to $C(sp(a))$. The element corresponding to $g\in C(sp(a))$ in this
 isomorphism will be denoted by $g(a):=\Phi^{-1}(g\circ f)$.

 Assume that $A$ is spectral. Our next aim is to show that in this case we may define a
 Borel functional calculus.

 \begin{prop}\label{prop:cca_spectral} Let $A$ be spectral, $a\in A$. Then $CC(a)$ is a spectral order unit space.

 \end{prop}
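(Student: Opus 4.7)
The plan is to verify that the compression base on $CC(a)$ supplied by Proposition \ref{prop:commutant} is spectral. By Proposition \ref{prop:commutant}(iv), $\{J_p|_{CC(a)}\}_{p\in P\cap CC(a)}$ is a compression base on $CC(a)$ with the comparability property, and by Proposition \ref{prop:bicomutant}(ii) the indexing set equals $P(a)$. Since $A$ is norm-complete and $CC(a)$ is norm-closed in $A$, the subspace $CC(a)$ is itself an order unit Banach space, so it only remains to check the projection cover property for this base.

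The heart of the argument is the claim that whenever $e\in[0,1]\cap CC(a)$, the projection cover $e^o\in P$ (which exists because $A$ is spectral) actually lies in $P(a)$. Recall from the excerpt that $e^o\in P(e)$, and by Proposition \ref{prop:bicomutant}(ii) we have $P(e)=CC(e)\cap P$. The key observation is that the bicommutant operation is anti-monotone on the relevant relation: since $e\in CC(a)$, $e$ commutes with every element of $C(a)$, so by the symmetry of the commutativity relation we get $C(a)\subseteq C(e)$, and applying $C$ once more reverses the inclusion to give $CC(e)\subseteq CC(a)$. Consequently
\[
e^o\in CC(e)\cap P\subseteq CC(a)\cap P=P(a).
\]

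Once $e^o\in P(a)$ is established, the defining property of the projection cover, $e\le q\iff e^o\le q$ for $q\in P$, restricts verbatim to $q\in P(a)\subseteq P$. Hence $e^o$ is also the projection cover of $e$ with respect to the compression base $\{J_p|_{CC(a)}\}_{p\in P(a)}$ on $CC(a)$. Combining this with the already-established comparability gives spectrality of the restricted base, so $CC(a)$ is a spectral order unit space.

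The only mildly delicate point is the anti-monotonicity $C(a)\subseteq C(e)\Rightarrow CC(e)\subseteq CC(a)$; this is really just unwinding the definition of the commutant together with symmetry of $C$. Everything else reduces to direct application of Propositions \ref{prop:commutant} and \ref{prop:bicomutant} and the fact, quoted in the excerpt, that $e^o\in P(e)$ for every effect.
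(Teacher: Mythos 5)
Your proof is correct and follows essentially the same route as the paper: reduce to checking the projection cover property for the restricted base $\{J_p\}_{p\in P(a)}$ and verify that the cover $e^o$ taken in $A$ already lies in $P(a)$, hence serves as the cover in $CC(a)$. The only cosmetic difference is that you obtain $e^o\in CC(a)\cap P=P(a)$ via anti-monotonicity of the commutant ($CC(e)\subseteq CC(a)$), while the paper invokes Proposition \ref{prop:commutant}(iii) directly to get $P(e)\subseteq CC(a)$ — the same fact in a different wrapper.
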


\begin{proof} We already know  that $CC(a)$ is an order unit space with the comparability property  with respect to the restricted compression base $\{J_p\}_{p\in P(a)}$
(cf. Proposition \ref{prop:commutant}).
Let $b\in CC(a)$, then $b^\circ\in P(b)\subseteq  CC(a)$. It follows that $CC(a)$ has
also the projection cover property and is therefore spectral.

\end{proof}

 We now have by Thm. \ref{thm:bicommutant}, Prop. \ref{prop:cca_spectral} and
 \cite{FPmonot}, that
$X$ is basically disconnected, equivalently,  $C(X)$ is monotone $\sigma$-complete. The
next result shows that in this case the bounded Baire functions are `close' to continuous
functions. This fact was used in the proof of the Loomis-Sikorski theorem for $\sigma$-MV algebras, \cite{Dv, DvPu}.
 We give the proof in our setting below for convenience of the reader.
 Below,  $\mathcal B_0(X)$ denotes the set of all bounded Baire functions on $X$.

\begin{prop}\label{prop:ls_theorem} Let $X$ be a  basically disconnected compact Hausdorff space. There is a mapping $\Theta$ of $\mathcal B_0(X)$ onto $C(X)$ such that

 \begin{enumerate}
 \item for $h\in \mathcal B_0(X)$, $\{x\in X:\ h(x)\neq \Theta(h)(x)\}$ is a meager set,
 \item $\Theta$ preserves the products and the linear structure: $\Theta(h_1h_2)=\Theta(h_1)\Theta(h_2)$, $\Theta(\lambda h_1+\mu h_2)=\lambda\Theta(h_1)+\mu\Theta(h_2)$,
 \item $\Theta$ is positive: $\Theta(h)\ge 0$ if $h\ge 0$,
 \item $\Theta(\max\{h_1,h_2\})=\Theta(h_1)\vee \Theta(h_2)$, for $h_1,h_2\in \mathcal B_0(X)$,
 \item if $h_n\in \mathcal B_0(X)$ is a nondecreasing sequence with pointwise supremum $h$, then $\bigvee_n\Theta(h_n) =\Theta(h)$ in $C(X)$.
 \end{enumerate}
\end{prop}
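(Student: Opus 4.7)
The plan is to adapt the standard Loomis--Sikorski construction. Set
$$
\mathcal H := \bigl\{h\in \mathcal B_0(X) : \exists\, \tilde h\in C(X) \text{ with } \{x\in X : h(x)\neq \tilde h(x)\} \text{ meager}\bigr\},
$$
and for $h\in\mathcal H$ define $\Theta(h):=\tilde h$. This is well defined: the complement of a meager Baire set is dense ($X$ is compact Hausdorff, hence a Baire space), and two continuous functions agreeing on a dense set are equal. The strategy is to show $\mathcal H=\mathcal B_0(X)$ and then read off (i)--(v) from the construction.

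First I would observe that $C(X)\subseteq\mathcal H$ trivially, and that $\mathcal H$ is closed under linear combinations, products and the binary operation $\max$: if $h_i$ differs from $\tilde h_i\in C(X)$ on a meager set $M_i$, then off the still meager set $M_1\cup M_2$ the corresponding combinations of the $h_i$ coincide with those of the $\tilde h_i$, and $\max\{\tilde h_1,\tilde h_2\}=\tilde h_1\vee \tilde h_2$ lies in $C(X)$. By the uniqueness of $\tilde h$, the map $\Theta$ is then linear, multiplicative, positive and $\vee$-preserving on $\mathcal H$, which handles properties (ii)--(iv).

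The decisive step is closure of $\mathcal H$ under bounded monotone suprema, which simultaneously gives (v). Given a bounded nondecreasing sequence $(h_n)$ in $\mathcal H$ with pointwise supremum $h\in\mathcal B_0(X)$, I may replace $\tilde h_n$ by $\tilde h_1\vee\cdots\vee \tilde h_n$ to assume $(\tilde h_n)$ is itself nondecreasing. Monotone $\sigma$-completeness of $C(X)$ produces $g:=\bigvee_n \tilde h_n\in C(X)$. The main obstacle is to show that $\{x\in X : g(x)>\sup_n \tilde h_n(x)\}$ is meager; granted this, $h=g$ off a meager set, so $h\in\mathcal H$ with $\Theta(h)=g=\bigvee_n \Theta(h_n)$.

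To establish this key fact I would fix $k\in\mathbb N$, consider the closed set $E_k:=\bigcap_n\{x:g(x)\geq \tilde h_n(x)+1/k\}$, and note that $\{g>\sup_n \tilde h_n\}=\bigcup_k E_k$. Suppose some $E_k$ had nonempty interior $U$. Using normality, choose $f\in C(X)$ with $f(x_0)=1$ at a point $x_0\in U$ and $f=0$ off $U$; basic disconnectedness makes $V:=\overline{\{f>1/2\}}\subseteq U$ clopen, so $\chi_V\in C(X)$. Then $g-\tfrac{1}{k}\chi_V\in C(X)$ would be an upper bound for every $\tilde h_n$ yet strictly smaller than $g$ on $V$, contradicting $g=\bigvee_n \tilde h_n$. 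Hence each $E_k$ is nowhere dense and the set in question is meager. With $\mathcal H$ then a subspace of $\mathcal B_0(X)$ containing $C(X)$, closed under products and under bounded monotone suprema, the functional monotone class theorem yields $\mathcal H=\mathcal B_0(X)$, and properties (i)--(v) follow directly from the construction.
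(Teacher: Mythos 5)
Your proof is correct, and while it shares the paper's overall skeleton --- take $\mathcal H$ to be the bounded Baire functions agreeing with a (necessarily unique) continuous function off a meager set, show this class is closed under the algebraic operations and monotone limits, and conclude it is all of $\mathcal B_0(X)$ --- it diverges from the paper at both technical cores, in each case by a legitimate and somewhat shorter route. First, to conclude $\mathcal H=\mathcal B_0(X)$ the paper does not use the functional monotone class theorem: it spends the first half of its proof establishing by hand (via a binary-fraction induction) that $\mathcal B_0(X)$ is the smallest algebra of bounded functions containing $C(X)$ and closed under pointwise monotone limits, and then appeals to that minimality. Your citation of the monotone class theorem with multiplicative class $C(X)$ (whose generated $\sigma$-algebra is precisely the Baire $\sigma$-algebra) is a valid shortcut, since your $\mathcal H$ is a vector space containing the constants and closed under bounded increasing limits. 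Second, for the crucial monotone step the paper compares the lattice supremum $b=\bigvee_n b_n$ with the upper regularization $\tilde b_0(x)=\inf_{U\ni x}\sup_{y\in U}b_0(y)$ of the pointwise limit, invoking Goodearl's Lemma 9.1 (continuity of $\tilde b_0$ because each $b_0^{-1}(\alpha,\infty)$ is an open $F_\sigma$ in a basically disconnected space) together with meagerness of the discontinuity set of a pointwise limit of continuous functions, and identifies $b=\tilde b_0$. You instead show directly that $\{x:\ g(x)>\sup_n\tilde h_n(x)\}=\bigcup_k E_k$ is meager: if some closed $E_k$ had nonempty interior, Urysohn plus basic disconnectedness (the cozero set $\{f>1/2\}$ has clopen closure $V\subseteq E_k$) would make $g-\tfrac1k\chi_V$ an upper bound of the $\tilde h_n$ strictly below $g$ on $V$, contradicting leastness of the supremum; this argument is sound, stays entirely inside the lattice of $C(X)$, and avoids the regularization machinery, whereas the paper's route additionally yields an explicit description of the supremum. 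Two small points you should make explicit, both implicit in your write-up: the representatives $\tilde h_n$ are uniformly bounded (they agree with the uniformly bounded $h_n$ on dense sets), which is what licenses monotone $\sigma$-completeness, and the final exceptional set for $h=g$ must also include the countably many meager sets $\{h_n\neq\tilde h_n\}$.
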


\begin{proof}
Since $\mathcal B_0(X)$ is the set of functions measurable with respect to the $\sigma$-algebra of Baire sets, it is
closed under pointwise suprema and pointwise limits. We will first show that it is the smallest  algebra of
 bounded functions with this property, containing $C(X)$.

 So let this smallest algebra of functions be denoted by $\mathcal T$. Clearly, $\mathcal T\subseteq \mathcal B_0(X)$. Let
 \[
 \mathcal T_0:=\{A\subset X:\ \chi_A\in \mathcal T\}.
 \]
 Then $\mathcal T_0$ is a $\sigma$-algebra of subsets of $X$. Indeed, clearly $\chi_X=1_X\in C(X)\subseteq \mathcal T$, so that $X\in \mathcal T_0$, and if $\chi_A\in \mathcal T$
 then also $\chi_{X\setminus A}=1_X-\chi_A\in \mathcal T$, so that $\mathcal T_0$ is closed under complements. Let $A_n\in \mathcal T_0$, then
 \[
 \chi_{\cup_n A_n}(x)=
 \min\{1, \sum_n \chi_{A_n}(x)\}=-\max\{-1,\sum_n -\chi_{A_n}\}
 \]
 so that $\chi_{\cup_n A_n}\in \mathcal T$, hence $\cup_n A_n \in \mathcal T_0$.

 We next show that $\mathcal T$ is the set of all bounded measurable functions with respect to $\mathcal T_0$. So let $f\in \mathcal T$, we have to prove that for all $t\in \mathbb R$,
 $F_t:=\{x:\ f(x)\ge t\}\in \mathcal T_0$. Since $\mathcal T$ is a linear subspace containing the constant $1_X$, and $1_X$ is clearly $\mathcal T_0$-measurable, we may assume that $f$ has values in $[0,1]$. So let $t\le 0$, then $F_t=X\in \mathcal T_0$. Similarly, if $t>1$, then $F_t=\emptyset\in \mathcal T_0$. For $t=1$, we have $F_1=\{x: f(x)=1\}$ and
  \[
  \chi_{F_1}=1-\lim_{n}\min\{n(1-f),1\}\in \mathcal T
  \]
  We now use the fact that the binary fractions of the form $\sum_{i=1}^n a_i2^{-i}$, $a_i\in \{0,1\}$ are dense in $(0,1)$ and we have $F_{t}=\cap_n F_{r_n}$ for $r_n\nearrow
   t$, so that is enough to assume that $t$ is a binary fraction. So let $t=\sum_{i=1}^n a_i2^{-i}$, we will proceed by induction on $n$. So assume that $n=1$, then we have for $t=1/2$
   \[
   F_{1/2}=\{x:\ \min\{2f(x),1\}=1\}\in \mathcal T_0
   \]
 by the previous considerations,  since $\min\{2f, 1_X\}\in \mathcal T$ has values in $[0,1]$. Assume the assertion holds for $n$ and let $t=\sum_{i=1}^{n+1}a_i 2^{-i}$. Put
 $s=\sum_{i=2}^{n+1} a_i2^{1-i}$.
 If  $a_1=0$, then $t=s/2$ and we have
 \[
 F_t=\{x:\ 2f(x)\ge s\}=\{x: \min\{2f(x),1\}\ge s\}\in \mathcal T_0,
 \]
by the induction assumption and the fact that $\min\{2f,1_X\}\in \mathcal T$ has values in
$[0,1]$. Similarly, if $a_1=1$, then $t=(1+s)/2$ and
\[
F_t=\{x:\ 2f(x)-1\ge s\}=\{x: \max\{2f(x)-1,0\}\ge s\}\in \mathcal T_0.
\]
Hence all functions in $\mathcal T$ are $\mathcal T_0$-measurable. Conversely, since any $\mathcal T_0$-measurable function is a pointwise limit of simple functions which are
linear combinations of characteristic functions in $\mathcal T$, it is clear that any such function belongs to $\mathcal T$.
Since  $C(X)\subseteq \mathcal T$, all continuous functions are $\mathcal T_0$-measurable, and
 since Baire $\sigma$-algebra is smallest with this property, it must be contained in $\mathcal T_0$. This implies that $\mathcal B_0(X)\subseteq \mathcal T\subseteq \mathcal B_0(X)$
(cf. the proof of \cite[Thm. 7.1.7]{DvPu}).

We now introduce an equivalence relation in $\mathcal B_0(X)$ as follows: $f\sim g$ if $\{x:\ f(x)\ne g(x)\}$ is a meager set. Note that $f\sim g$ implies $f=g$ for $f,g\in C(X)$, so that
 every equivalence class may contain a unique continuous function (if any). Let
\[
\mathcal T':=\{f\in \mathcal B_0(X):\ f\sim g,\text{ for some } g\in C(X)\}.
\]
We will show that $\mathcal T'=\mathcal B_0(X)$. It is easily seen that $\mathcal T'$ is a subalgebra in $\mathcal B_0(X)$ and $\max\{f,g\}\in \mathcal T'$ for $f,g\in \mathcal T'$.
It is now enough to prove that $\mathcal T'$ is  closed under pointwise limits of non-decreasing sequences, since then $\mathcal T'$ is also closed under all pointwise limits
 and this implies $\mathcal T'=\mathcal B_0(X)$ by minimality of $\mathcal B_0(X)$ in the first part of the proof.

So let $f_n\in \mathcal T'$, $f_n\le f_{n+1}$, $f_n \nearrow f$  and let $b_n\in C(X)$ be such that $f_n\sim b_n$. Then  $b_n\vee b_{n+1} \sim f_n\vee f_{n+1}=f_{n+1}\sim b_{n+1}$, so that $b_n\vee b_{n+1}$
 and $b_{n+1}$ are continuous functions that differ only on a meager set, so that we must
 have $b_n\vee b_{n+1}=b_{n+1}$ and $b_n\le b_{n+1}$. Further, since $f_n$ is bounded,
 $b_n$ is a bounded increasing sequence.  Since $X$ is basically disconnected,
  $C(X)$ is monotone $\sigma$-complete, so that there is some $b=\bigvee_n b_n$ in $C(X)$. Let us also denote $b_0:=\lim_n b_n$, the pointwise limit. Our aim is to prove that
  $f\sim b_0$ and $b_0\sim b$, so that $f\in \mathcal T'$.
  We have
  \[
  \{x:\  f(x)\ne b_0(x)\}=\{x:\ f(x)< b_0(x)\}\cup \{x:\ f(x)>b_0(x)\}.
  \]
  If $f(x)<b_0(x)=\lim_n b_n(x)$, then there is some $n$ such that $f_n(x)\le f(x)<b_n(x)\le b_0(x)$. Similarly, if $f(x)>b_0(x)$ then $f(x)\ge f_n(x)>b_0(x)\ge b_n(x)$. It follows that
  \[
  \{x:\  f(x)\ne b_0(x)\}\subseteq \bigcup_n \{x:\ f_n(x)\ne b_n(x)\},
  \]
 which is a meager set. Hence $f\sim b_0$. To prove $b_0\sim b$, we invoke the construction from \cite[Lemma 9.1]{Good}: put
 \[
 \tilde b_0(x):=\inf_{U\in \mathcal N(x)}\sup_{y\in U} b_0(y),
 \]
 where $\mathcal N(x)$ is the collection of open neighborhoods of $x$ in $X$, then $\tilde b_0$ is continuous if $b_0^{-1}(\alpha,\infty)$ is an open $F_\sigma$ subset for all
 $\alpha\in \mathbb R$. This last condition is satisfied, since
 \[
 b_0^{-1}(\alpha,\infty)=\{x:\ b_0(x)>\alpha\}=\{x:\ \exists n, b_n(x)>\alpha\}=\cup_n b_n^{-1}(\alpha,\infty)
 \]
is an open $F_\sigma$-set, since $b_n$ are continuous. We also have $\tilde b_0(x)=b_0(x)$ in all $x$ where $b_0$ is continuous, so that
\[
\{x:\ \tilde b_0(x)\ne b_0(x)\}\subseteq \{x:\ b_0 \text{ is discontinuous in } x\}.
\]
Since $b_0$ is a pointwise limit of a sequence of continuous functions, this last set is meager, hence $f\sim b_0\sim \tilde b_0$ and $f\in \mathcal T'$. Note also that by definition, $\tilde b_0\ge b_0\ge b_n$, for all $n$,
 so that $b_0\le b=\bigvee_n b_n\le \tilde b_0$. This implies $b\sim b_0\sim \tilde b_0$, so that $b=\tilde b_0$.

 It follows that $\mathcal T'=\mathcal B_0(X)$, so that every bounded Baire function is equivalent to a unique continuous
 function. Let $\Theta: \mathcal B_0(X)\to C(X)$
 be the corresponding map. Then it is easily seen by construction that $\Theta$ has the properties (i)-(v)
(cf. the proof of \cite[Thm. 7.1.22]{DvPu}).

\end{proof}

 We can now define a Borel function calculus on $A$. For a subset $S\subseteq \mathbb R$,
 we will denote by $\mathcal B(S)$ the set of bounded Borel functions $S\to \mathbb R$.

\begin{theorem}[Borel functional calculus]\label{thm:borel_fc} Let $A$ be a spectral order unit Banach
space. Then for every $a\in A$, there is  a positive unital linear map $\Psi_a: \mathcal B(sp(a))\to CC(a)\subseteq
A$ such that:
\begin{enumerate}
\item $\Psi_a(id)=a$, where $id$ is the identity function $t\mapsto t$,
\item $\Psi_a(\chi_B)\in P(a)\subseteq P$ for the characteristic function $\chi_B$ of any Borel subset $B\subseteq sp(a)$,
\item $\Psi_a(\max\{h_1,h_2\})=\Psi_a(h_1)\vee \Psi_a(h_2)$ in $CC(a)$,
\item If $h_n$ is a nondecreasing sequence of functions in $\mathcal B(sp(a))$ with a
pointwise supremum $h$, then $\bigvee_n \Psi_a(h_n)=\Psi(h)$ in $CC(a)$.

\end{enumerate}
We will use the notation $g(a):=\Psi_a(g)$ for any $g\in \mathcal B(sp(a))$.
\end{theorem}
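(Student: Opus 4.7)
The plan is to construct $\Psi_a$ as a composition of three natural maps: pullback along $f$, the Loomis--Sikorski map $\Theta$ of Proposition~\ref{prop:ls_theorem}, and the inverse of the representation of $CC(a)$ as continuous functions. First, I apply Proposition~\ref{prop:cca_spectral} to obtain that $CC(a)$ is itself a spectral order unit Banach space, and then Theorem~\ref{thm:bicommutant}(iii) to produce an order unit isomorphism $\Phi:CC(a)\to C(X,\mathbb R)$ onto a basically disconnected compact Hausdorff space $X$ (basic disconnectedness is exactly the observation made just before Proposition~\ref{prop:ls_theorem}). Put $f:=\Phi(a)\in C(X)$, so that $sp(a)=f(X)$, and note that the continuous functional calculus defined above the theorem is realized by $g\mapsto\Phi^{-1}(g\circ f)$.

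Given $g\in \mathcal B(sp(a))$, the definition is
\[
\Psi_a(g):=\Phi^{-1}(\Theta(g\circ f)).
\]
For this to make sense, $g\circ f$ must lie in $\mathcal B_0(X)$: since $sp(a)\subseteq\mathbb R$ is metrizable, $g$ is a bounded Baire function, and continuity of $f$ pulls Baire sets of $\mathbb R$ back to Baire sets of $X$. Linearity, positivity and unitality of $\Psi_a$ are immediate from the corresponding properties of the three composed maps, and the image lies in $CC(a)$ by construction. For (i), $id\circ f=f\in C(X)$, so $\Theta(f)=f$ and $\Psi_a(id)=\Phi^{-1}(f)=a$. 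For (ii), $\chi_B\circ f=\chi_{f^{-1}(B)}$ is an idempotent in $\mathcal B_0(X)$, so by multiplicativity of $\Theta$ (Proposition~\ref{prop:ls_theorem}(ii)) its image is an idempotent in $C(X)$, hence the indicator of a clopen subset $U\subseteq X$; the identification $P(a)\cong\mathcal P(X)$ from Theorem~\ref{thm:bicommutant}(i) then places $\Phi^{-1}(\chi_U)$ in $P(a)\subseteq P$. For (iii), pullback along $f$ commutes with $\max$, Proposition~\ref{prop:ls_theorem}(iv) transfers this through $\Theta$, and $\Phi^{-1}$ is an order isomorphism. For (iv), pullback along $f$ preserves pointwise suprema of bounded sequences, Proposition~\ref{prop:ls_theorem}(v) converts this into a supremum in $C(X)$, and $\Phi^{-1}$ preserves countable suprema because both $C(X)$ (basic disconnectedness) and $CC(a)$ (spectrality together with Theorem~\ref{th:equiv}) are monotone $\sigma$-complete.

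The only genuinely delicate step is verifying $g\circ f\in\mathcal B_0(X)$, because $X$ need not be metrizable and so the Baire and Borel $\sigma$-algebras on $X$ may differ; the argument must go through continuity of $f$ and the coincidence of Borel with Baire on the metrizable target $\mathbb R$, rather than treating $g$ as merely Borel on $X$. Everything else is then a formal verification of properties of the composition, with the main input being the four properties of $\Theta$ supplied by Proposition~\ref{prop:ls_theorem} and the order-theoretic completeness of the spaces involved.
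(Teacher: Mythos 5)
Your construction is exactly the paper's: define $\Psi_a(g)=\Phi^{-1}(\Theta(g\circ f))$ with $\Phi:CC(a)\to C(X)$ from Theorem~\ref{thm:bicommutant}, $X$ basically disconnected by Proposition~\ref{prop:cca_spectral}, and deduce (i)--(iv) from the corresponding properties of $\Theta$ in Proposition~\ref{prop:ls_theorem}; the verifications match step for step. Your only addition is spelling out why $g\circ f\in\mathcal B_0(X)$ (continuity of $f$ pulls Borel$=$Baire sets of $\mathbb R$ back to Baire sets of $X$), a point the paper asserts without comment, and that argument is correct.
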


Let us stress that the suprema in the above theorem are taken in $CC(a)$ and they are not
necessarily suprema in $A$.

\begin{proof} As before, there is an isomorphism $\Phi :CC(a)\to C(X)$ for some basically disconnected compact
Hausdorff space  $X$. Let $f=\Phi(a)$, so that   $sp(a)=sp(f)$.
Notice that for  $g\in \mathcal B(sp(a))$, $g\circ f$ is in $\mathcal B_0(X)$, so we may   put
\[
\Psi_a(g):=\tilde \Phi(g\circ f)\in CC(a),
\]
where
\[
\tilde \Phi:=\Phi^{-1}\circ \Theta: \mathcal B_0(X)\to C(X)\to CC(a).
\]
 It follows by Proposition \ref{prop:ls_theorem} (ii) and
(iii) that $\Psi_a$ is positive and linear, further, we have
\[
\Psi_a(1_{sp(a)})=\tilde \Phi(1_{sp(f)}\circ f)=\tilde \Phi(1_X)=1,
\]
since $\Theta(1_X)=1_X$ and $\Phi$ is unital.  Let us check the properties (i) -
(iv). First, we have
\[
\Psi_a(id)=\tilde \Phi(f)=a,
\]
since clearly $\Theta (f)=f$. Further, let $\chi_B$ be the characteristic function of a Borel
subset $B\subseteq sp(a)$, then
\[
\Psi_a(\chi_B)=\tilde \Phi(\chi_B\circ f)=\tilde \Phi(\chi_{f^{-1}(B)}).
\]
By Proposition \ref{prop:ls_theorem} (ii),
$\Theta(\chi_{f^{-1}(B)})^2=\Theta(\chi_{f^{-1}(B)}^2)=\Theta(\chi_{f^{-1}(B)})$, which means
that $\Theta(\chi_{f^{-1}(B)})$ is a continuous functions with values in $\{0,1\}$, that is, a
characteristic function of a clopen subset. By construction, such functions in $C(X)$
correspond to elements in $P(a)$. This proves (ii). The property (iii) follows easily from
Proposition \ref{prop:ls_theorem} (iv).
Let now $h_n\in \mathcal B(sp(a))$ be a nondecreasing sequence
with pointwise supremum $h$, then $h_n\circ f$ is a nondecreasing sequence in $\mathcal
B_0(X)$ with pointwise supremum $h\circ f$ and we have by Proposition
\ref{prop:ls_theorem} (v) that
\[
\Psi_a(h)=\tilde \Phi(h\circ f)=\bigvee_n \tilde \Phi(h_n\circ f)=\bigvee_n \Psi_a(h_n).
\]

\end{proof}

We now use the above Borel functional calculus to describe the spectral resolution of $a$.
Let us define
\[
\xi_a(B):=\chi_B(a),\qquad B\subseteq sp(a) \text{ a Borel subset.}
\]
Then by Theorem \ref{thm:borel_fc}, it is easily seen that $\xi_a$ is a
$\sigma$-homomorphism from the Borel subsets in $sp(a)$ into the Boolean $\sigma$-algebra
$P(a)$. Further, if $g\in \mathcal B(sp(a))$, then it is not difficult to see that $\xi_{g(a)}=\xi_a\circ
g^{-1}$.

Let $h\in C(X)$ and let $\{x\in X, h(x)\ne 0\}^-$ be the support of $h$. Since $X$ is
basically disconnected, the support of any continuous function is a clopen set. It is
easily seen that the characteristic function of the support of $h$ is the projection cover
$h^\circ$ and for $b\in CC(a)$, $\Phi(b^\circ)=\Phi(b)^\circ$.  Let now $\lambda\in
\mathbb R$ and let $f=\Phi(a)$, then
\[
p_{a,\lambda}=1-(a-\lambda)_+^\circ=\Phi^{-1}(1-(f-\lambda)_+^\circ)=\Phi^{-1}(1-\chi_{f^{-1}((\lambda,\infty))^-})=\Phi^{-1}(\chi_{f^{-1}((-\infty,\lambda])^i})
\]
where for $Y\subseteq X$, $Y^i$ denotes the interior of $Y$.
Now note that the difference of the closed set $f^{-1}((-\infty,\lambda])$ and  its
interior is a
meager set, hence we obtain
\[
\xi_a((-\infty,\lambda])=\tilde\Phi(\chi_{f^{-1}((-\infty,\lambda])})=\Phi^{-1}(\chi_{f^{-1}((-\infty,\lambda])^i})=
p_{a,\lambda}.
\]

Recall that $a$ has an integral expression with respect to its spectral resolution:
\[
a=\int \lambda dp_{a,\lambda},
\]
in the sense that the integral sums $\sum_{i=1}^n
\lambda_i(p_{a,\lambda_i}-p_{a,\lambda_{i-1}})$ for $\lambda_0\le L_a\le \lambda_1\le
\dots\le U_a\le \lambda_n$ converge to $a$ in norm as $\max_i|\lambda_i-\lambda_{i-1}|\to
0$. If $g\in C(sp(a))$, then we have
\begin{equation}\label{eq:gint}
g(a)=\int g(\lambda)dp_{a,\lambda}.
\end{equation}
Indeed, let $u_1,\dots, u_n$ be mutually orthogonal projections in $P(a)$, (that is, the
corresponding clopen subsets of $X$ are mutually disjoint) and let $b= \sum_i \lambda_i
u_i$, such elements are called simple. Then it is easy to see that $b^k=\sum_i
\lambda_i^ku_i$ for any $k\in \mathbb N$ and hence $p(b)=\sum_i
p(\lambda_i)u_i$ for any polynomial $p$. By Stone-Weierstrass theorem and continuity of
the functional calculus, we obtain that $g(b)=\sum_ig(\lambda_i)u_i$ for any $g\in
C(sp(a))$. Since $a$ is the norm limit of integral sums $b_n$ that are simple, it follows that
$g(a)$ is the norm limit of $g(b_n)$. Hence $g(a)$ is the norm limit of  integral sums of the form $\sum_i
g(\lambda_i)(p_{a,\lambda_i}-p_{a,\lambda_{i-1}})$  as $\max_i
|\lambda_i-\lambda_{i-1}|\to 0$, this implies the statement.

For a Borel function $g:sp(a)\to \mathbb R$ we do not have the above integral as a norm
limit in general, but only in some weaker sense:
For any $\sigma$-additive state $\rho$ on $CC(a)$, $\rho \circ\xi_a$ defines a probability
measure on Borel subsets of $sp(a)$ (and can be extended to $\mathbb R$), with
distribution function $F_a(\lambda)=\rho(p_{a,\lambda})$. Using the integral representation of $a$ we obtain
\[
\rho(a)=\int \lambda dF_a(\lambda)=\int \lambda \rho\circ\xi_a(d\lambda).
\]
For a Borel measurable function  $g:sp(a)\to {\mathbb R}$ we have $\rho\circ
\xi_{g(a)}=(\rho\circ \xi_a)\circ g^{-1}$
and so
\[
\rho(g(a))=\int \lambda(\rho\circ \xi_a)(g^{-1}(d\lambda)) = \int g(\lambda)(\rho\circ \xi_a)(d\lambda)
\]
by the integral transformation theorem. If there are enough  $\sigma$-additive states on
$CC(a)$ (or on $A$) to separate points, then we can define the integral \eqref{eq:gint} in
a weak sense. In general, we only have the relation of the spectral resolutions as
\[
\xi_{g(a)}=\xi_a\circ g^{-1}\ \implies \
p_{g(a),\lambda}=\xi_a(g^{-1}(-\infty,\lambda])).
\]

 \section{Order unit spaces that are JB-algebras}

We recall that a \emph{Jordan algebra} over $\mathbb R$ is a vector space $A$ over $\mathbb R$ equipped with a commutative bilinear product $\circ$ that satisfies the identity
\begin{equation}\label{eq:jordprod}
(a^2\circ b)\circ a=a^2\circ(b\circ a) \ \mbox{for all} \ a,b\in A,
\end{equation}
here $a^2=a\circ a$.
A \emph{JB-algebra} is a Jordan algebra $A$ over $\mathbb R$ with identity element $1$
equipped with a complete norm satisfying the following requirements for $a,b\in A$
(\cite{AS1}):
\begin{enumerate}
\item[{\rm(JBi)}] $\|a\circ b\|\leq \|a\|\|b\|$,
\item[{\rm(JBii)}] $\|a^2\|\leq \|a\|^2$,
\item[{\rm(JBiii)}] $\|a^2\|\leq \|a^2+b^2\|$.
\end{enumerate}
Notice that by (JBi) the Jordan product is norm-continuous. A JB-algebra that is the dual
of a Banach space is called a JBW-algebra. For more information see e.g. \cite{HOS}. We will
use \cite{AS1} as a general reference.

The following theorem shows relations between JB-algebras and order unit spaces.

\begin{theorem}{\rm \cite[Theorem 1.11]{AS1}} If $A$ is a JB-algebra, then $A$ with its given
norm  is an order unit
Banach space with $A^+=\{a^2: a\in A\}$ and the identity $1$ as distinguished order unit. Furthermore, for each $a\in A$
\begin{equation}\label{eq:JBous}
-1\leq a\leq 1 \ \implies \ 0\leq a^2\leq 1.
\end{equation}
Conversely, if $A$ is a order unit Banach space equipped with a Jordan product for which
the distinguished order unit is the identity and satisfies  (\ref{eq:JBous}), then $A$ is a JB-algebra with the order unit norm.
\end{theorem}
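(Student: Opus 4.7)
The plan is to handle the two implications separately, relying in each case on the interaction of the Jordan product with the norm and the order.

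For the forward direction, I would first develop a functional calculus for a single element. The Jordan subalgebra $J(a,1)$ generated by $a$ and $1$ is associative and commutative, since powers of a single element are unambiguous in any Jordan algebra. Using (JBi)--(JBiii) together with norm-completeness, one shows that the spectral radius coincides with the norm on $J(a,1)$, producing an isometric isomorphism $J(a,1)\cong C(\sigma(a))$ with $\sigma(a)\subseteq\reals$ compact. Setting $A^+:=\{b\in A:\sigma(b)\subseteq[0,\infty)\}$, I would verify that this coincides with $\{b^2:b\in A\}$ by applying the square root inside $J(b,1)$ whenever $\sigma(b)\subseteq[0,\infty)$, and invoke (JBiii) to rule out an element being simultaneously a square and the negative of a nonzero square. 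The archimedean property, the order unit role of $1$, and the coincidence of the given norm with the order unit norm are then read off directly from the $C(\sigma(a))$ picture. Finally, $-1\le a\le 1\ \Longrightarrow\ 0\le a^2\le 1$ follows from $\sigma(a^2)=\{\lambda^2:\lambda\in\sigma(a)\}\subseteq[0,1]$.

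For the converse, the task is to deduce (JBi)--(JBiii) from the order unit Banach space structure together with \eqref{eq:JBous}. Axiom (JBii) is immediate: if $a\ne 0$, then $-1\le a/\|a\|\le 1$ gives $0\le(a/\|a\|)^2\le 1$ by \eqref{eq:JBous}, whence $\|a^2\|\le\|a\|^2$. Axiom (JBiii) follows because \eqref{eq:JBous} (after rescaling) makes every square positive, so $a^2\le a^2+b^2$ and therefore $\|a^2\|\le\|a^2+b^2\|$ by monotonicity of the order unit norm on $A^+$. For (JBi), I would use the polarization identity
\[
4(a\circ b)=(a+b)^2-(a-b)^2,
\]
together with the positivity of $(a\pm b)^2$ to obtain $\|4(a\circ b)\|\le(\|a\|+\|b\|)^2$. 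The crucial refinement is the rescaling $a\circ b=(ta)\circ(b/t)$, optimized at $t=\sqrt{\|b\|/\|a\|}$, which sharpens this to $\|a\circ b\|\le\|a\|\|b\|$.

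The main technical obstacle is the forward direction, specifically the identification $J(a,1)\cong C(\sigma(a))$, since this is where all three JB-axioms come in decisively: (JBii) and (JBiii) together force the spectral radius to coincide with the norm, while (JBiii) is what makes squares positive in a manner compatible with the intended order. The converse direction is comparatively elementary once the polarization-plus-rescaling trick for (JBi) is in hand.
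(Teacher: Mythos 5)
The paper offers no proof of this statement at all -- it is quoted as background from \cite[Theorem 1.11]{AS1} -- so your proposal can only be measured against the standard Alfsen--Shultz argument, which it broadly follows. Your converse direction is essentially complete and correct: (JBii) from \eqref{eq:JBous} after normalization, (JBiii) from positivity of squares plus monotonicity of the order unit norm on the positive cone, and (JBi) from the polarization identity $4(a\circ b)=(a+b)^2-(a-b)^2$ together with the rescaling $a\circ b=(ta)\circ(t^{-1}b)$; this is exactly the classical route.

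The forward direction, however, has a genuine gap beyond the acknowledged black box $J(a,1)\cong C(\sigma(a))$ (which is \cite[Proposition 1.12]{AS1} and carries most of the weight). For $(A,A^+,1)$ to be an order unit space you must show that $A^+=\{a^2:a\in A\}$ is a convex cone, i.e.\ closed under addition, and that it is norm-closed (this is what delivers the archimedean property). Neither fact concerns a single element, so neither can be ``read off directly from the $C(\sigma(a))$ picture'': two positive elements $a,b$ need not lie in a common associative subalgebra, and $\sigma(a),\sigma(b)\subseteq[0,\infty)$ does not formally give $\sigma(a+b)\subseteq[0,\infty)$. The missing idea is the norm characterization of positivity: using the isometry $J(a,1)\cong C(\sigma(a))$ one proves $a\geq 0$ if and only if $\|t1-a\|\leq t$ for some (equivalently all) $t\geq\|a\|$; additivity of $A^+$ then follows from the triangle inequality, and norm-closedness is immediate, while properness (the part you do address via (JBiii)) is the easy half. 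One further caution: as transcribed in the paper, (JBii) reads $\|a^2\|\leq\|a\|^2$, which is already a consequence of (JBi); your claim that the spectral radius coincides with the norm on $J(a,1)$, and the positivity criterion above, in fact require the equality $\|a^2\|=\|a\|^2$, which is the form of the axiom in \cite{AS1} and should be stated or derived explicitly.
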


In any Jordan algebra,  a \emph{triple product} $\{abc\}$  is defined by
\begin{equation}
\{abc\}= (a\circ b)\circ c+(b\circ c)\circ a-(a\circ c)\circ b.
\end{equation}
The special case of the triple product $\{abc\}$ with $a=c$ is denoted
\begin{equation}\label{tripl} U_ab=\{aba\}=2a\circ(a\circ b)-a^2\circ b.
\end{equation}
We will also need the  Jordan multiplication operator
\[
T_a: A\to A, \quad  T_ab=a\circ b,\ b\in A.
\]
It is clear that $T_a$ defines a bounded linear operator on $A$ for all $a\in A$.

An element $p$ in a JB-algebra $A$ is called a \emph{projection} if $p^2=p$. In this case,
we have
\begin{equation}\label{eq:UpTp}
U_p=2T_p^2-T_p,\qquad T_p=\frac12(I+U_p-U_{1-p}),
\end{equation}
here $I$ is the identity map $A\to A$.

Let $P$ be
the set of all projections in $A$.  It was shown in \cite{JP1} that every compression on a JB-algebra $A$ is of the form
\begin{equation}\label{JBcompr}
J_p(a)=U_p a=\{pap\}
\end{equation}
for some $p\in P$ and the  set  $(U_p)_{p\in P}$ forms a compression base in $A$ \cite[Corollary 5.11]{JP1}.
Note that this is a unique compression base in $A$ which is maximal, that is, is not
contained in any other compression base. In what follows, we always assume this
compression base in a JB-algebra  $A$.

In this section, we will assume that $A$ is an order unit Banach space with a compression
base $(J_p)_{p\in P}$ satisfying the comparability property.
We will show some additional conditions under which $A$ becomes a JB-algebra. These
conditions are the same as in \cite{AS1} and the proofs follow the same steps, we only
show that our (weaker) assumption of comparability is sufficient.

\begin{theorem}\label{th:JBalg1}
 $A$ is a JB-algebra if and only if for all $p,q\in P$, we have
\begin{equation}\label{eq:JB_condition}
J_p(q)+J_{1-p}(1-q)=J_q(p)+J_{1-q}(1-p).
\end{equation}
\end{theorem}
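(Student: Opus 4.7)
For the forward direction, I would note that when $A$ is a JB-algebra the compressions are given by $J_p = U_p$ (cf.\ \cite[Corollary 5.11]{JP1}) and directly expand, using (\ref{tripl}), $p^2 = p$, and $(1-p)\circ(1-q) = 1-p-q+p\circ q$, to obtain
\[
U_p q + U_{1-p}(1-q) = 1 - p - q + 2\, p\circ q,
\]
which is symmetric in $p$ and $q$ and therefore coincides with $U_q p + U_{1-q}(1-p)$.

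For the converse, the plan is to construct the Jordan product via the operator
\[
T_p := \tfrac{1}{2}(I + J_p - J_{1-p}), \qquad p \in P,
\]
motivated by the JB-algebra identity (\ref{eq:UpTp}). First I would rewrite (\ref{eq:JB_condition}) in the equivalent form $T_p q = T_q p$ for all $p, q \in P$: using linearity of $J_{1-p}$ and $J_{1-p}(1) = 1-p$, one substitutes $J_{1-p}(1-q) = (1-p) - J_{1-p}(q)$ into (\ref{eq:JB_condition}) and simplifies. Next I would verify that each $T_p$ is a contraction: for $-1 \leq a \leq 1$, positivity of $J_p$ and $J_{1-p}$ gives $-p \leq J_p(a) \leq p$ and $-(1-p) \leq J_{1-p}(a) \leq 1-p$, hence $-1 \leq T_p a \leq 1$.

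Following the Alfsen and Shultz program in \cite{AS1}, the main step is to extend $p \mapsto T_p$ to all of $A$ and define $a \circ b := T_a b$. For each $a \in A$, Theorem \ref{th:cor3.26}(ii) places $a$ in a C-block $C(B_a)$, which by Theorem \ref{th:block}(iii) is isomorphic as an order unit space to $C(X_a, \mathbb R)$ for some totally disconnected compact Hausdorff $X_a$. Via the continuous functional calculus of Section \ref{sec:fc}, $a$ is the norm-limit of simple elements $a_n = \sum_i \lambda_i^{(n)} p_i^{(n)}$ with pairwise orthogonal $p_i^{(n)} \in P(a) \subseteq B_a$; I would set $T_{a_n} := \sum_i \lambda_i^{(n)} T_{p_i^{(n)}}$ and pass to the norm-limit to obtain $T_a$. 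Commutativity $a \circ b = b \circ a$ then follows from $T_p q = T_q p$ by linearity and density, and within any single C-block the product coincides with pointwise multiplication on $C(X_a, \mathbb R)$ (Lemma \ref{le:commut}(iv) identifies $T_p q = p \wedge q$ for Mackey compatible projections), so the Jordan and JB-norm axioms hold automatically there.

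The main obstacle will be verifying the Jordan identity (\ref{eq:jordprod}) and the norm conditions (JBi)-(JBiii) for elements in distinct C-blocks, together with the uniform bound $\|T_{a_n}\| \leq \|a_n\|$ required for the norm-limit to exist. As in \cite{AS1}, I would prove the Jordan identity first for projections, where it reduces to a relation among compositions of the $J_p$ that follows from (\ref{eq:JB_condition}) and the compression-base axioms of Definition \ref{de:comprba}, then extend to simple elements by linearity and to all of $A$ by norm continuity. The essential observation, which makes the Alfsen-Shultz argument transfer to our weaker setting, is that every step relies only on the compression-base structure, the C-block decomposition (Theorems \ref{th:cor3.26}-\ref{th:block}), and the functional calculus of Section \ref{sec:fc}, all of which are available under comparability alone.
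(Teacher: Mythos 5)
Your forward direction is fine and is essentially the paper's (there one simply observes, via \eqref{eq:UpTp}, that $T_p=\tfrac12(I+U_p-U_{1-p})$ is the Jordan multiplication operator, so \eqref{eq:JB_conditionT} is just commutativity of the product). Your converse follows the paper's general route — the operators $T_p$, the reformulation $T_pq=T_qp$, density of the span of projections, functional calculus inside $CC(a)$ — but it has genuine gaps at exactly the points where the work lies. First, your construction of the product through operator limits $T_{a_n}\to T_a$ needs both the estimate you flag ($\|T_{a_n}-T_{a_m}\|\le\|a_n-a_m\|$, i.e.\ $\|\sum_i\lambda_iT_{p_i}\|\le\max_i|\lambda_i|$ for orthogonal $p_i$) and additivity $T_{a+a'}=T_a+T_{a'}$. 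The former has no evident proof before the JB structure exists: compressions are not additive in their foci, so $\sum_i\lambda_iT_{p_i}$ is not of the form $T_c$ for any visible $c$, and the desired bound is essentially (JBi), which is what you are trying to prove. The latter is not addressed at all: $a$, $a'$ and $a+a'$ generally lie in different C-blocks, so your per-element recipe gives no reason why $a\mapsto T_a$ should be linear. The paper avoids both issues by defining $a\circ b:=\sum_i\alpha_iT_{p_i}(b)$ for $a=\sum_i\alpha_ip_i$, $b=\sum_j\beta_jq_j$ in $A_0=\mathrm{span}(P)$; condition \eqref{eq:JB_conditionT} gives $\sum_i\alpha_iT_{p_i}(b)=\sum_j\beta_jT_{q_j}(a)$, hence well-definedness, commutativity and bilinearity come for free, and the product is then extended by separate norm continuity from the dense subspace $A_0$ (Theorem \ref{th:cor3.26}), with no operator limit $T_{a_n}\to T_a$ and no bound $\|T_{a_n}\|\le\|a_n\|$ ever required.

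Second, you frame the verification of (JBi)--(JBiii) ``for elements in distinct C-blocks'' as the remaining task; that is precisely what cannot be done directly, and it is not needed. The paper invokes \cite[Theorem 1.11]{AS1}: since $A$ is already a complete order unit space, it suffices to exhibit a bilinear commutative product with identity $1$ satisfying the Jordan identity \eqref{eq:jordprod} and the one-variable condition \eqref{eq:JBous}; the latter is checked inside $CC(a)\cong C(X)$ exactly as you do for a single block, so no cross-block norm inequality ever enters. Third, your plan for the Jordan identity (``first for projections, then by linearity and continuity'') does not work as stated: for a single projection the identity is trivial, and the identity is quadratic in $a$, so it does not extend by linearity; nor does it reduce to \eqref{eq:JB_condition}. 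The paper's argument takes $a=\sum_i\alpha_ip_i$ simple with mutually orthogonal $p_i\in P(a)$ and $b\in A$ \emph{arbitrary}, and uses operator commutativity $T_{p_i}T_{p_j}=T_{p_j}T_{p_i}$ for compatible projections (a consequence of Lemma \ref{le:commut}(iv)), giving $(p_i\circ b)\circ p_j=p_i\circ(b\circ p_j)$ and hence $(a^2\circ b)\circ a=a^2\circ(b\circ a)$; since only $a$ is decomposed, it is irrelevant whether $b$ commutes with $a$, and the general case follows by norm continuity. Supplying these three points would close the gaps and essentially reproduce the paper's proof.
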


\begin{proof} (\cite[Theorem 9.43]{AS1}) For $p\in P$, we define the operator $T_p:A\to A$ by $T_p=\frac12(I+J_p-J_{1-p})$. The condition \eqref{eq:JB_condition}
 is equivalent to
 \begin{equation}\label{eq:JB_conditionT}
 T_p(q)=T_q(p).
 \end{equation}
 Note that in the case that $A$ is a JB algebra,
$T_p$ is precisely the multiplication operator, so that the conditions
\eqref{eq:JB_conditionT} are necessary.

For the converse, let $A_0=\mathrm{span}(P)$, then $A_0$ is a subspace in $A$, which is
dense in $A$ by comparability (c.f. Theorem  \ref{th:cor3.26}). Let $a,b\in A_0$, $a=\sum_i \alpha_i p_i$ and $b=\sum_j\beta_jq_j$ and define
\[
a\circ b=\sum_i \alpha_i T_{p_i}(b)=\sum_{i,j} \alpha_i\beta_j T_{p_i}(q_j)=\sum_{i,j} \alpha_i\beta_j T_{q_j}(p_i)=\sum_j \beta_j T_{q_j}(a)=b\circ a.
\]
It is clear that this does not depend on the representation of $a$ and $b$ as linear combinations of projections. Moreover, $a\circ b$ is bilinear, commutative and separately norm continuous. Since $A_0$ is dense in $A$, $a\circ b$ uniquely extends to a bilinear, commutative and separately continuous product on $A$.
To show that $A$ is a JB-algebra, it is now enough to show that $a\circ b$ is a Jordan product satisfying the condition
\begin{equation}\label{eq:jordan_JB}
-1\le a\le 1 \quad \implies \quad 0\le a\circ a\le 1,\qquad a\in A,
\end{equation}
cf. \cite[Theorem 1.11]{AS1}.

  By Theorem \ref{thm:bicommutant}, we have $a\in CC(a)=\overline{\mathrm{span}}(P(a))\simeq C(X)$
  for some totally disconnected compact Hausdorff space $X$. Let
  $\Phi: CC(a)\to C(X)$ be the isomorphism. Note that for $p,q\in P(a)$, we have $p\circ q= T_p(q)=J_p(q)=p\wedge q$, so that we have
  \[
  \Phi(p\circ q)=\Phi(p\wedge q)=\Phi(p)\wedge \Phi(q)=\Phi(p)\Phi(q),
  \]
  since $\Phi(p)$ and $\Phi(q)$ are characteristic functions. For a simple element $a_0=\sum_i \alpha_i p_i$ where $p_i\in P(a)$, we
  obtain
  \[
  a_0\circ a_0=\sum_{i,j} \alpha_i\alpha_j J_{p_i}(p_j)=\sum_{i,j}\alpha_i\alpha_jp_i\wedge p_j,
  \]
  which implies $\Phi(a_0\circ a_0)= \Phi(a_0)^2$.
  Since $a$ is the norm limit of simple elements of this form, we obtain by continuity of $\circ$ and the product of functions that $\Phi(a\circ a)=\Phi(a)^2$.
  Since $\Phi(a)$ is a function and the isomorphism preserves the order, this implies \eqref{eq:jordan_JB}.

To finish the proof, we now have to prove the Jordan identity
\[
a^2\circ(b\circ a)=(a^2\circ b)\circ a,
\]
note that we may now safely write $a\circ a=a^2$ since by the previous paragraph it is in agreement with the continuous functional calculus obtained in the previous section.
Note first that for commuting  projections $p,q\in P$ we have (using Lemma \ref{le:commut}
(iii))
\[
(p\circ b)\circ q=T_qT_p(b)=T_pT_q(b)= p\circ (b\circ q).
\]
Let now  $a_0=\sum_i \alpha_i p_i$, where $p_1,\dots,p_n$ are mutually orthogonal projections (i.e. $p_i\wedge p_j=0$).  Then $a^2=\sum_i \alpha^2_i p_i$ and we have
\[
a^2\circ (b\circ a)=\sum_{i,j} \alpha_i^2 \alpha_jp_i\circ (b\circ p_j)=\sum_{i,j} \alpha_i^2\alpha_j (p_i\circ b)\circ p_j=(a^2\circ b)\circ a.
\]
Now note that any simple element of the form $\sum_i \beta_i q_i$ with $q_i\in P(a)$ can be written as a finite linear combination of mutually orthogonal projections in $P(a)$.
 Since each $a\in A$ is a norm limit of such simple elements, the statement follow by continuity of the product.

 \end{proof}

 There is another characterization of JB-algebras  using the functional calculus on the space $A$. Define the product
 \[
 a\circ b= \frac 14[ (a+b)^2+(a-b)^2],\qquad a,b\in A.
 \]
\begin{theorem} The product  $a\circ b$ is bilinear if and only if \eqref{eq:JB_condition}
holds. In this case, $A$ is a JB-algebra and  $a\circ
b$ is the Jordan product in $A$.

\end{theorem}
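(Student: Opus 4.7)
For the direction ``\eqref{eq:JB_condition} $\Rightarrow$ $\circ$ bilinear'': by Theorem~\ref{th:JBalg1}, $A$ is a JB-algebra with Jordan product $\bullet$. In any JB-algebra the unital associative subalgebra generated by $a$ under $\bullet$ agrees with the image of the continuous functional calculus on $a$ from Section~\ref{sec:fc}, so $a^2 = a \bullet a$. Expanding $(a\pm b)^2 = (a\pm b)\bullet(a\pm b) = a^2 \pm 2(a\bullet b) + b^2$ and subtracting yields $a\circ b = a\bullet b$, which is bilinear and coincides with the Jordan product.

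For the converse, assume $\circ$ is bilinear; the polarization formula automatically makes $\circ$ commutative (both $(a+b)^2$ and $(a-b)^2 = (b-a)^2$ are symmetric under swapping $a$ and $b$). By the argument in the proof of Theorem~\ref{th:JBalg1}, \eqref{eq:JB_condition} is equivalent to $T_p(q) = T_q(p)$ for all $p, q \in P$, where $T_p := \tfrac12(I + J_p - J_{1-p})$. It therefore suffices to establish the key identity $p\circ q = T_p(q)$ for all projections, since commutativity of $\circ$ then gives $T_p(q) = p\circ q = q\circ p = T_q(p)$.

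For the key identity, decompose $q = J_p(q) + J_{1-p}(q) + q_{1/2}$ with $q_{1/2} := q - J_p(q) - J_{1-p}(q)$. Since $J_pJ_{1-p} = J_{p\wedge(1-p)} = 0$ by Lemma~\ref{le:commut}(iv), both $J_p(q_{1/2})$ and $J_{1-p}(q_{1/2})$ vanish, so $q_{1/2}$ is ``off-diagonal.'' The summands $J_p(q), J_{1-p}(q)$ lie in $C(p)$ (images of complementary compressions) and sit together with $p$ in a common C-block $C(B) \cong C(X_B)$ where, by the computation in Section~\ref{sec:fc}, $\circ$ restricts to pointwise multiplication. Since $J_p(q) \le p$ and $J_{1-p}(q) \le 1-p$ correspond to functions supported in the clopen set representing $p$ and in its complement respectively, one obtains $p\circ J_p(q) = J_p(q)$ and $p\circ J_{1-p}(q) = 0$. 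Bilinearity then gives $p\circ q = J_p(q) + p\circ q_{1/2}$, which matches $T_p(q) = J_p(q) + \tfrac12 q_{1/2}$ provided $p\circ q_{1/2} = \tfrac12 q_{1/2}$.

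The principal obstacle is this last identity: $p\circ a = \tfrac12 a$ whenever $J_p(a) = J_{1-p}(a) = 0$. Via bilinearity it is equivalent to the Peirce-type identity $(p+a)^2 = p + a + a^2$ in the continuous functional calculus on $p+a$, an assertion about how the functional calculus interacts with the given compression base. The strategy will be to approximate $a$ by linear combinations of projections (using density of $\mathrm{span}(P)$ from Theorem~\ref{th:cor3.26}), reduce via bilinearity and linearity of $J_p, J_{1-p}$ to the commuting case where the identity is immediate from pointwise multiplication in the relevant C-block, and pass to the limit by norm continuity of $\circ$ inherited from continuity of the functional calculus. Once the off-diagonal identity is in hand, the proof closes as indicated.
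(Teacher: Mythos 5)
Your forward direction is fine and essentially the paper's own argument (polarize the Jordan product obtained from Theorem~\ref{th:JBalg1} and use that its squares agree with the functional calculus). The converse, however, has a genuine gap. Your whole reduction rests on the Peirce-type identity $p\circ a=\tfrac12 a$ whenever $J_p(a)=J_{1-p}(a)=0$, applied to $q_{1/2}=q-J_p(q)-J_{1-p}(q)$, and you do not prove it; you only sketch a strategy. That strategy cannot work as described: the only element that is simultaneously off-diagonal for $p$ and commutes with $p$ is $0$ (if $a\in C(p)$ and $J_p(a)=J_{1-p}(a)=0$, then $a=J_p(a)+J_{1-p}(a)=0$), so approximating $a$ by elements of $\mathrm{span}(P)$ and ``reducing to the commuting case'' is circular --- the approximants $a_n$ neither commute with $p$ nor are off-diagonal, and splitting each $a_n=J_p(a_n)+J_{1-p}(a_n)+(a_n)_{1/2}$ just reproduces the same unproven identity for $(a_n)_{1/2}$; for a single projection $q$ it is literally the original claim $p\circ q=T_p(q)$. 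In the Alfsen--Shultz setting, establishing such Peirce relations for a product defined only through squares is exactly the hard content, and nothing in the comparability framework hands it to you for free, since the functional calculus only controls commuting (C-block) situations, while the identity you need is about the genuinely non-commuting, off-diagonal part.

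The paper takes a different route that avoids Peirce decompositions altogether: from bilinearity, $\circ$ is commutative, continuous, power associative (powers of $a$ live in $CC(a)\cong C(X)$), and satisfies \eqref{eq:jordan_JB}; by the Alfsen--Shultz characterization of JB-algebras as complete order unit spaces with a commutative power-associative product (\cite[Thm.~2.49 and A.51]{AS1}), $A$ is then a JB-algebra. Comparability forces $P$ to consist of all sharp elements, and uniqueness of the retraction with a given focus in a JB-algebra (\cite[Thm.~5.10]{JP1}) gives $J_p=U_p$, for which \eqref{eq:JB_condition} holds automatically since $T_p(q)=p\circ q=q\circ p=T_q(p)$. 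If you want to keep your direct approach, you would have to supply a genuine proof of the off-diagonal identity (or of $p\circ q=T_p(q)$ for arbitrary projections), which is not easier than invoking the cited characterization. A small additional point: the product must be read as the polarization $\tfrac14[(a+b)^2-(a-b)^2]$ (as your ``subtracting'' implicitly does); with the printed plus sign the stated identities fail.
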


 \begin{proof} (cf. \cite[Cor. 9.44]{AS1})
 Assume that \eqref{eq:JB_condition} holds. By Theorem \ref{th:JBalg1} and its proof, $A$
 is then a JB-algebra with Jordan product $*$ such that $a*a=a^2$ agrees with the continuous functional calculus on $A$.
 By bilinearity of $*$,
 \[
 a\circ b= \frac14[ (a+b)*(a+b)+(a-b)*(a-b)]=a*b
 \]
is bilinear as well.

Conversely, assume that $a\circ b$ is bilinear. It is clear by the definition that $a\circ b$ is commutative and  continuous in both variables.
We will use the fact that a JB-algebra can be characterized as a  complete order unit space with a commutative power associative product for which the order unit 1 is the
identity and \eqref{eq:jordan_JB} holds, \cite[Thm. 2.49 and A.51]{AS1}.  It is clear that  $a\circ a=a^2$ coincides with the squares defined by the continuous functional calculus on $A$, so that
\eqref{eq:jordan_JB} holds as before, since it holds in $C(X)\simeq CC(a)$. Similarly, $\circ$ is power associative since so is the usual product of functions in
$C(X)$.
It follows that $A$ with respect to this product is a JB-algebra. Note that since $A$ has comparability, $P$ must contain all the sharp elements \cite[Lemma 3.20]{JP1},
By \cite[Theorem 5.10]{JP1}, for each sharp element $p$ there is a unique retraction with focus $p$, namely $U_p=\{p\cdot p\}$. It follows that for all $p\in P$, $J_p=U_p$ for which
the condition \eqref{eq:JB_condition} holds.

 \end{proof}

\subsection{Generalized spin factors}

We now consider the following example of an order unit space obtained from a normed space
$(X,\|\cdot\|)$, cf. \cite{JP1, Berd, BerdOd}. In this case, $A=\mathbb R\times X^*$, with the
positive cone
\[
A^+=\{(\alpha,y),\ \|y\|\le \alpha\}
\]
here $(X^*,\|\cdot\|)$ is the dual space to $(X,\|\cdot\|)$, and the order unit is
$(1,0)$. By \cite[Thm. 6.5]{JP1}, there exists a spectral compression base $(J_p)_{p\in P}$
in $A$ if and only if $X$ is a reflexive and smooth Banach space. This means
that $X$ is reflexive and every nonzero element $x\in X$ attains its norm at a unique
element of the dual unit ball that is,
\[
\partial_x:=\{y\in X^*:\ \|y\|=1, y(x)=\|x\|\}
\]
is a singleton for all $0\ne x\in X$. Equivalently, $X$ is reflexive and the dual space $X^*$ is strictly
convex, which means that every boundary point of the unit ball is an extreme point.
By \cite{BerdOd}, see also  \cite[Thm. 6.6]{JP1}, $A$ is spectral in the Alfsen-Shultz
sense if and only if $X$ is reflexive, smooth,  and strictly convex as well. In this case,
the space $A$ was called a generalized spin factor in \cite{BerdOd}. In the present work,
 by a generalized spin factor we mean any space $A$ obtained in the above way from a Banach
space $(X,\|\cdot\|)$.

In the special case when $X$ is a Hilbert space with inner product $(\cdot,\cdot)$, $A$ is
a JBW-algebra called a spin factor, \cite[Def. 3.33]{AS1}, with the Jordan product
\[
(\alpha,y)\circ(\beta,z)=(\alpha\beta+(y,z), \alpha z+\beta y).
\]

\begin{theorem} Let $A$ be a generalized spin factor obtained from a Banach  space
$(X,\|\cdot\|)$. The following are equivalent.
\begin{enumerate}
\item $A$ is spectral and satisfies the condition \eqref{eq:JB_condition};
\item $A$ is a JB-algebra;
\item $A$ is a spin factor;
\item $A$ is a JBW-algebra.
\end{enumerate}

\end{theorem}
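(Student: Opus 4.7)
The plan is to establish the cycle (i) $\Rightarrow$ (ii) $\Rightarrow$ (iii) $\Rightarrow$ (iv) $\Rightarrow$ (i). Three of these implications are essentially immediate from material already assembled, and all the substance lies in (ii) $\Rightarrow$ (iii). Specifically, (i) $\Rightarrow$ (ii) is the content of the preceding theorem applied to the product $a\circ b=\tfrac14[(a+b)^2+(a-b)^2]$; (iii) $\Rightarrow$ (iv) holds by the very definition of a spin factor; and (iv) $\Rightarrow$ (i) combines Theorem \ref{th:JBalg1} (condition \eqref{eq:JB_condition} is satisfied by every JB-algebra) with the facts recalled in the introduction that every JBW-algebra is Alfsen--Schultz spectral and that Alfsen--Schultz spectrality entails Foulis spectrality.

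For the nontrivial implication (ii) $\Rightarrow$ (iii), assume $A=\mathbb R\times X^*$ carries a JB-product $\circ$. The first step is to classify the projections (Jordan idempotents) of $A$. By comparability of any JB-algebra and \cite[Lemma 3.20]{JP1}, these are precisely the sharp elements of the effect algebra $E$. A direct effect-algebraic calculation will show that the sharp elements are $0$, $1$ and the elements $p=(\tfrac12,y)$ with $\|y\|=\tfrac12$: any $(\alpha,y)\in E$ with $\|y\|<\min(\alpha,1-\alpha)$ admits a nontrivial $q=(q_0,0)\le p\wedge(1-p)$, and in the edge case $\|y\|=\alpha<\tfrac12$ a nontrivial witness is $q=(c\alpha,cy)$ with any $0<c\le(1-2\alpha)/(2\alpha)$; the case $\|y\|=1-\alpha$, $\alpha>\tfrac12$ follows by the symmetry $p\mapsto 1-p$.

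Using bilinearity and commutativity of $\circ$ together with the fact that $(1,0)$ is the Jordan unit, the whole product is encoded by a bounded symmetric bilinear form $B\colon X^*\times X^*\to\mathbb R$ and a bounded symmetric bilinear map $T\colon X^*\times X^*\to X^*$ via $(0,y)\circ(0,z)=(B(y,z),T(y,z))$. Evaluating $p^2=p$ at an arbitrary projection $p=(\tfrac12,y)$ with $\|y\|=\tfrac12$ yields $B(y,y)=\tfrac14=\|y\|^2$ and $T(y,y)=0$; bilinear homogeneity upgrades this to $B(y,y)=\|y\|^2$ and $T(y,y)=0$ for every $y\in X^*$, and polarisation then forces $T\equiv 0$ and identifies $B$ with a symmetric bilinear form whose associated quadratic form is $\|\cdot\|^2$. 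Hence the norm on $X^*$ obeys the parallelogram law; being complete, $X^*$ is a Hilbert space, dually $X$ is a Hilbert space, $\circ$ takes the spin-factor form $(\alpha,y)\circ(\beta,z)=(\alpha\beta+B(y,z),\alpha z+\beta y)$, and $A$ is a spin factor.

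The main obstacle I anticipate is the effect-algebraic classification of the sharp elements of $E$: one must exhibit explicit witnesses $q\le p\wedge(1-p)$ ruling out sharpness in every case where $(\alpha,y)\in E$ is neither $0$, $1$, nor a point of the sphere $\{(\tfrac12,y):\|y\|=\tfrac12\}$, and this requires a small but careful case analysis on the boundary of $A^+\cap(1-A^+)$. Once those projections are in hand, the passage from $p^2=p$ to the Hilbert structure on $X^*$ is a routine bilinear algebra computation.
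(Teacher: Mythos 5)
Your reduction of the theorem to (ii) $\Rightarrow$ (iii) and your handling of the easy implications match the paper, but your argument for (ii) $\Rightarrow$ (iii) is genuinely different. The paper stays inside its compression-base machinery: it quotes the structure results of \cite{JP1} (Lemma 6.1, Prop.\ 6.4, Thm.\ 6.5) to the effect that each sharp element $p=\frac12(1,y)$, $\|y\|=1$, must be the focus of a compression of the form $U_p(\alpha,w)=(\alpha+\langle w,x_y\rangle)p$ with $x_y$ a norm-attaining functional; this yields reflexivity of $X$ and strict convexity of $X^*$ (hence Foulis spectrality along the way), and then condition \eqref{eq:JB_condition} translates into the symmetry $\langle z,x_y\rangle=\langle y,x_z\rangle$, from which a linear map $\psi:X^*\to X$ and the inner product $(y,z)=\langle y,\psi(z)\rangle$ are built. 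You instead work directly with the Jordan product: writing $(0,y)\circ(0,z)=(B(y,z),T(y,z))$ and evaluating $p^2=p$ at the sharp elements $p=(\tfrac12,y)$ gives $B(y,y)=\|y\|^2$, $T\equiv0$, hence the parallelogram law on $X^*$. This is more elementary and self-contained (no appeal to the classification of compressions on generalized spin factors, nor to the norm-attainment/reflexivity step), and it has the bonus of exhibiting $\circ$ as the spin-factor product; what it does not give directly is the paper's intermediate verification of spectrality, which in both treatments ultimately follows from $A$ being a spin factor anyway. Note also that the ``main obstacle'' you anticipate is not one: you never need the full classification of sharp elements, only that every $(\tfrac12,y)$ with $\|y\|=\tfrac12$ is sharp, which is the one-line triangle-inequality computation $2\|y\|\le\|y-z\|+\|y+z\|\le 1-2\alpha$ (this is \cite[Lemma 6.1]{JP1}).

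One justification in your sketch is wrong and must be repaired, although the step itself is true. You invoke ``comparability of any JB-algebra'' together with \cite[Lemma 3.20]{JP1} to identify projections with sharp elements. A general JB-algebra does \emph{not} have the comparability property with respect to its canonical compression base $(U_p)_{p\in P}$: for instance in $C[0,1]$ the only projections are $0$ and $1$, and no projection separates the positive and negative parts of $t\mapsto t-\tfrac12$; indeed, the paper's proof of Theorem \ref{thm:Rickart} has to \emph{establish} comparability from an extra hypothesis. Assuming comparability of $A$ here would also be circular, since spectrality of $A$ is part of what is being proved. The correct tool, used by the paper, is \cite[Lemma 5.7]{JP1}: in any JB-algebra a sharp element of $E$ is a projection. (A direct argument: for sharp $p\in E$, pass to $A(p,1)\cong C(X)$; the function $\min\{p,1-p\}$ lies below both $p$ and $1-p$ in $A$, hence vanishes, so $p$ takes values in $\{0,1\}$ and $p^2=p$.) With this citation replacing your appeal to comparability, your bilinear-form argument goes through.
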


\begin{proof} By what was said above and Theorem \ref{th:JBalg1},  it is enough to show that if $A$ is a JB-algebra, then $A$ must be a spin factor.
So assume that $A$ is a JB-algebra. Then any sharp element $p\in A$ is a projection \cite[Lemma
5.7]{JP1}, and hence the focus of the compression $U_p$, moreover, the compressions
satisfy condition \eqref{eq:JB_condition}. Let $y\in X^*$, $\|y\|=1$. By \cite[Lemma 6.1]{JP1},
the element $p=1/2(1,y)\in A$ is sharp. But then $p$ must be the focus of a compression,
 hence by \cite[Prop. 6.4]{JP1}, $y$ must be an extremal point in the dual unit ball such that
 \[
\partial^*_y=\{x\in X,\ \|x\|=1,\ \<y,x\>=1\}\ne \emptyset.
 \]
 This implies that any nonzero element in $X^*$ attains its norm on the unit ball in $X$,
 so that $X$ is reflexive. Moreover, we also see that any boundary point of the dual unit
 ball is an extremal point, hence $X^*$ is strictly convex. Note that by \cite[Thm. 6.5]{JP1}, this
 shows that $A$ is a spectral order unit space.

For a sharp element $p=1/2(1,y)$
with $y\in X^*$, $\|y\|=1$, we have by \cite[Prop. 6.4]{JP1} that the compression with
focus $p$ must be of the form
\[
U_p((\alpha,w))=(\alpha +\<w,x_y\>)p,
\]
for some (fixed) choice of $x_y\in \partial_{y}^*$. Note also that we have $1-p=1/2(1, -y)$.
Let us check the condition
\eqref{eq:JB_condition} in this case. After some computations, we obtain
\begin{equation}\label{eq:JB_condition_factor}
x_{-y}=-x_y \ \text{ and } \ \<z,x_y\>=\<y,x_z\>,\qquad \forall y,z\in X^*,\
\|y\|=\|z\|=1.
\end{equation}
Let us define a mapping $\psi:X^*\to X$ as
\[
\psi(y)=\|y\|x_{\|y\|^{-1}y},\ y\ne 0,\quad \psi(0)=0.
\]
Then $\psi(ty)=t\psi(y)$ holds for all $t\in \mathbb R$. Further, for any $z,y,w\in X^*$
with $\|w\|=1$, we have by \eqref{eq:JB_condition_factor}
\begin{align*}
\<w,\psi(y+z)\>&=\|y+z\|\<w,x_{\|y+z\|^{-1}(y+z)}\>=\<y+z,x_w\>=\|y\|\<\frac{y}{\|y\|},x_w\>+\|z\|\<\frac{z}{\|z\|},x_w\>\\
&=\|y\|\<w,x_{\|y\|^{-1}y}\>+\|z\|\<w,x_{\|z\|^{-1}z}\>=\<w,\psi(y)+\psi(z)\>,
\end{align*}
so that $\psi$ is a linear map. Further, for $y,z\in X^*$,
\[
\<y,\psi(z)\>=\|z\|\<y,x_{\|z\|^{-1}z}\>=\|y\|\<z,x_{\|y\|^{-1}y}\>=\<z,\psi(y)\>
\]
and $\<y,\psi(y)\>=\|y\|^2$. It follows that $(y,z):= \<y,\psi(z)\>$ defines a
scalar product in $X^*$ such that $\|y\|=\sqrt{(y,y)}$. Hence $X^*$, and therefore also
$X=X^{**}$, is a Hilbert space, so that $A$ is a spin factor.

\end{proof}

\section{Rickart JB-algebras}

Let $A$ be a JB-algebra.
In \cite{Arz}, the following symbols were defined for a subset $S$ of $A$:

\begin{eqnarray}
& S^{\perp} = \{ a\in A: U_a(x)=0, \forall x\in S\},\\
& ^{\perp }S =\{x\in A: U_a(x)=0, \forall a\in A\},\\
& ^{\perp}S^+= {^\perp S}\bigcap A^+.
\end{eqnarray}

The following notion of a \emph{Rickart JB-algebra} was introduced by Ayupov and Arzikulov \cite{AyArz}.

\begin{definition} A JB-algebra $A$ is Rickart if one of the following equivalent statements is true
\begin{enumerate}
\item[(A1)] For every element $x\in A^+$ there is a projection $p\in A$ such that
\[
\{ x\} ^{\perp} =U_p(A)
\]
\item[(A2)] For  every element $x\in A$ there is a projection $p\in A$ such that
\[
^{\perp}\{ x\}^+ = U_p(A)^+.
\]
\end{enumerate}

\end{definition}

It was proved in \cite{AyArz} that the self-adjoint part of a C*-algebra $\mathcal A$ is a
Rickart JB-algebra if and only if  $\mathcal A$ is a Rickart C*-algebra, so that this
notion generalizes Rickart C*-algebras. The aim of this section is to prove the following
generalization of a result proved  for C*-algebras in  \cite{SW}.

\begin{theorem}\label{thm:Rickart} Let $A$ be a JB-algebra. Then $A$ is Rickart if and only if every maximal associative subalgebra in $A$ is monotone $\sigma$-complete.

\end{theorem}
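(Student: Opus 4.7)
The strategy reduces Theorem \ref{thm:Rickart} to combining three inputs: (a) from \cite{JP1}, a JB-algebra $A$ is Rickart if and only if the canonical compression base $(U_p)_{p\in P}$ is Foulis-spectral; (b) Theorem \ref{th:equiv}, which in the presence of the comparability property equates spectrality of the compression base with monotone $\sigma$-completeness of every C-block; and (c) the identification, inside a JB-algebra having the comparability property, of the C-blocks of $(U_p)_{p\in P}$ with the maximal associative Jordan subalgebras of $A$. Granted (c), the equivalence in Theorem \ref{thm:Rickart} follows by chaining these facts in both directions.

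The bulk of the work lies in (c). For a block $B\subseteq P$, Theorem \ref{th:cor3.26}(iii) gives $C(B)=\overline{\mathrm{span}}(B)$, and for commuting projections $p,q\in B$ we have $p\circ q = U_p(q) = p\wedge q\in B$, so $C(B)$ is a norm-closed Jordan subalgebra in which every pair of elements operator commutes (Lemma \ref{le:commut}(iv) together with continuity of $\circ$). To see that $C(B)$ is a \emph{maximal} associative subalgebra, I would argue that if $a\in A$ operator commutes with all of $C(B)$, then $P(a)$ Foulis-commutes with $B$ via the identification of Jordan operator-commutativity with Foulis-commutativity (the content drawn from \cite{Wet2}); hence $B\cup P(a)$ is a commuting set of projections, maximality of $B$ forces $P(a)\subseteq B$, and Proposition \ref{prop:bicomutant}(iii) then yields $a\in CC(a)=\overline{\mathrm{span}}(P(a))\subseteq C(B)$. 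Conversely, any maximal associative JB-subalgebra $M$ has its projections forming a maximal commuting family, i.e.\ a block $B$, and the same argument gives $M\subseteq C(B)$, with equality by maximality of $M$.

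With (c) in hand, the forward direction of Theorem \ref{thm:Rickart} is immediate: if $A$ is Rickart, then $(U_p)_{p\in P}$ is Foulis-spectral by \cite{JP1}, so comparability holds, and Theorem \ref{th:equiv}(iii) combined with (c) yields monotone $\sigma$-completeness of every maximal associative subalgebra. For the converse, the hypothesis must first be used to bootstrap comparability: given $a\in A$, pick a maximal associative subalgebra $M\ni a$; by hypothesis $M$ is a monotone $\sigma$-complete commutative JB-algebra, hence isomorphic to $C(X)$ for a basically disconnected compact Hausdorff $X$, which supplies inside $M\subseteq A$ a projection witnessing a $P$-orthogonal decomposition $a=a^+-a^-$. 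Comparability established, (c) applies, the hypothesis becomes exactly condition (iii) of Theorem \ref{th:equiv}, spectrality of $(U_p)_{p\in P}$ follows, and Rickartness is delivered by \cite{JP1}. The main obstacle is the identification (c), specifically the coincidence of Jordan operator-commutativity with the Foulis-commutativity associated to $(U_p)_{p\in P}$; this is the JB-algebraic input from \cite{Wet2} alluded to in the introduction, and once it is invoked the rest of the argument is a transparent application of the machinery developed in Sections~2 and~3.
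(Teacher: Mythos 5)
Your overall architecture is the same as the paper's (Rickart $\Leftrightarrow$ spectral from \cite{JP1}, Theorem \ref{thm:spectral_complete}, comparison of C-blocks with maximal associative subalgebras, and the comparability bootstrap in the converse), but there is a genuine gap in your input (c), and it sits exactly where the forward direction needs it. You claim, under comparability alone, that the maximal associative subalgebras \emph{coincide} with the C-blocks; the forward implication requires the half ``every maximal associative subalgebra $M$ is a C-block,'' and your argument for that half does not work as sketched. First, the assertion that the projections of $M$ form a block is unjustified: if $q\in P$ Foulis-commutes with all projections of $M$, comparability alone gives no way to conclude that $q$ operator commutes with every element of $M$ (elements of $M$ need not be approximable by projections lying in $M$), so maximality of $M$ cannot be used to force $q\in M$; and if instead $B$ is merely some block containing $P\cap M$, then for $a\in M$ and $p\in B\setminus M$ there is no reason that $a\in C(p)$, so $M\subseteq C(B)$ does not follow ``by the same argument.'' Second, the ``identification of Jordan operator-commutativity with Foulis-commutativity'' is not what \cite{Wet2} provides: \cite{Wet2} gives operator commutativity $\Leftrightarrow$ membership in a common associative subalgebra (Lemma \ref{lemma:opcom_assoc}), whereas the implication ``$a$ and $b$ operator commute $\Rightarrow aCb$'' is Lemma \ref{lemma:commut_rickart}, whose proof genuinely uses the Rickart hypothesis: one needs the spectral projections of $a$ to lie in $A(a,1)$ (Corollary \ref{coro:spectrum}), which rests on carriers and projection covers (Lemmas \ref{le:carrier} and \ref{le:support}), i.e.\ on strictly more than comparability. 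Under comparability alone only the opposite implication (Lemma \ref{lemma:spectral_operator}) is available --- and indeed your (correct) proof that each $C(B)$ is a maximal associative subalgebra only ever needs operator commutation with projections, where Proposition \ref{prop:opcom} suffices, so no appeal to \cite{Wet2} beyond Lemma \ref{lemma:opcom_assoc} is needed there. Since in the forward direction $A$ is Rickart anyway, the gap is repairable, but the repair is precisely the carrier/spectral-projection material (Lemma \ref{le:carrier}, Lemma \ref{le:support}, Corollary \ref{coro:spectrum}, Lemma \ref{lemma:commut_rickart}, Proposition \ref{prop:Cblock_maxassoc}) that your proposal bypasses by asserting (c) in too great a generality.

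A smaller omission occurs in your comparability bootstrap for the converse. The comparability property demands a projection $p\in P^{\pm}(a)\subseteq P(a)$, i.e.\ in the P-bicommutant of $a$, not merely some projection of the maximal associative subalgebra $M$ giving a decomposition $a=a^+-a^-$. The paper secures this by producing $p$ inside $A(a,1)$ (as a norm limit, via Dini's theorem, of elements of $A(a,1)$) and then citing $A(a,1)\subseteq C(PC(a))$ from \cite{JP1} to conclude $p\in P(a)$; your sketch should include this step. With that added, your converse direction (bootstrap comparability, use that every C-block is a maximal associative subalgebra as in Corollary \ref{coro:cblock_jb}, then Theorem \ref{thm:spectral_complete} and Rickart $\Leftrightarrow$ spectral) agrees with the paper's proof.
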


The proof will be based on the following result.

\begin{theorem}\label{th:JBrickspect} {\rm \cite[Theorem 5.16]{JP1}} Let $A$ be a JB-algebra. Then $A$ is Rickart if and
only if $A$  is spectral.
\end{theorem}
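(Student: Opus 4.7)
The strategy is to bridge the two conditions via two structural results already at hand: Theorem~\ref{th:JBrickspect} (Rickart $\Leftrightarrow$ spectral for JB-algebras) and Theorem~\ref{th:equiv} (under comparability, spectral $\Leftrightarrow$ every C-block is monotone $\sigma$-complete). The key bridge to be established is that the C-blocks of a JB-algebra coincide with its maximal associative subalgebras; once this identification is in place, the theorem becomes a straightforward relay through the two previous equivalences.

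To establish this identification, recall that for a JB-algebra with the canonical compression base $(U_p)_{p\in P}$, Foulis commutation with a projection agrees with operator commutation, so the extended commutation relation $aCb$ of Definition~\ref{de:extcompat} captures exactly when $a$ and $b$ lie in a common associative JB-subalgebra. Given a block $B\subseteq P$, the C-block $C(B)=\overline{\mathrm{span}}(B)$ from Theorem~\ref{th:cor3.26}(iii) is therefore a norm-closed associative JB-subalgebra of $A$; and it is maximal associative because Theorem~\ref{th:cor3.26}(ii)--(iii) shows that C-blocks are precisely the maximal sets of mutually commuting elements of $A$. Conversely, any maximal associative subalgebra $M$ has its projections forming a pairwise Mackey compatible subset of $P$, contained in some block $B$; by maximality, this block equals the full projection set of $M$, and $M = C(B)$ because elements of $M$ are norm-approximable by linear combinations of their own spectral projections.

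For the forward direction, if $A$ is Rickart, then Theorem~\ref{th:JBrickspect} makes $A$ spectral, and Theorem~\ref{th:equiv}(i)$\Rightarrow$(iii) gives that every C-block is monotone $\sigma$-complete; by the identification above, the same holds for every maximal associative subalgebra. For the converse, suppose every maximal associative subalgebra of $A$ is monotone $\sigma$-complete. Being a monotone $\sigma$-complete unital associative JB-algebra, each such $M$ is isomorphic to $C(Y)$ for some basically disconnected compact Hausdorff $Y$. Any $a\in A$ lies in some maximal associative $M$, and within $C(Y)$ the positive part of $a$ has a clopen support, yielding a projection $p\in M\cap P$ with $p\in P^{\pm}(a)$; hence the compression base of $A$ has the comparability property. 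Combined with the identification and Theorem~\ref{th:equiv}(iii)$\Rightarrow$(i), this shows $A$ is spectral, and Theorem~\ref{th:JBrickspect} then delivers that $A$ is Rickart.

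The main obstacle is the identification of C-blocks with maximal associative subalgebras, specifically showing that a maximal associative subalgebra $M$ cannot be strictly larger than $C(B)$ where $B$ is its block of projections. This should follow from the continuous functional calculus of Section~\ref{sec:fc} applied inside $M$, but it requires carefully linking the Jordan-associative structure to Foulis commutation and using that, within any associative JB-subalgebra, every element is approximable in norm by real linear combinations of its own spectral projections (which lie in $B$). A secondary technical point is verifying, in the converse direction, that the compression base on all of $A$ inherits the (global) comparability property from the purely local fact that each element admits a positive/negative decomposition inside some monotone $\sigma$-complete associative subalgebra; this hinges on the structure of monotone $\sigma$-complete commutative JB-algebras as $C(Y)$ for $Y$ basically disconnected.
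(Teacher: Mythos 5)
There is a fundamental problem: your argument is circular with respect to the statement you were asked to prove. The statement is Theorem~\ref{th:JBrickspect} itself ($A$ Rickart $\iff$ $A$ spectral), which the paper does not prove but imports from \cite[Theorem 5.16]{JP1}. Your proposal instead proves Theorem~\ref{thm:Rickart} ($A$ Rickart $\iff$ every maximal associative subalgebra is monotone $\sigma$-complete), and it does so by explicitly invoking Theorem~\ref{th:JBrickspect} in both directions (``if $A$ is Rickart, then Theorem~\ref{th:JBrickspect} makes $A$ spectral'' and ``\dots this shows $A$ is spectral, and Theorem~\ref{th:JBrickspect} then delivers that $A$ is Rickart''). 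As a proof of the target statement this establishes nothing: you assume the equivalence you are meant to derive. (As a proof of Theorem~\ref{thm:Rickart} your outline is in fact close to the paper's: the identification of C-blocks with maximal associative subalgebras is Corollary~\ref{coro:cblock_jb} and Proposition~\ref{prop:Cblock_maxassoc}, and the relay through monotone $\sigma$-completeness of C-blocks is Theorem~\ref{thm:spectral_complete}. But that is a different theorem.)

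A genuine proof of Theorem~\ref{th:JBrickspect} has to connect the Rickart axioms (A1)/(A2) directly to the two ingredients of spectrality for the compression base $(U_p)_{p\in P}$. The natural route, visible in the surrounding material, is via the carrier: if $A$ is Rickart, each $a\in A$ has a smallest projection $s(a)$ with $s(a)\circ a=a$, and for $0\le a\le 1$ this $s(a)$ is the projection cover of $a$ (cf.\ Lemma~\ref{le:carrier} and \cite[Lemma 5.14]{JP1}), which gives the projection cover property; comparability must then be extracted from the existence of carriers applied to $(a-\lambda 1)^+$ inside $A(a,1)$, since a general JB-algebra (e.g.\ $C([0,1],\mathbb R)$) does \emph{not} have comparability. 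Conversely, spectrality must be shown to produce, for each $x\in A^+$, a projection $p$ with $\{x\}^\perp=U_p(A)$, essentially by taking $p=x^*$ from the Rickart mapping \eqref{eq:rickart}. None of this appears in your proposal, so the core content of the theorem is missing.
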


We will need some further  preparations. For elements $a,b\in A$ we say that $a$ and $b$
\emph{operator commute} if $T_aT_b=T_bT_a$. If
one of the elements is a projection, then we have the following characterizations of
operator commutativity.

\begin{prop}\label{prop:opcom} {\rm \cite[Prop. 1.47]{AS1}} Let $a\in A$ and $p\in P$. Then the
following are equivalent.
\begin{enumerate}
\item $a$ and $p$ operator commute;
\item $T_p(a)=U_p(a)$;
\item $a=U_p(a)+U_{1-p}(a)$;
\item $a$ and $p$ are contained in an associative subalgebra.

\end{enumerate}

\end{prop}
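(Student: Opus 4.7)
The plan is to use two standard Jordan-algebraic tools: the formula $U_p = 2T_p^2 - T_p$ (valid because $p^2 = p$) and the linearized Jordan identity
\[
[T_x, T_{y\circ z}] + [T_y, T_{z\circ x}] + [T_z, T_{x\circ y}] = 0, \qquad x,y,z \in A,
\]
which is obtained by polarizing \eqref{eq:jordprod}.

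First I would treat (ii) $\Leftrightarrow$ (iii) by direct computation. Expanding with $T_{1-p} = I - T_p$ and $(1-p)^2 = 1-p$ gives
\[
U_p(a) + U_{1-p}(a) = a - 4T_p(a) + 4T_p^2(a),
\]
so (iii) holds iff $T_p^2(a) = T_p(a)$; the same relation, substituted into $U_p = 2T_p^2 - T_p$, is in turn equivalent to $U_p(a) = T_p(a)$, which is (ii).

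For (i) $\Leftrightarrow$ (ii), the forward direction is immediate: evaluating $T_pT_a = T_aT_p$ at $p$ and using $T_a(p) = a\circ p = p\circ a = T_p(a)$ yields $T_p^2(a) = T_p(a)$, hence (ii). The converse is the only genuinely clever step. The linearized Jordan identity applied with $x = a$, $y = z = p$ reduces, using $p\circ p = p$ and $a\circ p = p\circ a$, to
\[
[T_a, T_p] = 2[T_{p\circ a}, T_p].
\]
Under (ii), $p\circ a$ is a fixed point of $T_p$, so applying the same identity with $q := p\circ a$ in place of $a$ gives $[T_q, T_p] = 2[T_q, T_p]$, which forces $[T_q, T_p] = 0$ and hence $[T_a, T_p] = 0$. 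The main technical point is recognising that one should iterate the linearized identity on the fixed point $q$; this self-referential bootstrap is where I expect the proof to need the most care.

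Finally, (iv) $\Rightarrow$ (ii) is immediate from associativity: $p\circ(p\circ a) = (p\circ p)\circ a = p\circ a$ yields $T_p^2(a) = T_p(a)$. For (iii) $\Rightarrow$ (iv), set $a_1 := U_p(a)$ and $a_0 := U_{1-p}(a)$. The Peirce relations for $p$ give $p\circ a_1 = a_1$, $p\circ a_0 = 0$, and $a_1\circ a_0 = 0$; moreover every Jordan algebra generated by a single element is associative. Hence the unital Jordan subalgebras with units $p$ and $1-p$ generated by $a_1$ and by $a_0$ respectively are associative, and they mutually annihilate under $\circ$, so their sum is an associative Jordan subalgebra of $A$ containing both $p$ and $a = a_1 + a_0$, establishing (iv).
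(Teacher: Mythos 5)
The paper does not prove this proposition; it is imported verbatim from \cite[Prop.~1.47]{AS1}. Your argument is correct and is essentially the standard one from that source: the reduction of (ii) and (iii) to $T_p^2(a)=T_p(a)$ via $U_p=2T_p^2-T_p$, the linearized Jordan identity specialised to $x=a$, $y=z=p$ to get $[T_a,T_p]=2[T_{p\circ a},T_p]$ with the bootstrap on the fixed point $q=p\circ a$, and the Peirce relations $A_1\circ A_0=\{0\}$ together with single-generator associativity for (iii)$\Rightarrow$(iv) all check out, and the implications you establish close the cycle of equivalences.
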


The equivalence of (i) and (iv) was  generalized to all pairs of elements in \cite{Wet2}. Notice that by Lemma \ref{le:commut} (iii), (iii) is equivalent to $a\leftrightarrow p$.

\begin{lemma}\label{lemma:opcom_assoc} Let $A$ be a JB-algebra and let $C\subseteq A$ be a
subalgebra. Then  $C$ is associative if and only if the elements in $C$ mutually operator
commute.

\end{lemma}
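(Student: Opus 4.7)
The plan is to prove the two directions separately, with one of them being essentially routine and the other relying on a nontrivial result cited just before the lemma statement.

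For the forward direction (associativity implies mutual operator commutativity), I would invoke the generalization of Proposition~\ref{prop:opcom}(i)$\Leftrightarrow$(iv) from \cite{Wet2} that was noted in the paragraph immediately preceding the lemma. That result characterizes operator commutativity of an \emph{arbitrary} pair $a,b \in A$ by the existence of an associative subalgebra containing both. Thus if $C$ is associative and $a,b \in C$, then $C$ itself witnesses that $a$ and $b$ lie in a common associative subalgebra, so $T_a T_b = T_b T_a$. This is where the work of the proof really happens, and the assumption that the paper may rely on \cite{Wet2} is essential, since Proposition~\ref{prop:opcom} as stated only covers the case when one of the two elements is a projection.

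For the reverse direction, assume any two elements of $C$ operator commute. Fix $a,b,c \in C$. By operator commutativity of $a$ and $c$, the multiplication operators $T_a$ and $T_c$ commute on all of $A$, and in particular $T_a T_c (b) = T_c T_a (b)$. Unpacking and using commutativity of the Jordan product, this is
\[
a \circ (b \circ c) = a \circ (c \circ b) = c \circ (a \circ b) = (a \circ b) \circ c,
\]
which is exactly the Jordan associativity identity for the triple $(a,b,c)$. Since $C$ is assumed to be a subalgebra and this holds for all triples in $C$, the subalgebra $C$ is associative.

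The only substantive obstacle is the forward direction; the reverse is just bookkeeping around the definition of $T_a$ and the commutativity of $\circ$. If one wished to avoid citing \cite{Wet2} at all, one could try to derive the forward direction directly from the Jordan identity and the fact that in an associative Jordan algebra powers and products behave as in the commutative associative case, but since the paper explicitly flags the Wetering generalization as the tool that makes the projection-based Proposition~\ref{prop:opcom} extend to arbitrary pairs, the cleanest route is simply to quote it.
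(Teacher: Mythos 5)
Your proposal is correct and follows essentially the same route as the paper: the forward direction is obtained by citing the result of \cite{Wet2} (that two elements contained in, equivalently generating, an associative subalgebra operator commute), and the reverse direction is the same one-line computation $a\circ(b\circ c)=T_aT_c(b)=T_cT_a(b)=(a\circ b)\circ c$. No gaps to report.
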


\begin{proof} Assume that $C$ is associative and let $a,b\in C$. Then $a$ and $b$ generate
an associative JB-algebra, so that $a$ and $b$ operator commute, \cite[Thm. 3.13]{Wet2}.
Conversely, assume all elements in $C$ mutually operator commute, then for $a,b,c\in
C$, we have
\[
a\circ(b\circ c)=a\circ(c\circ b)=T_aT_c(b)=T_cT_a(b)=c\circ(a\circ b)=(a\circ b)\circ c.
\]

\end{proof}

For each element $a$ in a JB-algebra $A$, $A(a,1)$ denotes the norm-closed subalgebra generated by $a$ and $1$. By \cite[Corollary 1.4]{AS1}, and continuity of Jordan product,
$A(a,1)$ is associative.

\begin{theorem}\label{th:assoc} {\rm \cite[Proposition 1.12]{AS1}} If $A$ is a JB-algebra
and $B$ is a norm-closed associative subalgebra containing $1$, in particular if $B=A(a,1)$ for $a\in A$, then $B$ is isometrically (order- and algebra-) isomorphic to $C(X,{\mathbb R})$ for some compact Hausdorff space $X$.
\end{theorem}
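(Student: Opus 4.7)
The plan is to show that $B$ is a commutative unital real Banach algebra with the $C^*$-style norm identity $\|a^2\|=\|a\|^2$, and then to apply a Gelfand-type representation to realize it as $C(X,\mathbb R)$ for a compact Hausdorff space $X$.

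First, $B$ itself is a JB-algebra: as a norm-closed Jordan subalgebra of $A$ containing $1$, it inherits the norm, order and Jordan product from $A$ and satisfies (JBi)--(JBiii). Since $B$ is associative, its Jordan product is an associative commutative multiplication, so $B$ is a commutative unital real Banach algebra with closed positive cone $B^+ = B \cap A^+$ containing every square $a^2$, and $-\|a\|\cdot 1 \le a \le \|a\|\cdot 1$ for every $a\in B$.

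Next, I would establish $\|a^2\|=\|a\|^2$ for all $a\in B$. The direction $\le$ is axiom (JBi); the reverse combines (JBi)--(JBiii) with the order-unit structure and is the content of \cite[Prop.~1.3]{AS1}. Iterating gives $\|a^{2^n}\|=\|a\|^{2^n}$, so the spectral radius $r(a)$ in $B$ (computed in the complexification $B_{\mathbb C}$) equals $\|a\|$. Let $X$ denote the set of unital real algebra homomorphisms $\chi:B\to\mathbb R$ with the weak-$*$ topology; each such $\chi$ satisfies $|\chi(a)|\le r(a)=\|a\|$, so $X$ is weak-$*$ closed in the closed unit ball of $B^*$, hence compact Hausdorff.

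The Gelfand transform $\Gamma:B\to C(X,\mathbb R)$, $\Gamma(a)(\chi)=\chi(a)$, is a unital algebra homomorphism. Via $B_{\mathbb C}$ one sees that for each $a\in B$ the spectrum of $a$ equals $\{\chi(a):\chi\in X\}$; combined with $\|a\|=r(a)$ this yields $\|\Gamma(a)\|_\infty=\|a\|$, so $\Gamma$ is an isometry. Stone-Weierstrass applied to the closed, unital, point-separating subalgebra $\Gamma(B)\subseteq C(X,\mathbb R)$ gives surjectivity; order preservation is automatic since positive elements in both $B$ and $C(X,\mathbb R)$ are exactly the squares. The main obstacle will be the reverse inequality $\|a\|^2\le\|a^2\|$, the key non-trivial JB-algebra fact in this argument, together with the verification that $X$ separates points of $B$; the latter reduces, via complexification of $B$ to a commutative $C^*$-algebra (or a direct maximal-ideal argument using non-emptiness of the spectrum in $B_{\mathbb C}$), to classical Gelfand-Mazur theory.
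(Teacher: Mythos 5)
The paper offers no proof of this statement to compare against: Theorem \ref{th:assoc} is quoted verbatim from \cite[Proposition 1.12]{AS1} and used as a black box. Your Gelfand-transform route is, in substance, the classical proof of that cited result (the Arens-type representation of a commutative real Banach algebra satisfying $\|a^2\|=\|a\|^2$ and $\|a^2\|\le\|a^2+b^2\|$ as $C(X,\mathbb R)$), and the skeleton is sound: $B$ is itself an associative JB-algebra, $\|a^{2^n}\|=\|a\|^{2^n}$ forces the spectral radius to equal the norm, the real character space $X$ is weak-$*$ compact, the Gelfand transform is an isometry, and Stone--Weierstrass gives surjectivity onto $C(X,\mathbb R)$.

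The one genuinely under-justified step is the reality of spectra, i.e.\ the claim that the complex characters of $B_{\mathbb C}$ are real-valued on $B$, so that $X$ is nonempty, computes $\mathrm{sp}(a)$, and separates points. This does \emph{not} reduce to ``classical Gelfand--Mazur theory'' on top of $\|a^2\|=\|a\|^2$: the field $\mathbb C$, viewed as a real Banach algebra, is commutative, unital, complete and satisfies $\|z^2\|=\|z\|^2$, yet it has no real characters at all, so your argument as written would apply to it and fail. The missing ingredient is a second use of (JBiii), equivalently of the order-unit structure from \cite[Theorem 1.11]{AS1}: for $a\in B$ and real $t\neq 0$ one has $a^2+t^2 1\ge t^2 1$, and any $c\ge 0$ in $B$ makes $1+c$ invertible because $\|1-(1+\|c\|)^{-1}(1+c)\|=(1+\|c\|)^{-1}\|\,\|c\|1-c\,\|\le \|c\|/(1+\|c\|)<1$; hence $a^2+t^2 1$ is invertible, $\mathrm{sp}_{B_{\mathbb C}}(a)\subseteq\mathbb R$, every maximal ideal of $B_{\mathbb C}$ restricts to a real character of $B$, and semisimplicity follows from $r(a)=\|a\|$. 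With that paragraph inserted your proof closes and coincides with the standard one behind \cite[Proposition 1.12]{AS1}; without it the decisive role of axiom (JBiii) in the second half of the argument is lost.
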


Assume that $A$ has the comparability property. Recall that (cf. Definition \ref{de:extcompat})
\[
aCb \iff P(a)\leftrightarrow P(b).
\]
If $A$ is spectral, this is also equivalent to $P_{sp}(a)\leftrightarrow P_{sp}(b)$,
where $P_{sp}(a)$ denotes the Boolean subalgebra in $P$  generated by spectral projections of $a$
(Lemma \ref{le:spr}). If $b\in P$, then $aCb$ is equivalent to  $a\in C(b)$ and
Proposition \ref{prop:opcom} shows that for $b\in P$, $a\in C(b)$ is the same as
operator commutativity. We next show how this can be extended to arbitrary
$b\in A$.

\begin{lemma}\label{lemma:spectral_operator} Let $A$ be a JB-algebra with the
comparability property and let $a,b\in A$. Then  $aCb$ implies that $a$ and $b$ operator commute.

\end{lemma}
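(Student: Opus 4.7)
My plan is to reduce the assertion to the easily checked operator commutativity of projections in a common block, and then to transport this commutativity to arbitrary elements of the corresponding C-block by a continuity argument based on the JB-algebra axiom (JBi).

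First, starting from $aCb$ I will invoke Theorem \ref{th:cor3.26}(ii),(iii) to produce a block $B \subseteq P$ with $a, b \in C(B) = \overline{\mathrm{span}}(B)$. Next I will show that any two $p, q \in B$ operator commute. Any two elements of a block are Mackey compatible in $P$, so Lemma \ref{le:commut}(iii) gives $q \in C(p)$ in the sense of Definition \ref{decompat}, that is, $q = U_p(q) + U_{1-p}(q)$ (using $J_p = U_p$ in a JB-algebra as recorded in \eqref{JBcompr}). This is precisely condition (iii) of Proposition \ref{prop:opcom}, which then yields operator commutativity of $p$ and $q$. By bilinearity of the Jordan product and therefore of the assignment $x \mapsto T_x$, it follows that $T_{a'} T_{b'} = T_{b'} T_{a'}$ whenever $a', b' \in \mathrm{span}(B)$.

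To finish, I will approximate $a$ and $b$ in norm by sequences $a_n, b_n \in \mathrm{span}(B)$, available because $C(B) = \overline{\mathrm{span}}(B)$. The axiom (JBi) gives $\|T_x\|_{\mathrm{op}} \le \|x\|$, so $T_{a_n} \to T_a$ and $T_{b_n} \to T_b$ in operator norm; passing to the limit in $T_{a_n} T_{b_n} = T_{b_n} T_{a_n}$ yields $T_a T_b = T_b T_a$, i.e., $a$ and $b$ operator commute. The only substantive step is the bridge between the order-theoretic notion of Mackey compatibility of projections (built into the definition of $C(p)$) and the algebraic notion of operator commutativity on the Jordan product side, supplied by Proposition \ref{prop:opcom}; the reduction to projections and the final limit step are then essentially formal.
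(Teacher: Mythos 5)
Your proof is correct and follows essentially the same route as the paper's: reduce to operator commutativity of projections via Proposition \ref{prop:opcom} and pass to the limit using linearity and norm-continuity of $x\mapsto T_x$, the only (inessential) difference being that the paper keeps $a$ fixed and approximates only $b$ by elements of $\mathrm{span}(P(b))$ (Proposition \ref{prop:bicomutant}(iii)) rather than approximating both $a$ and $b$ inside a common C-block. One small citation point: the existence of a block $B$ with $a,b\in C(B)$ is not quite Theorem \ref{th:cor3.26}(ii) (which concerns a single element) but rather the fact, recorded after Definition \ref{de:extcompat}, that commuting elements lie in a common C-block.
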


\begin{proof}
Assume $aCb$, so that $a\leftrightarrow P(b)$, which means that $a$ operator commutes with
all elements of $P(b)$. By Proposition \ref{prop:bicomutant} (iii),
 $b$ is a norm limit of a sequence of linear combinations of elements in $P(b)$.
Since  the map $b\mapsto T_b$ is linear and norm-continuous, we obtain that $a$ operator
commutes with $b$.

\end{proof}

\begin{corollary}\label{coro:cblock_jb} Assume that $A$ is a JB-algebra with the
comparability property. Then any C-block $C$ of $A$ is a maximal  associative subalgebra.

\end{corollary}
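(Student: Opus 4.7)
The plan is to verify three things about a C-block $C = C(B)$, where $B\subseteq P$ is a block: that it is closed under the Jordan product, that it is associative as a Jordan subalgebra, and that no strictly larger associative subalgebra contains it.

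First, for closure under $\circ$, I would use Theorem~\ref{th:cor3.26}(iii) to write $C = \overline{\mathrm{span}}(B)$. Any two projections $p,q\in B$ are Mackey compatible in the Boolean algebra $B$, so by Lemma~\ref{le:commut}(iii) we have $J_p(q) = p\wedge q \in B$; since $J_p = U_p$ for JB-algebras by~\eqref{JBcompr}, and since $p$ and $q$ then operator commute (Proposition~\ref{prop:opcom}, (i)$\Leftrightarrow$(iii)) so that $T_p(q) = U_p(q)$, we obtain $p\circ q = p\wedge q \in B$. Bilinearity extends this to show $\mathrm{span}(B)$ is closed under $\circ$, and norm continuity of the Jordan product extends it to the closure $C$. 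That $C$ is a norm-closed subspace containing $1$ is already in Proposition~\ref{prop:commutant}(iv).

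Next, associativity is then essentially automatic: any two $a,b\in C(B)$ satisfy $a\,C\,b$ by the characterization of C-blocks recalled after Definition~\ref{de:extcompat}, so Lemma~\ref{lemma:spectral_operator} gives that they operator commute, and Lemma~\ref{lemma:opcom_assoc} yields that $C$ is associative.

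Finally, for maximality, suppose $C\subseteq C'$ with $C'$ an associative Jordan subalgebra, and let $c\in C'$. For each $p\in B\subseteq C\subseteq C'$, the associativity of $C'$ together with Lemma~\ref{lemma:opcom_assoc} shows that $c$ and $p$ operator commute, and then Proposition~\ref{prop:opcom}((i)$\Rightarrow$(iii)) gives $c\in C(p)$. Intersecting over $p\in B$ yields $c\in C(B)=C$, so $C'=C$. No real obstacle arises; the earlier lemmas have already established the dictionary between the order-theoretic commutativity $a\,C\,b$, operator commutativity in the Jordan sense, and associativity of subalgebras, so the only point requiring care is distinguishing closure of $C$ as a subspace from closure under the Jordan product, which is exactly the content of the first step.
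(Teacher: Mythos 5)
Your proof is correct and follows essentially the same route as the paper: closure under the Jordan product via $C(B)=\overline{\mathrm{span}}(B)$ and continuity, associativity via Lemmas \ref{lemma:spectral_operator} and \ref{lemma:opcom_assoc}, and maximality via Proposition \ref{prop:opcom} forcing any associative subalgebra containing $C$ into $C(B)$. The only cosmetic difference is that you argue maximality directly against an arbitrary associative $C'\supseteq C$, whereas the paper first embeds $C$ in a maximal associative subalgebra and then shows that it collapses to $C$; the substance is identical.
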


\begin{proof} Let $B\subseteq P$ be a block of $P$. We show that $C=C(B)$ is a subalgebra, which means that it is closed under
the Jordan product of its elements. Let $a\in C$, $p\in B$, then by Proposition
\ref{prop:opcom}, $p\circ a= T_pa=U_pa\in C$. This can be extended to all $a,b\in C$ using  Theorem
\ref{th:cor3.26}(iii), continuity of the map $b\mapsto T_b$ and the fact that $C$ is a
norm-closed linear subspace in $A$. Since we have
$aCb$ for all $a,b\in C$, we see from Lemmas \ref{lemma:spectral_operator} and
\ref{lemma:opcom_assoc} that $C$ is associative, hence it is contained in some maximal
associative subalgebra $C_0$. It follows that all elements of $C_0$ mutually operator
commute, which by Proposition \ref{prop:opcom} implies that all elements of $C_0$ commute
with all projections in $B$, so that  $C_0\subseteq C(B)=C$ and  $C_0=C$.

\end{proof}

Assume now that $A$ is Rickart, then each $a\in A$ has a \emph{carrier} $s(a)\in P$, which is  the
smallest projection such that $p\circ a=a$ \cite[Prop. 1.8]{AyArz}. It was shown in
\cite[Lemma 5.14]{JP1} that if $0\le a\le 1$, then $s(a)$ is the projection cover of $a$.
We also have the following relation to the Rickart mapping $a\mapsto a^*$ (see
\eqref{eq:rickart}).

\begin{lemma}\label{le:carrier} For every $a\in A$, $a^*=1-s(a)$.
\end{lemma}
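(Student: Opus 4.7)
The strategy is to verify the two inequalities $1-s(a) \le a^*$ and $a^* \le 1-s(a)$ separately, using the characterization of the Rickart mapping given in \eqref{eq:rickart}, together with the formulas \eqref{eq:UpTp} relating $T_p$ and $U_p = J_p$, and the equivalences in Proposition~\ref{prop:opcom}. The basic bridge is that the defining property of $s(a)$ is phrased via the Jordan multiplication operator $T_p$, while the Rickart mapping is phrased via the compression $J_p=U_p$; these two descriptions are linked precisely through operator commutativity.

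For the first inequality, I would start from the defining property $T_{s(a)}(a) = s(a)\circ a = a$. Using $U_{s(a)} = 2T_{s(a)}^2 - T_{s(a)}$, a one-line calculation gives $U_{s(a)}(a)=2a-a=a$. Hence $T_{s(a)}(a)=U_{s(a)}(a)$, and by the equivalence (i)$\Leftrightarrow$(ii)$\Leftrightarrow$(iii) in Proposition~\ref{prop:opcom}, $a$ and $s(a)$ operator commute, which gives $a = U_{s(a)}(a)+U_{1-s(a)}(a)$ and therefore $U_{1-s(a)}(a)=0$. The identity $a = J_{s(a)}(a) + J_{1-s(a)}(a)$ says precisely that $a \in C(s(a)) = C(1-s(a))$. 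Together with $J_{1-s(a)}(a)=0$, the defining property \eqref{eq:rickart} yields $1-s(a)\le a^*$.

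For the reverse inequality, I would use the definition of $a^*$ in the other direction: $a\in C(a^*)$ and $U_{a^*}(a)=0$. The commutation condition $a \in C(a^*)$ means $a = U_{a^*}(a) + U_{1-a^*}(a) = U_{1-a^*}(a)$. Operator commutativity of $a$ and $a^*$ automatically transfers to $a$ and $1-a^*$ because $T_{1-a^*}=I-T_{a^*}$, so by Proposition~\ref{prop:opcom}(ii) we conclude $T_{1-a^*}(a) = U_{1-a^*}(a) = a$; i.e., $(1-a^*)\circ a = a$. Since $1-a^*$ is a projection and $s(a)$ is the \emph{smallest} projection with this property (by its definition as the carrier), we obtain $s(a)\le 1-a^*$, equivalently $a^*\le 1-s(a)$. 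Combining the two inequalities gives the desired equality.

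The argument is essentially routine once the equivalence of operator commutativity with the two analytic conditions $T_p(a)=U_p(a)$ and $a=U_p(a)+U_{1-p}(a)$ in Proposition~\ref{prop:opcom} is in hand; I do not foresee a substantive obstacle. The only point requiring a moment of care is the passage from $s(a)\circ a=a$ to $U_{s(a)}(a)=a$ via \eqref{eq:UpTp}, and the symmetric passage from $U_{a^*}(a)=0$ and $a\in C(a^*)$ to $(1-a^*)\circ a=a$; both rely on the minimality of $s(a)$ on one side and the characterization of $a^*$ on the other, so the two halves of the proof are completely parallel.
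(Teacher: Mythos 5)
Your proof is correct and follows essentially the same route as the paper: both halves rest on the formulas \eqref{eq:UpTp} linking $T_p$ and $U_p$, the equivalences of Proposition~\ref{prop:opcom}, the minimality of the carrier $s(a)$, and the defining property \eqref{eq:rickart} of the Rickart mapping, with your explicit two-inequality split (applying \eqref{eq:rickart} to $p=1-s(a)$ and then to $q=a^*$) being just a mild repackaging of the paper's argument.
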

\begin{proof} Let $a\in A$, $p\in P$. By (\ref{eq:UpTp}) we have
\[
p\circ a=0 \ \Leftrightarrow \ U_p(a)=0,  \qquad 
p\circ a=a \ \Leftrightarrow \ U_p(a)= a.
\]
As $p\circ a=a$ iff $(1-p)\circ a=0$ by distributivity of the Jordan product, in both the above cases we have $a\in C(p)$.

Put $p:=s(a)$. Then $p\circ a=a$ implies $a\in C(p)$ and $U_{1-p}(a)=0$. Let $q\in P$ be such that $a\in C(q)$ and $U_q(a)=0$. Then $(1-q)\circ a=a$, which entails $p\leq 1-q$.
Then $q\leq 1-p$, and by definition of the Rickart mapping, $a^*=1-p$.
\end{proof}

\begin{lemma}\label{le:support} Let $A$ be a Rickart JB-algebra and let $a\in A^+$. Then $s(a)\in A(a,1)$.
\end{lemma}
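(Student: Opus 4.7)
The plan is to express $s(a)$ as a monotone $\sigma$-supremum of continuous functions of $a$ that lie in $A(a,1)$. Since $A$ is Rickart, Theorem \ref{th:JBrickspect} gives that $A$ is spectral, so both the continuous functional calculus (Theorem \ref{th:assoc}) and the Borel functional calculus on $CC(a)$ (Theorem \ref{thm:borel_fc}) are available. By rescaling we may assume $\|a\|\le 1$; then $s(a)=a^{\circ}$ by \cite[Lemma 5.14]{JP1}, and via the Borel calculus $a^{\circ}=\chi_{(0,1]}(a)\in CC(a)$.

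For the approximants take $f_n(t):=\min(nt,1)$ for $t\in[0,1]$; each $f_n$ is continuous on $\mathrm{sp}(a)$, so $f_n(a)\in A(a,1)$ by Theorem \ref{th:assoc}, and $(f_n(a))$ is an increasing sequence bounded by $1$ with $f_n\nearrow\chi_{(0,1]}$ pointwise. By Theorem \ref{thm:borel_fc}(iv), $\bigvee_n f_n(a)=\chi_{(0,1]}(a)=s(a)$, the supremum being formed in $CC(a)$.

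The main obstacle is to promote this supremum from $CC(a)$ into the smaller subalgebra $A(a,1)$. Using Proposition \ref{prop:cca_spectral} and Theorem \ref{th:equiv} I would represent $CC(a)\cong C(Y,\mathbb{R})$ with $Y$ basically disconnected, so that $a$ corresponds to some $f\in C(Y)^{+}$ and $s(a)$ to $\chi_V$ for the clopen set $V=\overline{\{y:f(y)>0\}}\subseteq Y$. To place $s(a)$ in $A(a,1)\cong C(\mathrm{sp}(a))$ I would exhibit a clopen $S\subseteq\mathrm{sp}(a)$ with $V=f^{-1}(S)$, so that $\chi_S(a)=s(a)$ through the continuous calculus. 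Producing this clopen $S$ is where the Rickart hypothesis on $A$ enters essentially, and where I expect the bulk of the work to be; a natural route is through the spectral resolution $p_{a,\lambda}=((a-\lambda)^{+})^{*}$ and the behaviour of these projections as $\lambda\downarrow 0$, using the Rickart mapping together with the identification $A(a,1)\cong C(\mathrm{sp}(a))$ to locate the correct clopen piece of the spectrum that represents $s(a)$.
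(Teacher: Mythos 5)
Your proposal stops exactly where the content of Lemma \ref{le:support} begins. The reduction to $0\le a\le 1$, the identification $s(a)=a^{\circ}=\chi_{(0,1]}(a)$, and the observation that the continuous-calculus elements $f_n(a)=\min(na,1)(a)\in A(a,1)$ increase to $s(a)$ in $CC(a)$ (Theorem \ref{thm:borel_fc}) are the routine half; the whole point of the lemma is that $s(a)$ lies in the \emph{norm-closed} subalgebra $A(a,1)$, not merely in $CC(a)$ or in some monotone $\sigma$-closure, and this step is only announced as ``the main obstacle'' and left open, so nothing is actually proved. The paper attacks that step by a different device: it works in a C-block $C\cong C(X)$ with $a$ corresponding to $f$ and $s(a)$ to $\chi_Y$, where $Y$ is the clopen support of $f$, asserts that the nondecreasing sequence $f^{1/n}$ converges pointwise to $\chi_Y$, invokes Dini's theorem to upgrade this to norm convergence, and concludes from norm-closedness of $A(a,1)$ that $s(a)=\lim_n a^{1/n}\in A(a,1)$. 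In particular it never looks for a clopen subset of $\mathrm{sp}(a)$; the missing idea in your plan is the replacement of the order-theoretic supremum by norm convergence.

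That said, your instinct that this is where the real difficulty sits is sound, and the route you sketch cannot be completed as stated. If $0$ is an accumulation point of $\mathrm{sp}(a)$ and $s(a)\ne 1$, then no continuous $g$ on $\mathrm{sp}(a)$ (in particular no $\chi_S$ with $S$ clopen) satisfies $g(a)=s(a)$: one would need $g(t)=1$ at spectral values $t$ accumulating at $0$ together with $g(0)=0$. Note that the Dini step needs the analogous hypothesis, since $f^{1/n}\to\chi_Y$ pointwise only if $f>0$ on all of $Y=\overline{\{x:f(x)>0\}}$, which fails when $f$ takes arbitrarily small positive values. Concretely, in the Rickart JB-algebra $\ell^{\infty}_{\mathbb{R}}$ take $a=(0,\tfrac12,0,\tfrac14,0,\dots)$: then $s(a)$ is the indicator of the even coordinates, $\|a^{1/n}-s(a)\|=1$ for every $n$, and $s(a)\notin A(a,1)\cong C(\mathrm{sp}(a))$ because the required $g$ would have to be $1$ at the points $1/(2j)$ and $0$ at their limit $0$. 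So the gap you flag is genuine and cannot be closed along the lines you propose (nor, without an additional assumption such as $0$ being isolated in or absent from $\mathrm{sp}(a)$, along the Dini argument as written); what your supremum argument does deliver is only membership of $s(a)$ in a monotone $\sigma$-closed associative subalgebra containing $a$, such as $CC(a)$, which is weaker than the assertion of the lemma.
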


\begin{proof} By replacing $a$ with $\|a\|^{-1}a$, we may clearly assume that $0\le a \le
1$. Let $C$ be a C-block containing $a$. Since $A$ is spectral, $C\cong C(X)$
for a basically disconnected compact Hausdorff space $X$. Let $f\in C(X)$ be a function
corresponding to $a$ and let $Y\subset X$ be the support of $f$, then  $s(a)$ corresponds to the characteristic function $\chi_Y\in C(X)$. Then $0\le f(x)\le 1$  and $f_n(x):=f(x)^{1/n}$ is a nondecreasing sequence of continuous functions pointwise converging to $\chi_Y$. By Dini's theorem, $f_n$ converges to $\chi_Y$ in norm, so that the
 corresponding sequence $a_n:=a^{1/n}$ norm-converges to $s(a)$. Since all $a_n$ are contained in $A(a,1)$, so is $s(a)$.

\end{proof}

\begin{corollary}\label{coro:spectrum} Let $A$ be a Rickart JB-algebra and let $a\in A$, then $A(a,1)$ contains all the spectral projections of $a$.

\end{corollary}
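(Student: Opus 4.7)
The plan is to show $p_{a,\lambda}\in A(a,1)$ for every $\lambda\in\mathbb R$; since $A(a,1)$ is norm-closed and the spectral projections of $a$ are exactly the $p_{a,\lambda}$, this yields the corollary. Writing $b_\lambda:=(a-\lambda\cdot 1)^+\in A^+$, the definition $p_{a,\lambda}=b_\lambda^*$ together with Lemma \ref{le:carrier} gives
\[
p_{a,\lambda}=1-s(b_\lambda).
\]
Hence it is enough to show (a) that $b_\lambda\in A(a,1)$ and (b) that $s(b_\lambda)\in A(a,1)$; and given (a), statement (b) reduces to Lemma \ref{le:support} applied to $b_\lambda\in A^+$, which gives $s(b_\lambda)\in A(b_\lambda,1)\subseteq A(a,1)$.

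The heart of the proof is therefore (a): that the spectral positive part $b_\lambda$ lies in the associative subalgebra $A(a,1)$. By Theorem \ref{th:assoc}, $A(a,1)$ is isometrically order- and algebra-isomorphic to $C(Y,\mathbb R)$ for some compact Hausdorff $Y$; let $f\in C(Y)$ correspond to $a$. The continuous function $g_\lambda(t):=\max(t-\lambda,0)$ is, by the Stone--Weierstrass theorem, a uniform limit on $f(Y)$ of polynomials $p_n\in\mathbb R[t]$ (without constant term when convenient). The elements $p_n(a)\in A(a,1)$ then converge in norm to the element $b\in A(a,1)$ corresponding to $g_\lambda\circ f\in C(Y)$.

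It remains to identify $b$ with the spectral positive part $b_\lambda=(a-\lambda)^+$ defined via the compression base of $A$. Let $c\in A(a,1)$ correspond to $\max(\lambda-f,0)$. In $C(Y)\cong A(a,1)$ one reads off directly that $a-\lambda\cdot 1=b-c$, that $b,c\in A^+$, and that $b\circ c=0$. Since $A$ is a Rickart JB-algebra it is spectral (Theorem \ref{th:JBrickspect}), hence it has the comparability property, so the $P$-orthogonal decomposition of $a-\lambda\cdot 1$ is unique. To see that $b-c$ is such a decomposition in $A$, take $p:=s(b)\in P\subseteq A$ (this support lives in the ambient $A$, not necessarily in $A(a,1)$); then $U_p(b)=b$ because $p=s(b)$, and $U_p(c)=0$ follows from $b\circ c=0$ via Proposition \ref{prop:opcom} (operator commutativity of $b$ and $c$ together with $s(b)\perp s(c)$ forces $U_{s(b)}c=0$). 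Uniqueness of the $P$-orthogonal decomposition in $A$ then yields $b=b_\lambda$.

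The main obstacle is exactly this last identification: within $A(a,1)\cong C(Y)$ the positive part is an elementary pointwise construction, but a priori the ``pointwise'' split need not be the $P$-orthogonal split that defines $b_\lambda$ in $A$ (indeed $Y$ need not be basically disconnected and $s(b)$ need not lie in $A(a,1)$). Invoking uniqueness of the $P$-orthogonal decomposition in the ambient Rickart JB-algebra $A$, together with the operator-commutativity characterisation of the orthogonality $b\circ c=0$, circumvents this. Once (a) is in hand, Lemma \ref{le:support} delivers $s(b_\lambda)\in A(a,1)$ and we conclude $p_{a,\lambda}=1-s(b_\lambda)\in A(a,1)$, completing the proof.
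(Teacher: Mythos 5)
Your proof has the same skeleton as the paper's: both reduce the statement to $p_{a,\lambda}=1-s\bigl((a-\lambda 1)^+\bigr)$ via Lemma \ref{le:carrier} and then invoke Lemma \ref{le:support}; the paper simply asserts that $(a-\lambda 1)^+\in A(a,1)$, while you spell out a justification of this identification, which is the only content beyond the paper's two lines. Your identification argument is sound in outline, but one sub-step is asserted rather than proved: that $b\circ c=0$ forces $s(b)\perp s(c)$, equivalently $U_{s(b)}c=0$. This is exactly the nontrivial point at that stage and it is not part of Proposition \ref{prop:opcom}. It is easily repaired with tools already in the paper: $b,c\in A(a,1)$ operator commute by Lemma \ref{lemma:opcom_assoc}, so by induction $b^{n}\circ c=T_bT_c(b^{n-1})=0$ for all $n$, hence $q(b)\circ c=0$ for every polynomial $q$ with $q(0)=0$; since $s(b)$ is the norm limit of $b^{1/n}$ (proof of Lemma \ref{le:support}) and each $b^{1/n}$ is a norm limit of such polynomials, norm continuity of the Jordan product gives $s(b)\circ c=0$, i.e.\ $U_{s(b)}c=0$. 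With that patch, uniqueness of the $P$-orthogonal decomposition (available since $A$ is Rickart, hence spectral) completes your step (a), and the rest coincides with the paper. Alternatively, you could obtain $(a-\lambda 1)^+\in A(a,1)$ the way the paper handles the analogous point in the converse part of the proof of Theorem \ref{thm:Rickart}: pass to a C-block containing $a$, which is an associative monotone $\sigma$-complete subalgebra isomorphic to $C(X)$ with $X$ basically disconnected; there the compression-base positive part of $a-\lambda 1$ is visibly the pointwise one, a continuous function of $a$, hence already in $A(a,1)$, which avoids any appeal to carriers at this stage.
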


\begin{proof} By Lemma \ref{le:carrier}, the spectral projections can be obtained as complements of supports of
elements of the form $(a-\lambda1)^+$, $\lambda \in \mathbb R$.
Since $(a-\lambda1)^+$ belong to $A(a,1)$ for all $\lambda$,
 all such supports belong to $A(a,1)$ by Lemma \ref{le:support}.

\end{proof}

\begin{lemma}\label{lemma:commut_rickart} Let $A$ be a Rickart JB-algebra. Then $aCb$ if and
only if $a$ and $b$ operator commute.

\end{lemma}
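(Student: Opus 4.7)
The plan for the forward direction is immediate: since $A$ is Rickart, Theorem \ref{th:JBrickspect} tells us $A$ is spectral and hence has the comparability property, so Lemma \ref{lemma:spectral_operator} directly yields that $aCb$ implies $a, b$ operator commute.

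For the converse direction, my plan is to reduce everything to Mackey compatibility of spectral projections. Assuming $a$ and $b$ operator commute, the first step is to invoke the generalization of Proposition \ref{prop:opcom}(i)$\Leftrightarrow$(iv) to arbitrary pairs proved in \cite[Thm. 3.13]{Wet2} and noted just after Proposition \ref{prop:opcom}: operator commutativity of $a, b$ is equivalent to $a$ and $b$ lying in a common associative subalgebra. Consequently, the norm-closed Jordan subalgebra $B$ generated by $a, b$ and $1$ is itself associative, and by Theorem \ref{th:assoc}, $B \cong C(X)$ for some compact Hausdorff space $X$.

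The next step is to apply Corollary \ref{coro:spectrum}: every spectral projection of $a$ lies in $A(a,1) \subseteq B$, and symmetrically every spectral projection of $b$ lies in $A(b,1) \subseteq B$. Hence any pair $p_{a,\lambda}$, $p_{b,\mu}$ consists of two projections in the common associative subalgebra $B$. I would then apply Proposition \ref{prop:opcom}(iv)$\Rightarrow$(iii) to this pair (taking $p_{b,\mu}$ in the role of $a$ and $p_{a,\lambda}$ in the role of $p$) to obtain $p_{b,\mu} = U_{p_{a,\lambda}}(p_{b,\mu}) + U_{1-p_{a,\lambda}}(p_{b,\mu})$, i.e., $p_{b,\mu} \in C(p_{a,\lambda})$. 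By Lemma \ref{le:commut}(iii) this is precisely Mackey compatibility $p_{a,\lambda} \leftrightarrow p_{b,\mu}$ in the effect algebra $E$. Finally, Lemma \ref{le:spr} together with its spectral-case extension (recalled in the paragraph following that lemma) gives that pairwise Mackey compatibility of all spectral projections is equivalent to $aCb$, which completes the argument.

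The main obstacle, and the only genuinely new ingredient beyond what has been recalled in the preceding sections, is the implication that operator commutativity of a pair of generic elements forces the generated Jordan subalgebra to be associative; the direct version of Proposition \ref{prop:opcom} only covers the case where one element is a projection, so the generalization from \cite{Wet2} is essential. Once this step is granted, the rest of the plan is a routine transfer of commutativity from pairs of projections to the original elements via the spectral resolution, exactly what Lemma \ref{le:spr} was designed for.
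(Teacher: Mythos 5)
Your argument is correct and takes essentially the same route as the paper: the forward direction is Lemma \ref{lemma:spectral_operator} plus spectrality (Theorem \ref{th:JBrickspect}), and the converse rests on van de Wetering's results together with Corollary \ref{coro:spectrum}, Proposition \ref{prop:opcom} and Lemma \ref{le:spr}. The only (cosmetic) difference is that the paper propagates operator commutativity of $b$ to all of $A(a,1)$ (via \cite[Thm. 3.13, Cor. 2.11]{Wet2} and norm continuity of $x\mapsto T_x$) and then needs only the spectral projections of $a$, whereas you place $a$ and $b$ in a common associative subalgebra and treat both families of spectral projections symmetrically; note also that the equivalence of $aCb$ with compatibility of spectral projections in the spectral case is recalled in the paragraph preceding Lemma \ref{lemma:spectral_operator}, not in the paragraph following Lemma \ref{le:spr}.
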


\begin{proof}
Assume that $A$ is Rickart. Note that if $b$ operator commutes with $a$ then it
operator commutes with $a^2$ \cite[Thm. 3.13]{Wet2} and hence with all polynomials in $a$
\cite[Cor. 2.11]{Wet2}. Consequently, $b$ operator commutes with all of $A(a,1)$. By Corollary \ref{coro:spectrum},
$A(a,1)$ includes all spectral projections of $a$, which implies that $aCb$. The converse
follows from Lemma \ref{lemma:spectral_operator} and the fact that $A$ is spectral.

\end{proof}

\begin{prop}\label{prop:Cblock_maxassoc} Let $A$ be a Rickart JB-algebra. Then the
C-blocks of $A$ are precisely the maximal associative subalgebras in $A$.

\end{prop}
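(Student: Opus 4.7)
The plan is to combine Corollary \ref{coro:cblock_jb}, which already gives one inclusion, with the key equivalence established in Lemma \ref{lemma:commut_rickart}: in a Rickart JB-algebra, two elements $a,b$ operator commute if and only if $aCb$. Since by Lemma \ref{lemma:opcom_assoc} a subalgebra is associative precisely when its elements mutually operator commute, associative subalgebras of $A$ correspond exactly to subsets of mutually $C$-commuting elements that are closed under the Jordan product.

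Concretely, I would argue as follows. Let $M$ be a maximal associative subalgebra of $A$. By Lemma \ref{lemma:opcom_assoc}, any two elements of $M$ operator commute, and then by Lemma \ref{lemma:commut_rickart}, they satisfy $aCb$. Hence $M$ is a set of mutually commuting elements in the sense of Definition \ref{de:extcompat}. By the remark following that definition (and Theorem \ref{th:cor3.26}(ii)), any such set is contained in some C-block $C$ of $A$. Now apply Corollary \ref{coro:cblock_jb}, which (using that $A$ is spectral, hence has the comparability property by Theorem \ref{th:JBrickspect}) shows that $C$ itself is an associative subalgebra containing $M$. The maximality of $M$ then forces $M = C$, so $M$ is a C-block.

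Conversely, Corollary \ref{coro:cblock_jb} already states that every C-block is a maximal associative subalgebra, so the two collections coincide.

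The only delicate ingredient is ensuring that the C-block $C$ containing $M$ is genuinely an associative subalgebra of $A$, but this is precisely what Corollary \ref{coro:cblock_jb} provides under the comparability hypothesis supplied by the Rickart assumption via Theorem \ref{th:JBrickspect}. No further computation is required; the proof is essentially a two-line deduction from the preceding lemmas, with Lemma \ref{lemma:commut_rickart} doing the real work by bridging the algebraic notion of operator commutativity with the order-theoretic notion of $C$-commutativity.
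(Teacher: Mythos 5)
Your proof is correct and follows essentially the same route as the paper: Corollary \ref{coro:cblock_jb} for one inclusion, and Lemmas \ref{lemma:opcom_assoc} and \ref{lemma:commut_rickart} to show a maximal associative subalgebra consists of mutually commuting elements, hence lies in a C-block, with maximality forcing equality. Your extra remark that comparability comes from the Rickart hypothesis via Theorem \ref{th:JBrickspect} is a harmless (and accurate) elaboration of what the paper leaves implicit.
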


\begin{proof} We already know by Corollary \ref{coro:cblock_jb} that any C-block is a
maximal associative subalgebra. Let  $C\subseteq A$ be a maximal associative subalgebra in
$A$. Then by Lemmas \ref{lemma:opcom_assoc} and \ref{lemma:commut_rickart}, all elements
of $C$ are mutually commuting so that $C$ is contained in some C-block $C_0$. By Corollary
\ref{coro:cblock_jb}, $C_0$ is a maximal associative subalgebra, hence $C=C_0$ is a
C-block.

\end{proof}

Finally, we will need the following characterization of a spectral order unit space,
proved in \cite{JP1}.

\begin{theorem}{\rm \cite{JP1}}\label{thm:spectral_complete} Let $A$ be a complete order unit space with a compression base having the comparability property. Then the following are equivalent.
\begin{enumerate}
\item $A$ is spectral;
\item Every C-block of $A$ is spectral;
\item Every C-block of $A$ is monotone $\sigma$-complete.
\end{enumerate}
\end{theorem}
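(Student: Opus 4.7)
The plan is to reduce the theorem directly to Theorem \ref{th:equiv}, which already gives the equivalences (i)--(iii) for the corresponding conditions formulated in terms of a fixed compression base $(J_p)_{p\in P}$ with the comparability property. The present theorem is phrased intrinsically in terms of $A$ and its C-blocks, rather than a particular compression base, so the main task is to bridge these two levels.

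First I would fix a compression base $(J_p)_{p\in P}$ on $A$ witnessing the assumed comparability property; by hypothesis at least one such base exists. For the translation of condition (i), I invoke the remark immediately following the intrinsic definition of "$A$ is spectral", namely the cited results \cite[Prop. 4.7, Thm. 5.10]{JP2}: once $A$ has the comparability property, the set of projections of any comparability compression base coincides with the set of all sharp elements of $E$, and any such base is automatically spectral whenever one of them is. Hence "$A$ is spectral" is equivalent to "$(J_p)_{p\in P}$ is spectral", matching condition (i) of Theorem \ref{th:equiv}. Condition (iii) of the present theorem is literally condition (iii) there, since monotone $\sigma$-completeness is a property of the underlying ordered vector space and makes no reference to any compression base.

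For condition (ii), Lemma \ref{le:C(B)} supplies on each C-block $C(B)$ the restricted compression base $(\bar J_p)_{p\in B}$, which has the comparability property and whose set of projections is $B$. By Theorem \ref{th:block} together with completeness of $A$, we have $C(B)\cong C(X,\mathbb R)$ for a totally disconnected compact Hausdorff space $X$ in which $B$ corresponds to the Boolean algebra of clopen subsets; in particular $B$ is precisely the set of sharp elements of $C(B)$. Re-applying the invariance argument from \cite{JP2} inside $C(B)$ yields that spectrality of $C(B)$ as an intrinsic order unit space coincides with spectrality of the restricted base $(\bar J_p)_{p\in B}$, matching condition (ii) of Theorem \ref{th:equiv}.

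With these three translations in hand, the equivalence (i)$\Leftrightarrow$(ii)$\Leftrightarrow$(iii) transfers directly from Theorem \ref{th:equiv}; no further work is required. The only real subtlety, and the one genuine point to check, is the base-independence of the comparability and spectrality properties used in the translations of (i) and (ii); this is exactly what the cited results from \cite{JP2} provide, via the identification of $P$ with the set of sharp elements. No new constructions are needed beyond carefully unwinding the two levels of definition.
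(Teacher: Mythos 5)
Your proposal is correct and follows the route the paper itself intends: the theorem is quoted from \cite{JP1} without an in-text proof, and its content is precisely Theorem \ref{th:equiv} combined with the base-independence remarks following the intrinsic definition of a spectral order unit space (via \cite{JP2} and the fact that comparability forces $P$ to coincide with the set of sharp elements, so that the C-blocks and the restricted bases are canonically determined). The one genuinely delicate point --- that ``spectral'' as a property of $A$, or of a C-block $C(B)$, is independent of the choice of comparability compression base --- is exactly the point you isolate and justify, so nothing is missing.
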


\begin{proof}[Proof of Theorem \ref{thm:Rickart}]
Assume that $A$ is Rickart, so  the compression base $(U_p)_{p\in P}$ in $A$  is spectral.
Since $A$ is complete, we have by Theorem \ref{thm:spectral_complete} that
every C-block in $A$ is monotone $\sigma$-complete. The statement now follows by
Proposition  \ref{prop:Cblock_maxassoc}.

 Conversely, assume that every maximal associative subalgebra in $A$ is monotone
 $\sigma$-complete. We will first show that $A$ with the compression base $(U_p)_{p\in P}$ has the comparability property.
 Any element $a\in A$ is contained in some maximal associative subalgebra $A_0$, which is
 norm-closed by maximality. By Theorem \ref{th:assoc}, $A_0$ is isometrically isomorphic to
 $C(X)$ for some compact Hausdorff space $X$. By the assumption, $A_0$ is monotone
 $\sigma$-complete, so that $X$ is basically disconnected.
Let $f\in C(X)$ be the element corresponding to $a$ and
 let $f^+$ be the positive part, then $f^+=f\vee 0$ belongs to the subalgebra $A(f,1)$ in  $C(X)$
 generated by $f$ and 1. Similarly as in the proof of Lemma \ref{le:support},
 we obtain that the characteristic function of the support of $f^+$ is in $A(f,1)$. Hence
 there is a corresponding projection $p\in A(a,1)$.
  Observe that $A(a,1)\subseteq C(PC(a))$ (\cite[Eq. (10)]{JP1}), so that $p\in P(a)$
 and  it is easily seen by definition of $p$ that  $p\in P^\pm(a)$. Hence $A$ has comparability.

 Using Theorem \ref{thm:spectral_complete} once again, it is enough to show that any
 C-block $C$ of $A$ is a maximal associative subalgebra, which is precisely Corollary
 \ref{coro:cblock_jb}.

\end{proof}

\section{Concluding remarks}

We studied spectral order unit spaces in the sense of Foulis \cite{FP}. We proved that for
order unit spaces with comparability property,  a continuous functional calculus can be
introduced and there is a Borel functional calculus for spectral order unit spaces. 

We then concentrated on order unit spaces which are JB-algebras. As it turned out, the
condition of Alfsen and Shultz characterizing JB-algebras in the setting of spectral
duality \cite{AS1} can be applied to the much more general situation of order unit spaces
having the comparability property. In particular, in the case of order unit spaces derived
from Banach spaces (here we call them generalized spin factors), such an order unit space
is a JB-algebra if and only if it is a spin factor. 

Finally, we obtained a generalization of a characterization of Rickart C*-algebras due to
Sait\^o and Wright \cite{SW} to Rickart JB-algebras, using the connection  to  spectrality of
order unit spaces and recent results in \cite{Wet2}.

There is a number of problems related to spectrality of order unit spaces, which are left
to future work. For example, if $A$ is a JB-algebra, then $A\subseteq A^{**}$ and by \cite[Cor. 2.50]{AS1}, the second
dual is a JBW-algebra, hence it is spectral (in the sense of Alfsen and Shultz)
\cite[Thm. 2.20, Prop. 8.76]{AS1}, so that in particular,  any element $a\in A$ has a spectral resolution in
$A^{**}$. If $A$ is Rickart, then it follows from Theorem \ref{th:JBrickspect} that $a$ has a
spectral resolution in $A$. It is a question what is the  connection between these
spectral resolutions. Another question is the relation of the spectral resolutions of an
element $a$ and $g(a)$ for a Borel function $g$, here we may ask for conditions when the
integrals of the form \eqref{eq:gint} are defined, see the end of Section \ref{sec:fc}.

Another possible research direction is the study of convex effect algebras. 
The interval $[0,1]$ in an order unit space is an archimedean convex effect algebra
\cite{GPBB} and, conversely, any such effect algebra can be represented as a unit interval
in an order unit space. Spectrality in effect algebras was  studied in \cite{JP2} and it
was proved that an archimedean convex effect algebra is spectral (has the comparability
property) if and only if the
representing order unit space is spectral (has the comparability property). 
We may apply our results to some questions for convex effect algebras, for example,
for the study of convex sequential effect algebras, their spectrality and their relation
to JB-algebras.

\section*{Acknowledgements}

The research was supported by the grants VEGA 1/0142/20 and  the Slovak Research and
Development Agency grant APVV-20-0069.

\end{document}